\title{Dynamic Convex Hulls for Simple Paths\thanks{A preliminary version of this paper will appear in {\em Proceedings of the 40th International Symposium on Computational Geometry (SoCG 2024)}.}}
\titlerunning{Dynamic Convex Hulls for Simple Paths}
\author{Bruce Brewer}{Kahlert School of Computing,
University of Utah, Salt Lake City, UT 84112, USA}{bruce.brewer@utah.edu}{0009-0008-2995-148X}{Supported in part by NSF under Grant CCF-2300356.}
\author{Gerth Stølting Brodal}{Department of Computer Science,
Aarhus University,
Aabogade 34, 8200 Aarhus N, Denmark.}{gerth@cs.au.dk}{0000-0001-9054-915X}{Supported by Independent Research Fund Denmark, grant~9131-00113B.}
\author{Haitao Wang}{Kahlert School of Computing,
University of Utah, Salt Lake City, UT 84112, USA}{haitao.wang@utah.edu}{0000-0001-8134-7409}{Supported in part by NSF under Grant CCF-2300356.}
\authorrunning{B. Brewer, G.\,S. Brodal, and H. Wang}
\keywords{Dynamic convex hull, convex hull queries, simple paths, path updates, deque} 
\def\ch{\mathcal{H}}
\def\bbR{\mathbb{R}}
\def\IL{\textsc{Insert\-Left}}
\def\IR{\textsc{Insert\-Right}}
\def\DL{\textsc{Delete\-Left}}
\def\DR{\textsc{Delete\-Right}}
\def\IF{\textsc{Insert\-Front}}
\def\IRR{\textsc{Insert\-Rear}}
\def\DF{\textsc{Delete\-Front}}
\def\DRR{\textsc{Delete\-Rear}}
\def\TR{\textsc{Tree\-Retrieval}}
\def\HTR{\textsc{Hull\-Tree\-Retrieval}}
\def\HR{\textsc{Hull\-Report}}
\def\SQ{\textsc{Standard\-Query}}
\def\CON{\textsc{Concatenate}}
\let\epsilon\varepsilon
\begin{document}

\maketitle


\begin{abstract}
    We consider the planar dynamic convex hull problem. In the literature, solutions exist supporting the insertion and deletion of points in poly-logarithmic time and various queries on the convex hull of the current set of points in logarithmic time. If arbitrary insertion and deletion of points are allowed, constant time updates and fast queries are known to be impossible. This paper considers two restricted cases where worst-case constant time updates and logarithmic time queries are possible. We assume all updates are performed on a deque (double-ended queue) of points. The first case considers the monotonic path case, where all points are sorted in a given direction, say horizontally left-to-right, and only the leftmost and rightmost points can be inserted and deleted. The second case assumes that the points in the deque constitute a simple path. Note that the monotone case is a special case of the simple path case. For both cases, we present solutions supporting deque insertions and deletions in worst-case constant time and standard queries on the convex hull of the points in $O(\log n)$ time, where $n$ is the number of points in the current point set.  The convex hull of the current point set can be reported in $O(h+\log n)$ time, where $h$ is the number of edges of the convex hull. For the 1-sided monotone path case, where updates are only allowed on one side, the reporting time can be reduced to $O(h)$, and queries on the convex hull are supported in $O(\log h)$ time. All our time bounds are worst case. In addition, we prove lower bounds that match these time bounds, and thus our results are optimal. For a quick comparison, the previous best update bounds for the simple path problem were amortized $O(\log n)$ time by Friedman, Hershberger, and Snoeyink [SoCG 1989].
\end{abstract}

\section{Introduction}
\label{sec:intro}

Computing the convex hull of a set of $n$ points in the plane is a classic problem in computational geometry. Several algorithms can compute the convex hull in $O(n\log n)$ time in the static setting. For example, Graham's scan~\cite{Graham72} and Andrew's vertical sweep~\cite{Andrew79}. Andrew's algorithm can construct the convex hull in $O(n)$ time if the points are already sorted by either the $x$-coordinates or the $y$-coordinates. 
This has been generalized by Graham and Yao~\cite{ref:GrahamFi83} and by Melkman~\cite{Melkman} to construct the convex hull of a simple path in $O(n)$ time.
Output-sensitive algorithms of $O(n\log h)$ time have also been achieved by Kirkpatrick and Seidel~\cite{KirkpatrickSeidel86} and by Chan~\cite{Chan96}, where $h$ is the size of the convex hull.

Overmars and van Leeuwen~\cite{Overmars} studied the problem in the dynamic context where points can be inserted and deleted. Since a single point insertion and deletion can imply a linear change in the number of points on the convex hull, it is not desirable to report the entire convex hull explicitly after each update. Instead, one maintains a representation of the convex hull that can be queried. Overmars and van Leeuwen support the insertion and deletion of points in $O(\log^2 n)$ time, where $n$ is the number of points stored. They maintain the points in sorted order in one dimension as the elements of a binary tree and bottom-up maintain a hierarchical composition of the convex hull. Since the convex hull is maintained explicitly as essentially a linked list at the root, the $h$ points on the hull can be reported in $O(h)$ time, and queries on the convex hull can be supported in $O(\log n)$ time using appropriate binary searches. Some example convex hull queries are (see Figure~\ref{fig:convex-hull-queries}): Determine whether a point~$q$ is outside the convex hull, and if yes, compute the tangents (i.e., find the tangent points) of the convex hull through~$q$. Given a direction~$\rho$, compute an extreme point on the convex hull along~$\rho$. Given a line~$\ell$, determine whether $\ell$ intersects the convex hull, and if yes, find the two edges (bridges) on the convex hull intersected by~$\ell$. Tangent and extreme point queries are examples of \emph{decomposable} queries, which are queries whose answers can be obtained in constant time from the query answers for any constant number of subsets that form a partition of the point set. In contrast, bridge queries are not decomposable.

\begin{figure}[t]
    \centering
    \small
    \begin{tikzpicture}[scale=1.0]
        \newcommand{\dottedsize}{1pt}
        \draw[dashed, fill=black!10, line width=0.5pt] plot coordinates
            {(1,1) (2,3) (3,3.5) (5,4.25) (7,4.5) (10,3) (9.25,1) (8,0) (6,-1) (2,0) } -- cycle;  
        \draw[line width=0.5pt, solid, mark=*, mark size=1.75pt] plot coordinates
            {(2.5, 1.25) (2, 2) (1,1) (3, 0.5) (4, 1.5) (2,3) (3,3.5) (6, 3) (5,4.25) (8, 3) (7,4.5) (10,3) (8, 1.5) (5.5, 2.5) (4, 0) (2,0) (6,-1) (5, 0) (8,0) (9.25,1) (6.5, 0.5) (6, 1.5)
            };  
        \node[anchor=west] at (2.5,1.25) {$p_1$};
        \node[anchor=east] at (6,1.5) {$p_n$};
        \node[anchor=north east] at (2,0) {$t_1$};
        \node[anchor=south east] at (2,3) {$t_2$};
        \node[anchor=north] at (7,-0.6) {$e_1$};
        \node[anchor=south east] at (6,4.35) {$e_2$};
        \node[anchor=north west] at (9.25,1) {$p^\rho$};
        \draw[dotted, line width=\dottedsize] (8.25,-0.5) -- (10.25,2.5);  
        \draw[solid, line width=0.5pt, -latex] (9.75,1.75) -- +(1.5,-1) node[above, midway] {$\rho$};
        \draw[dotted, line width=\dottedsize] (4.5,4.75) -- (-0.5,1.25) -- (4.5,-1.25);  
        \draw[black, fill=white] (-0.5,1.25) circle (2pt) node [left] {$q$};
        \draw[dotted, line width=\dottedsize] (6.25,5) -- +(1.5,-6) node [below] {$\ell$};
    \end{tikzpicture}
    \caption{The convex hull (dashed) of a simple path $p_1,\ldots,p_n$ (solid).
        Three types of convex hull queries are shown (dotted): 
        the tangent points $t_1$ and $t_2$ with a query point $q$ outside the convex~hull;
        the extreme point $p^\rho$  in direction $\rho$;
        and the two convex hull edges $e_1$ and $e_2$ intersecting a line~$\ell$.}
    \label{fig:convex-hull-queries}
\end{figure}
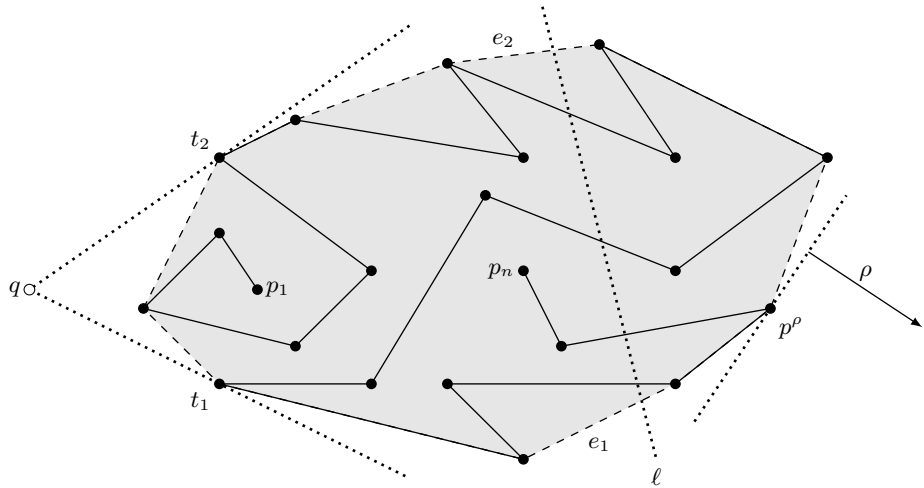

Chan~\cite{Chan01} improved the update (insertion/deletion) time to amortized $O(\log^{1+\epsilon}n)$, for any $\epsilon >0$, by not maintaining an explicit representation of the convex hull. Tangent and extreme point queries are supported in $O(\log n)$ time, allowing the $h$ points on the convex hull to be reported in $O(h\log n)$ time. The bridge query time was increased to $O(\log^{3/2} n)$. The update time was subsequently improved to amortized $O(\log n \log\log n)$ by Brodal and Jacob~\cite{BrodalJacob00} and Kaplan, Tarjan, Tsioutsiouliklis~\cite{KaplanTT01}, and to amortized $O(\log n)$ by Brodal and Jacob~\cite{BrodalJacob02}. Chan~\cite{Chan12} improved the time for bridge queries to $2^{O\big(\sqrt{\log\log n \log\log\log n}\big)}\log n$, with the same amortized update time. It is known that sub-logarithmic update time and logarithmic query time are not possible. For example, to achieve $O(\log n)$ time extreme point queries, an amortized update time $\Omega(\log n)$ is necessary~\cite{BrodalJacob00}.

In this paper, we consider the dynamic convex hull problem for restricted updates, where we can achieve worst-case constant update time and logarithmic query time. In particular, we assume that the points are inserted and deleted in a \emph{deque} (double-ended queue) and that they are geometrically restricted. We consider two restrictions: The first is the monotone path case, where all points in the deque are sorted in a given direction, say horizontally left-to-right, and only the leftmost and rightmost points can be inserted and deleted. The second case allows the points to form a simple path (i.e., a path that does not self intersections), where updates are restricted to both ends of the path. The simple path problem was previously studied by Friedman, Hershberger, and Snoeyink~\cite{FriedmanHS89}, who supported deque insertions in amortized $O(\log n)$ time, deletions in amortized $O(1)$ time, and queries in $O(\log n)$ time. Bus and Buzer~\cite{Bus15} considered a special case of the problem where insertions only happen to the ``front'' end of the path and deletions are only on points at the ``rear'' end. Based on the algorithm in~\cite{Melkman}, which can compute the convex hull of a simple path in linear time, they achieved $O(1)$ amortized update time to support $O(h)$ time hull reporting. However, hull queries were not considered in~\cite{Bus15}. Wang~\cite{Wang23} recently considered a special monotone path case where updates are restricted to queue-like updates, i.e., insert a point to the right of the point set and delete the leftmost point of the point set. Wang called it \emph{window-sliding} updates and achieved amortized constant time updates, hull queries in $O(\log h)$ time,\footnote{The runtime was $O(\log n)$ in the conference paper but was subsequently improved to $O(\log h)$ in the arXiv version \url{https://arxiv.org/abs/2305.08055}.} and hull reporting in $O(h)$ time.

\subsection{Our results}

We present dynamic convex hull data structures for both the monotone path and the simple path variants. For both problems, we support deque insertions and deletions in worst-case constant time. We can answer extreme point, tangent, and bridge queries in $O(\log n)$ time, and we can report the convex hull in $O(h+\log n)$ time. For the {\em one-sided} monotone case, where updates are only allowed on one side, the reporting time can be reduced to $O(h)$, and convex hull queries are supported in $O(\log h)$ time. That is, they are only dependent on the current hull size and independent of the number of points in the set. In addition, we show that these time bounds are the best possible by proving matching lower bounds. The previous and new bounds for the various restricted versions of the dynamic convex hull problem are summarized in Table~\ref{tab:results}.

Our results are obtained by a combination of several ideas. To support deque updates, we partition the deque into left and right parts and treat these parts as two independent stack problems. Queries then need to compose the convex hull information from both the stack problems. This strategy has previously been used by Friedman, Hershberger, and Snoeyink~\cite{FriedmanHS89} and by Wang \cite{Wang23}. To support deletions in the stack structures, we store rollback information when performing insertions. When one of the stacks becomes nearly empty, we repartition the deque into two new stacks of balanced sizes. To achieve worst-case bounds, the repartition is done with incremental global rebuilding ahead of time~\cite{Overmars83}. To achieve worst-case insertion time, we perform incremental merging of convex hull structures, where we exploit that the convex hulls of two horizontally separated sets can be combined in worst-case $O(\log n)$ time~\cite{Overmars} and that the convex hulls of a bipartition of a simple path can be combined in $O(\log^2 n)$ time~\cite{guibas1991compact}. To reduce the query bounds for the 1-sided monotone path problem to be dependent on $h$ instead of $n$, we adopt ideas from Sundar's priority queue with attrition~\cite{Sundar89}. In particular, we partition the stack of points into four lists (possibly with some interior ``redundant'' points removed), of which three lists are in convex position, and three lists have size $O(h)$. We believe this idea is interesting in its own right as, to our knowledge, this is the first time Sundar's approach has been used to solve a geometric problem.

\begin{table}[t]
    \caption{Known and new results for dynamic convex hull on paths.
        $O_A$ are amortized time bounds.
        -- denotes operation is not supported.
        For an update, $h$ denotes the maximum size of the hull before and after the update.
        DL = delete left, IR = insert right, etc.}
    \label{tab:results}
    \centering
    \tabcolsep1ex%
    {%
        \begin{tabular}{lcccccc}
            Reference                                        & DL       & IL            & IR            & DR          & Queries     & Reporting       \\
            \hline
            \multicolumn{7}{l}{\textbf{No geometric restrictions}}                                                                                    \\
            Preparata~\cite{Preparata79} + rollback          & --       & --            & $O(\log h)$   & $O(\log h)$ & $O(\log h)$ & $O(h)$          \\
            \multicolumn{7}{l}{\textbf{Monotone path}}                                                                                                \\
            Andrews' sweep~\cite{Andrew79}                   & --       & $O_A(1)$      & $O_A(1)$      & --          & $O(\log h)$ & $O(h)$          \\
            Wang~\cite{Wang23}                               & $O_A(1)$ & --            & $O_A(1)$      & --          & $O(\log h)$ & $O(h)$          \\
            \emph{New (Theorem~\ref{thm:deque-convex-hull})} & $O(1)$   & $O(1)$        & $O(1)$        & $O(1)$      & $O(\log n)$ & $O(h + \log n)$ \\
            \emph{New (Theorem~\ref{thm:stack-convex-hull})} & --       & --            & $O(1)$        & $O(1)$      & $O(\log h)$ & $O(h)$          \\
            \multicolumn{7}{l}{\textbf{Simple path}}                                                                                                  \\
            Friedman et al.~\cite{FriedmanHS89}              & $O_A(1)$ & $O_A(\log n)$ & $O_A(\log n)$ & $O_A(1)$    & $O(\log n)$ & --              \\
            Bus and Buzer~\cite{Bus15}                       & $O_A(1)$ & --            & $O_A(1)$      & --          & --          & $O(h)$          \\
            \emph{New (Theorem~\ref{thm:dpch})}              & $O(1)$   & $O(1)$        & $O(1)$        & $O(1)$      & $O(\log n)$ & $O(h + \log n)$ \\
            \hline
        \end{tabular}}
\end{table}

\subsection{Other related work}

Andrew's static algorithm~\cite{Andrew79} is an incremental algorithm that explicitly maintains the convex hull of the points considered so far. It can add the next point to the right and left of the convex hull in amortized constant time.

Preparata~\cite{Preparata79} presented an insertion-only solution maintaining the convex hull in an AVL~tree~\cite{AVL62} that supports the insertion of an arbitrary point in $O(\log h)$ time, queries on the convex hull in $O(\log h)$ time, and reporting queries in $O(h)$ time. For the \emph{stack} version of the dynamic convex hull problem, where updates form a stack, a general technique to support deletions is by having a stack of rollback information, i.e., the changes performed by the insertions. The time bound for deletions will then match the time bound for insertions, provided that insertion bounds are worst-case. Applying this idea to~\cite{Preparata79}, we have a stack dynamic convex hull solution with $O(\log h)$ time updates. Note that these time bounds hold for arbitrary new points inserted without geometric restrictions. The only limitation is that updates form a stack.

Hershberger and Suri~\cite{HershbergerSuri91} considered the offline version of the dynamic convex hull problem, assuming the sequence of insertions and deletions is known in advance, supporting updates in amortized $O(\log n)$ time. Hershberger and Suri~\cite{HershbergerSuri92} also considered the semi-dynamic deletion-only version of the problem, supporting initial construction and a sequence of $n$ deletions in $O(n\log n)$ time.

Given a simple path of $n$ vertices, Guibas, Hershberger, and Snoeyink~\cite{guibas1991compact} considered the problem of processing the path into a data structure so that the convex hull of a query subpath (specified by its two ends) can be (implicitly) constructed to support queries on the convex hull. Using a compact interval tree, they gave a data structure of $O(n\log\log n)$ space with $O(\log n)$ query time. The space was recently improved to $O(n)$ by Wang~\cite{ref:WangAl20}.
There are also other problems in the literature regarding convex hulls for simple paths. For example, Hershberger and Snoeyink~\cite{ref:HershbergerCa98} considered the problem of maintaining convex hulls for a simple path under split operations at certain extreme points, which improves the previous work in~\cite{ref:DobkinAn93}.

\subparagraph{Notation.}
We define some notation that will be used throughout the paper. For any compact subset $R$ of the plane (e.g., $R$ is a set of points or a simple path), let $\ch(R)$ denote the convex hull of $R$ and let $|\ch(R)|$ denote the number of vertices of $\ch(R)$. We also use~$\partial R$ to denote the boundary of $R$.

For a dynamic set $P$ of points, we define the following five operations: \IR: Insert a point to $P$ that is to the right of all of the points of~$P$; \DR: Delete the rightmost point of~$P$; \IL: Insert a point to $P$ that is to the left of all the points of $P$; \DL: Delete the leftmost point of~$P$; \HR: Report the convex hull~$\ch(P)$ (i.e., output the vertices of $\ch(P)$ in cyclic order around $\ch(P)$). We also use \SQ\ to refer to standard queries on~$\ch(P)$. This includes all decomposable queries like extreme point queries and tangent queries. It also includes certain non-decomposable queries like bridge queries.  Other queries, such as deciding if a query point is inside $\ch(P)$, can be reduced to bridge queries. 

We define the operations for the dynamic simple path $\pi$ similarly.
For convenience, we call the two ends of $\pi$ the {\em rear end} and the {\em front end}, respectively. As such, instead of ``left'' and ``right'', we use ``rear'' and ``front'' in the names of the update operations. Therefore, we have the following four updates: \IF, \DF, \IRR, and \DRR, in addition to \HR\ and \SQ\ as above.

\subparagraph{Outline.} We present our algorithms for the monotone path problem in Section~\ref{sec:monotone} and for the simple path problem in Section~\ref{sec:path}. In Section~\ref{sec:lower-bounds}, we prove lower bounds that match our results in the previous two sections.

\section{The monotone path problem}
\label{sec:monotone}

In this section, we study the monotone path problem where updates occur only at the extremes in a given direction, say, the horizontal direction. That is, given a set of points~$P \subset \mathbb{R}^2$, we maintain the convex hull of $P$, denoted by $\ch(P)$, while points to the left and right of $P$ may be inserted to $P$ and the rightmost and leftmost points of $P$ may be deleted from~$P$. Throughout this section, we let $n$ denote the size of the current set $P$ and $h=|\ch(P)|$. For ease of exposition, we assume that no three points of $P$ are collinear.

If updates are allowed at both sides (resp., at one side), we denote it the \emph{two-sided} (resp. {\em one-sided}) problem. We call the structure for the two-sided problem the ``deque convex hull,'' where we use the standard abbreviation deque to denote a double-ended queue (according to Knuth~\cite[Section~2.2.1]{Knuth73}, E.\ J.\ Schweppe introduced the term deque). The one-sided problem's structure is called the ``stack convex hull''.

In what follows, we start with describing a ``stack tree'' in Section~\ref{sec:stacktree}, which will be used to develop a ``deque tree'' in Section~\ref{sec:dequetree}. We will utilize the deque tree to implement the deque convex hull in Section~\ref{sec:2sidemontone} for the two-sided problem. The deque tree, along with ideas from Sundar's priority queues with attrition~\cite{Sundar89}, will also be used for constructing the stack convex hull in Section~\ref{sec:1sidemonotone} for the one-sided problem.

\subsection{Stack tree}
\label{sec:stacktree}

Suppose $P$ is a set of $n$ points in $\bbR^2$ sorted from left to right. Consider the following operations on $P$ (assuming $P=\emptyset$ initially). (1) \IR; (2) \DR; (3) \TR: Return the root of a balanced binary search tree (BST) that stores all points of the current $P$ in the left-to-right order. We have the following lemma.

\begin{lemma} \label{lem:stack-tree}
    Let $P$ be an initially empty set of points in $\bbR^2$ sorted from left to right.
    There exists a ``Stack Tree'' data structure $ST(P)$ for $P$ supporting the following operations:
    \begin{enumerate}
        \item \IR: $O(1)$ time.
        \item \DR: $O(1)$ time.
        \item \TR: $O(\log n)$ time, where $n$ is the current size of $P$.
    \end{enumerate}
\end{lemma}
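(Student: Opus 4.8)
The plan is to build a data structure that supports $O(1)$-time \IR{} and \DR{} while allowing the current point set to be assembled into a balanced BST on demand in $O(\log n)$ time. The core idea is to maintain the points in a sequence of ``completed'' balanced BSTs of geometrically increasing size, together with a small buffer of recently inserted points that have not yet been absorbed into a tree. More concretely, I would keep a stack of balanced BSTs $T_1, T_2, \ldots, T_k$ (ordered so that $T_k$ holds the most recently inserted points, i.e., the rightmost ones), where the sizes roughly double as the index decreases, plus an $O(1)$-size buffer at the right end. On \IR{} we push the new point into the buffer; when the buffer overflows we need to merge it with the topmost small trees, but doing a full merge would cost more than $O(1)$, so instead I would use the standard technique of \emph{incremental} (de)amortization: spread the work of building the next larger tree over the subsequent $O(1)$-cost insertions, keeping a ``tree under construction'' whose build steps are advanced a constant number at a time. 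Symmetrically, \DR{} pops from the buffer, or, when the buffer is empty, reverses the most recent merge using stored rollback information, again advanced incrementally.

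The key steps, in order, would be: (i) fix the structural invariant — the list of trees has $O(\log n)$ pieces with sizes obeying a weight-balance/geometric-growth condition, each piece is itself a balanced BST over a contiguous left-to-right block of $P$, and there is a bounded-size buffer and at most one partially-built ``pending'' tree; (ii) implement \IR{} by inserting into the buffer and performing $c$ steps of the pending build, and show the invariant is restored before the buffer can overflow again; (iii) implement \DR{} symmetrically, maintaining a rollback log so each deletion is the exact inverse of the corresponding insertion's incremental work, which is what guarantees worst-case (not amortized) $O(1)$; (iv) implement \TR{} by combining the $O(\log n)$ pieces (plus buffer) into one balanced BST: since the pieces are already sorted left-to-right and each is balanced, repeated balanced concatenation/joining of two BSTs of heights $h_1, h_2$ in $O(|h_1 - h_2| + 1)$ time, summed over the geometric sequence of sizes, telescopes to $O(\log n)$. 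Note \TR{} only needs to \emph{return the root}, so it may be acceptable to build this combined tree non-destructively, or to treat the pieces as children of an auxiliary balanced structure; I would pick whichever keeps \TR{} at $O(\log n)$ without charging anything back to updates.

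The main obstacle I anticipate is making the time bounds genuinely \emph{worst-case} $O(1)$ rather than amortized. The naive ``doubling stack of trees'' gives amortized $O(1)$ immediately, but a single \IR{} that triggers a cascade of merges (carry propagation, as in a binary counter) costs $\Theta(\log n)$. De-amortizing this requires carefully scheduling the incremental build of each new tree so that it finishes strictly before the buffer that feeds it can overflow again, and simultaneously ensuring that an interleaved sequence of \IR{} and \DR{} operations — which can repeatedly create and destroy a merge at the ``boundary'' — does not force unbounded work; this is exactly the situation rollback logs are designed to handle, since undoing is as cheap as the partial doing. A secondary subtlety is that \TR{} may be called at an arbitrary moment when a tree is only partially built; I would handle this by keeping the partially-built tree's already-processed portion as a valid balanced subtree and the remainder still sitting in the piece(s) it is being merged from, so that at every instant the current $P$ is the disjoint union of a bounded number of balanced BSTs over contiguous blocks, and \TR{} just joins those. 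Finally, I would double-check that the number of pieces stays $O(\log n)$ under all update sequences, since that count directly bounds the \TR{} cost.
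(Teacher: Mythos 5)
Your overall plan mirrors the paper's construction: partition the current set into contiguous left-to-right blocks each stored in a balanced BST, deamortize the merging of blocks by doing a constant number of steps of a pending join per insertion, support \DR\ by rolling back exactly the changes recorded for the matching insertion, and answer \TR\ by joining the $O(\log n)$-cost worth of pieces. The main structural difference is your size progression: you use geometrically growing blocks ($O(\log n)$ pieces, joins of cost $O(|h_1-h_2|+1)$), whereas the paper uses doubly-exponential block sizes $2^{2^i}$ ($O(\log\log n)$ pieces, each join costing $O(2^i)$), a choice made so the same framework later survives replacing the BST join by an $O(\log)$-time convex-hull merge. For the lemma as stated, your parameterization is in principle fine, and your telescoping bound for \TR\ is correct.

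However, there is a genuine gap exactly where you flag ``the main obstacle'': the worst-case $O(1)$ bound for \IR\ is never actually established. The technical core of the paper's proof is a concrete schedule --- a binary counter $N$ (maintained as ranges of $1$-bits so the flipped digit is found in $O(1)$ time without bit tricks) dictating which level receives $c$ steps of incremental joining on each insertion --- together with an invariant (Observation~1 and Lemma~3 in the paper) proving that every pending join at level $i$ completes before that level can fill up again, so no cascade of carries is ever forced. Your proposal names the right tool (recursive slowdown / incremental building) but gives neither the schedule nor the timing argument, so the claim that the pending build ``finishes strictly before the buffer that feeds it can overflow again'' is asserted rather than proved. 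Two further points need repair: (i) your invariant allowing ``at most one partially-built pending tree'' globally is likely too restrictive --- with only one pending merge, a small level can fill while a large merge is in progress, and you have no argument that the buffer cannot overflow; the paper instead allows each level to be independently ``blocked'' by its own pending join; (ii) your handling of \TR\ mid-join assumes that a partially executed join always decomposes into valid balanced BSTs over contiguous blocks, which a generic incremental join algorithm does not guarantee; the paper avoids this by having \TR\ first finish all pending joins (total $O(\log n)$ work), record the changes on a stack, and roll them back afterward, and you would need either that device or a proof that your partial states are always queryable.
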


In what follows, we describe the tree's structure and then discuss the operations. Unless otherwise stated, $P$ refers to the current point set, and $n=|P|$.

\subparagraph{Remark.}

It should be noted that the statement of Lemma~\ref{lem:stack-tree} is not new. Indeed, one can simply use a finger search tree~\cite{Brodal04,GuibasMPR77,ref:Tsakalidis85} to store $P$ to achieve the lemma (in fact, \TR\ can even be done in $O(1)$ time). We propose a stack tree as a new implementation for the lemma because it can be applied to our dynamic convex hull problem. When we use the stack tree, \TR\ will be used to return the root of a tree representing the convex hull of $P$; in contrast, simply using a finger search tree cannot achieve the goal  (the difficulty is how to efficiently maintain the convex hull to achieve constant time update). Our stack tree may be considered a framework for Lemma~\ref{lem:stack-tree} that potentially finds other applications as well.

\subparagraph{Structure of the stack tree.}
The structure of the stack tree is illustrated in Figure~\ref{fig:recursive-slowdown} to the right of the line $\ell$.
The stack tree $ST(P)$ consists of a sequence of trees $T_i$ for $i = 0, 1, \ldots, \lceil \log \log n \rceil$. Each $T_i$ is a balanced BST storing a contiguous subsequence of $P$ such that for any $j < i$, all points of $T_i$ are to the left of each point of $T_j$. The points of all $T_i$'s form a partition of $P$. We maintain the invariant that $|T_i|$ is a multiple of $2^{2^i}$ and $0 \leq |T_i| \leq 2^{2^{i + 1}}$, where $|T_i|$ represents the number of points stored in $T_i$. 
(The right side of $\ell$ in Figure~\ref{fig:recursive-slowdown} is a stack tree).

In order for our construction to achieve worst-case constant time insertions, the joining of two trees is performed incrementally over subsequent insertions. Specifically, we apply the {\em recursive slowdown} technique of Kaplan and Tarjan~\cite{KaplanTarjan95}, where every $2^{i+1}$-th insertion, $i \geq 1$, performs delayed incremental work toward joining $T_{i - 1}$ with $T_i$, if such a join is deemed necessary.

\begin{figure}[t]
    \centering
    \includegraphics[height=1.8in]{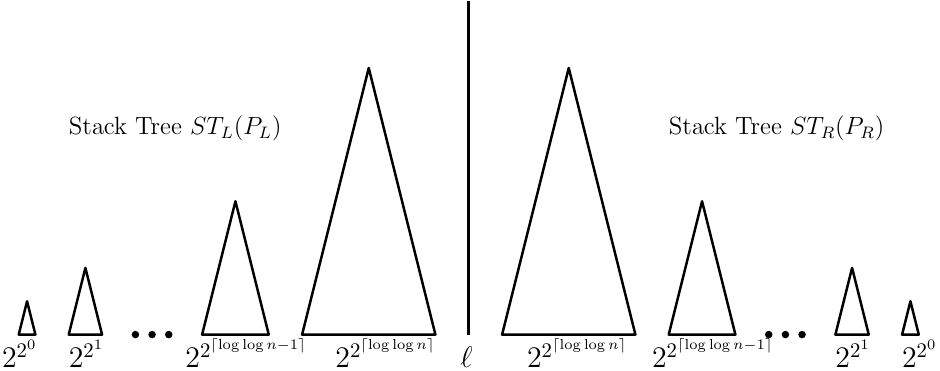}
    \caption{Illustrating a deque tree, comprising two stack trees separated by the vertical line $\ell$.}
    \label{fig:recursive-slowdown}
\end{figure}

\subparagraph{Remark.}
The critical observation of our algorithm is that because the ranges of the trees do not overlap, we can join adjacent trees $T_i$ and $T_{i + 1}$ to obtain (the root) of a new balanced~BST that stores all points in $T_i\cup T_{i + 1}$ in $O(\log (|T_i| + |T_{i + 1}|))$ time. Later in the paper we generalize this idea to horizontally neighboring convex hulls which can be merged in $O(\log(|\ch(T_i)| + |\ch(T_{i + 1})|))$ time \cite{Overmars} and to convex hulls over consecutive subpaths of a simple path which can be merged in $O(\log|\ch(T_i)| \cdot \log |\ch(T_{i + 1})|)$  time \cite{guibas1991compact}. In fact, the stack tree may be applied to solve other problems with polylogarithmic joining time so that that \TR\ (or other operations) can be performed efficiently. 

\subparagraph{\IR.}

Suppose we wish to insert into $P$ a point $p$ that is to the right of all points of~$P$.

We start with inserting $p$ into the tree $T_0$, which takes $O(1)$ time as $|T_0|=O(1)$. Next, we perform $O(1)$ delayed incremental work on a tree $T_i$ for a particular index $i$. To determine~$i$, we maintain a counter $N$ that is a binary number. Initially, $N=1$, and it is an invariant that $N=1+n$ (recall that $n$ is the size of the current point set $P$). For each insertion, we increment $N$ by one and determine the index $i$ of the digit which flips from $0$ to $1$, indexed from the right where the rightmost digit has index $0$. Note that there is exactly one such digit. Then, if $i\geq 1$, we perform incremental work on $T_i$ (i.e., joining $T_{i-1}$ with $T_i$). To find the digit $i$ in $O(1)$ time, we represent $N$ by a sequence of ranges, where each range represents a contiguous subsequence of digits of 1's in $N$. For example, if $N$ is 101100111, then the ranges are $[0, 2], [5, 6], [8, 8]$. After $N$ is incremented by one, $N$ becomes 101101000, and the ranges become $[3, 3], [5, 6], [8, 8]$. Therefore, based on the first two ranges in the range sequence, one can determine the digit that flips from $0$ to $1$ and update the range sequence in $O(1)$ time (note that this can be easily implemented using a linked list to store all ranges, without resorting to any bit tricks). 

After $i$ is determined, we perform incremental work on $T_i$ as follows. We use a variable~$n_j$ to maintain the size of each tree $T_j$, i.e., $n_j=|T_j|$. For each tree $T_j$, with $j\geq 1$, we say that $T_j$ is ``blocked'' if there is an incremental process for joining a previous $T_{j-1}$ with $T_j$ (more details to be given later) and ``unblocked'' otherwise ($T_0$ is always unblocked). If $T_i$ is blocked, then there is an incremental process for joining a previous $T_{i-1}$ with~$T_{i}$. This process will complete within time linear in the height of $T_i$, which is $O(2^i)$, since~$|T_i|\leq 2^{2^{i+1}}$. We perform the next $c$ steps for the process for a sufficiently large constant $c$. If the joining process is completed within the $c$ steps, we set $T_i$ to be unblocked.

Next, if $T_i$ is unblocked and $n_{i} \geq 2^{2^{i+1}}$ (in this case we prove in Observation~\ref{obser:treesize} that~$n_i$ must be exactly equal to $2^{2^{i+1}}$), then our algorithm maintains the invariant that $T_{i+1}$ must be unblocked, which will be proved in Lemma~\ref{lem:invariantstacktree}. In this case, we first set $T_{i+1}$ to be blocked, and then we start an incremental process to join $T_{i}$ with~$T_{i+1}$ without performing any actual steps. For reference purpose, let $T'_{i}$ refer to the current $T_{i}$ and let $T_{i}$ start over from $\emptyset$. Using this notation, we are actually joining~$T_{i}'$ with $T_{i+1}$. Although the joining process has not been completed, we follow the convention that $T'_{i}$ is now part of $T_{i+1}$; hence, we update~$n_{i+1}=n_{i+1}+n_{i}$. Also, since $T_{i}$ is now empty, we reset $n_{i}=0$. This finishes the work due to the insertion of $p$.

\begin{observation}\label{obser:treesize}
    \begin{enumerate}
        \item If $n_{i} \geq 2^{2^{i+1}}$, then $n_i=2^{2^{i+1}}$.
        \item It holds that $n_i=0$ or $2^{2^i}\leq n_i \leq 2^{2^{i+1}}$ for
              $i\geq 1$, and $n_0 \leq 4$.
    \end{enumerate}
\end{observation}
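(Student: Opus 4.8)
The plan is to prove both parts by a single induction on the sequence of \IR\ operations, carried out simultaneously with the blocked/unblocked invariant later formalised in Lemma~\ref{lem:invariantstacktree}; the base case $P=\emptyset$ is trivial. The first step is to pin down the only ways a counter $n_j$ can change during an insertion. The counter $n_0$ increases by exactly one (the new point enters $T_0$) and is reset to $0$ as soon as $T_0$ reaches size $4$ and is flushed into $T_1$, which already gives $n_0\le 4$. For $i\ge 1$, $n_i$ changes only when (a) a completed join moves an entire previous copy of $T_{i-1}$ into $T_i$, so that $n_i$ increases by $n_{i-1}$, or (b) $T_i$ is itself flushed into $T_{i+1}$, so that $n_i$ resets to $0$. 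In case~(a) the transfer occurs precisely when $T_{i-1}$ is full, i.e.\ $n_{i-1}\ge 2^{2^{i}}$, and the inductive hypothesis (part~1 at index $i-1$, and for $i=1$ the fact that $n_0$ hits $4$ exactly) forces $n_{i-1}=2^{2^{i}}$; hence every transfer into $T_i$ has size exactly $2^{2^{i}}$, so by induction $n_i$ is always a nonnegative multiple of $2^{2^{i}}$. In particular $n_i\ne 0$ implies $n_i\ge 2^{2^{i}}$, which is the lower-bound half of part~2.

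It remains to prove $n_i\le 2^{2^{i+1}}$ together with part~1. Since $n_i$ moves in steps of $2^{2^{i}}$ and $2^{2^{i+1}}=(2^{2^{i}})^2$ is a multiple of $2^{2^{i}}$, it suffices to show that once $n_i$ reaches $2^{2^{i+1}}$, the tree $T_i$ is flushed into $T_{i+1}$ (which resets $n_i$ to $0$) before any further chunk can arrive at $T_i$. I would argue this by two quantitative claims. First, the flush of $T_i$ is \emph{initiated} within $2^{i+1}$ insertions of $n_i$ reaching $2^{2^{i+1}}$: the next ``work on $T_i$'' event occurs within that many insertions, and at it we have $n_i\ge 2^{2^{i+1}}$ with $T_i$ unblocked and $T_{i+1}$ free to be blocked --- the last two facts being exactly the invariant imported from Lemma~\ref{lem:invariantstacktree}. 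Second, the flush then finishes in time: a single join into $T_{i+1}$ entails $O(\log|T_{i+1}|)=O(2^{i})$ elementary steps, which are executed a constant amount at a time and hence take $2^{O(i)}$ insertions to complete, whereas no new chunk can reach $T_i$ until $T_{i-1}$ has been refilled from empty to its full size $2^{2^{i}}$, which forces $\Omega(2^{2^{i}})$ fresh insertions (each point entering $T_{i-1}$ is charged to a distinct \IR). Since $2^{O(i)}$ is dominated by $2^{2^{i}}$ for all $i\ge 1$ once the constant in the recursive slowdown is chosen large enough (the finitely many small $i$ being checked by hand), the flush completes before the next chunk; this proves $n_i\le 2^{2^{i+1}}$ and, since $n_i$ is a multiple of $2^{2^{i}}$, part~1, closing the induction.

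I expect essentially all the work to be in this second claim. Two things need care: (i) the blocked/unblocked invariant of Lemma~\ref{lem:invariantstacktree} must be threaded through the same induction, since the flush of $T_i$ can only fire, and $T_{i+1}$ can only be blocked, if that invariant holds at the relevant moment; and (ii) one must check that the points already sitting in the lower trees $T_0,\dots,T_{i-2}$ at the instant $n_i$ hits its threshold do not shorten the refill time of $T_{i-1}$ below $\Omega(2^{2^{i}})$. For (ii), those trees jointly hold only $O(2^{2^{i-1}})=O(\sqrt{2^{2^{i}}})$ points, a lower-order ``head start'', so the $\Omega(2^{2^{i}})$ bound survives with a slightly worse constant and the scheduling argument goes through.
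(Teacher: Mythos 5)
Your overall plan is the same as the paper's: the paper also proves the observation by noting that $n_i$ changes only in chunks of exactly $2^{2^i}$ (which implicitly uses part~1 at index $i-1$, i.e., a joint induction) and that $n_i$ is reset to $0$ before it can exceed $2^{2^{i+1}}$, with the scheduling content delegated to Lemma~\ref{lem:invariantstacktree}, whose proof is precisely your comparison ``flush cycle takes $2^{O(i)}$ insertions versus $\Omega(2^{2^i})$ fresh insertions (minus an $O(2^{2^{i-1}})$ head start) to refill $T_{i-1}$.'' So the route is correct in outline, and your explicit treatment of the mutual dependence with the blocking invariant and of the head start is if anything more careful than the paper's.

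One step, however, is misstated. You claim the flush of $T_i$ is \emph{initiated} within $2^{i+1}$ insertions of $n_i$ reaching $2^{2^{i+1}}$, because at the next work event on $T_i$ ``we have $T_i$ unblocked.'' That is false in general: the very event at which $n_i$ reaches $2^{2^{i+1}}$ is the initiation of a join of the arriving chunk $T'_{i-1}$ into $T_i$, which blocks $T_i$; that join needs $\Theta(2^i)$ steps executed only $c$ at a time at work events spaced $2^{i+1}$ insertions apart, so $T_i$ can remain blocked for $\Theta(2^{2i})$ insertions, and Lemma~\ref{lem:invariantstacktree} does not assert that $T_i$ is unblocked at the next work event --- it only guarantees that $T_{i+1}$ is unblocked at the moment the join into $T_i$ completes with $n_i\ge 2^{2^{i+1}}$, which is when the flush (and the reset of $n_i$) actually fires. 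Fortunately the error is local: replacing ``within $2^{i+1}$ insertions'' by ``within $O(2^{2i})$ insertions'' keeps the total delay at $2^{O(i)}$, so your claim-(b) comparison against the $\Omega(2^{2^i})$ refill time still excludes a second chunk arriving before the reset (with small $i$ checked by hand, as you say), and with this repair your argument coincides with the counting in the paper's proof of Lemma~\ref{lem:invariantstacktree}(2).
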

\begin{proof}
According to our algorithm, $T_0$ is never blocked and $n_0=4$ whenever $N=1+4k$ for $k=1,2,3,\ldots$, where $n_0$ is reset to $0$ (this is because whenever $n_0\geq 4$, $T_1$ must be unblocked, which is proved in Lemma~\ref{lem:invariantstacktree}). Hence, $n_0\leq 4$ always holds. In the following, we prove the case for $i\geq 1$.

Observe that starting from $n_i=0$, whenever $n_i$ increases, it increases by exactly the amount $2^{2^i}$. Hence, whenever $n_i\geq 2^{2^{i+1}}$ for the first time, it always holds that $n_i=2^{2^{i+1}}$. According to our algorithm, whenever $n_i\geq 2^{2^{i+1}}$, $n_i$ is reset to $0$ (this is because whenever $n_i\geq 2^{2^{i+1}}$, $T_{i+1}$ must be unblocked, which is proved in Lemma~\ref{lem:invariantstacktree}). The observation thus follows.
\end{proof}

The following lemma proves the algorithm invariant mentioned above.

\begin{lemma}\label{lem:invariantstacktree} \
    \begin{enumerate}
        \item
              If $n_0 \geq 4$, then $T_1$ must be unblocked.
        \item
              If $i\geq 1$ and $n_{i}\geq 2^{2^{i+1}}$ right after the process of joining $T_{i-1}$ with $T_{i}$ is completed, then $T_{i+1}$ must be unblocked.
    \end{enumerate}
\end{lemma}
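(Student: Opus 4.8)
The plan is to analyze the timing of the incremental joining processes by counting insertions, exploiting the fact that insertions that touch $T_i$ occur exactly every $2^{i+1}$-th insertion (the digit-flip mechanism in the counter $N$), while a join involving $T_i$ needs only $O(2^i)$ incremental steps to complete, i.e., it finishes well before the next insertion touching $T_{i+1}$ could possibly require $T_{i+1}$ to be blocked again. I would first establish a clean bookkeeping statement: between two consecutive insertions that perform incremental work on index $i+1$ (equivalently, between two consecutive flips of digit $i+1$ in $N$), there are exactly $2^{i+1}$ insertions performing work on index $i$, and more generally $2^{i+1-j}$ insertions performing work on index $j$ for $j \le i+1$. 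Since performing $c$ steps on a blocked $T_{i+1}$ at each of its $2^{i+1}$ ``visits'' between consecutive flips of digit $i+2$ completes a join of height $O(2^{i+1})$ (as $|T_{i+1}| \le 2^{2^{i+2}}$ gives height $O(2^{i+1})$), choosing $c$ large enough guarantees any join into $T_{i+1}$ finishes within far fewer than $2^{i+1}$ visits — in fact within $O(1)$ visits.

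For part 1, I would argue directly: $n_0$ reaches $4$ only at insertions where $N = 1 + 4k$, i.e. every $4$th insertion, which is exactly when digit $1$ (or higher) flips. A join into $T_1$ (making $T_1$ blocked) is started only when $n_0$ hits $4$ and is handed off; such a join has $|T_1| \le 2^{2^2} = 16$, so height $O(1)$, and it gets $c$ steps at the next visit to index $1$, which by the counting above is the very next insertion that flips digit $1$ — but between two such flips there are $4$ insertions, and only every second flip of digit $1$ corresponds to... here I need to be careful and instead phrase it as: a join into $T_1$ started at time $t$ is completed by time $t + O(1)$ visits to index $1$, and the next time we would need $T_1$ unblocked to start a new join into it is at least $2^{2^1} / 4 = 1$ full ``$n_0$ cycle'' later, i.e. strictly more visits away than the constant number needed; picking $c$ accordingly closes the case. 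I would make the constant explicit by noting the join height is at most some absolute constant (since $|T_1| \le 16$), so one visit with $c \ge$ that constant already finishes it.

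For part 2, the structure is the same but recursive/inductive in flavor. Assuming the join of $T_{i-1}$ with $T_i$ has just completed and $n_i \ge 2^{2^{i+1}}$ (so by Observation~\ref{obser:treesize} exactly $2^{2^{i+1}}$), I want to show $T_{i+1}$ is unblocked, i.e. any earlier join into $T_{i+1}$ has already finished. The earlier join into $T_{i+1}$, if it exists, was started the previous time $n_i$ reached $2^{2^{i+1}}$; between that moment and now, the number of insertions is exactly $2^{2^{i+1}} \cdot 2^i$ (that many insertions are needed to rebuild $n_i$ from $0$ to $2^{2^{i+1}}$, each contributing $2^{2^i}$... wait, $2^{2^{i+1}}/2^{2^i} = 2^{2^i}$ increments, each increment of $n_i$ happening once per $2^i$ visits to index $i$, but visits to index $i$ happen every $2^{i+1}$-th insertion), so roughly $2^{2^i} \cdot 2^i \cdot 2^{i+1}$ insertions elapse, hence that many over $2^{i+2}$, i.e. about $2^{2^i} \cdot 2^{i-1}$ visits to index $i+1$ elapse. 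During each such visit $T_{i+1}$ (if blocked) receives $c$ steps, and the total work needed is $O(2^{i+1})$, so completion happens after $O(1)$ visits — vastly fewer than the $2^{2^i}\cdot 2^{i-1}$ available. I would also need to invoke Observation~\ref{obser:treesize}(1) to know $n_i = 2^{2^{i+1}}$ exactly (not larger) so that the handoff to $T_{i+1}$ happens at a well-defined moment, and possibly a simple induction on $i$ (using part~1 as the base) so that the ``earlier join into $T_{i+1}$'' is itself guaranteed to have been legitimately started on an unblocked $T_{i+1}$.

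The main obstacle I anticipate is making the insertion-counting precise enough to pin down the exact number of incremental steps delivered to a blocked tree $T_{i+1}$ between the moment a join is handed off to it and the next moment the algorithm needs it unblocked — in particular handling the interplay between ``visits to index $i$'' (digit-$i$ flips) versus ``increments of $n_i$'' versus raw insertion counts, and confirming that the constant $c$ can be chosen \emph{once}, uniformly in $i$, because the per-level work $O(2^i)$ and the per-level budget (number of visits times $c$) both scale the same way while the available budget before the deadline is super-exponentially larger ($2^{2^i}$-type) than the required work ($2^i$-type). Once that asymptotic separation is stated cleanly, both parts follow by choosing $c$ larger than the absolute constant hidden in the $O(2^i)$ bound on join time per unit of height.
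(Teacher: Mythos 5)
Your overall plan is the same as the paper's: identify the moment the pending join into $T_{i+1}$ was started (the previous handoff, at which $T_i$ is reset to empty), bound the join's total work by the height $O(2^{i+1})$ of $T_{i+1}$, observe that $c$ steps are delivered at every flip of digit $i+1$ (every $2^{i+2}$ insertions), and show the join finishes before $n_i$ can again reach $2^{2^{i+1}}$; your part~1 is essentially the paper's argument (one visit to index $1$ with $c$ exceeding the constant-size join suffices before $n_0$ can hit $4$ again).

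For part~2, however, the quantitative comparison that is the heart of the proof has a genuine gap. First, the repeated claim that a join into $T_{i+1}$ completes ``within $O(1)$ visits'' is false: the work is $\Theta(2^{i+1})$ while each visit delivers only the constant $c$ steps, so $\Theta(2^{i+1}/c)$ visits, i.e.\ about $2^{2i+3}/c$ insertions, are required (the paper states exactly this and then enlarges $c$ to get $2^{2i}$ insertions). This is fixable, but it must be fixed because the whole point is to compare this quantity with the deadline. Second, and more seriously, your lower bound on the time until the deadline rests on an incorrect mechanism: increments of $n_i$ are triggered by $T_{i-1}$ reaching size $2^{2^i}$, whose timing is governed by points flowing up from the lower levels, not by a fixed number of visits to index $i$, so ``each increment of $n_i$ happening once per $2^i$ visits to index $i$'' is unjustified and the resulting count $2^{2^i}\cdot 2^i\cdot 2^{i+1}$ is unsupported. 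The resolving observation --- which you explicitly flag as your anticipated obstacle but do not supply --- is the simple one the paper uses: at the handoff moment $T_i$ is empty, the trees $T_j$ with $j<i$ together hold at most $2\cdot 2^{2^i}$ points by Observation~\ref{obser:treesize}(2), and each subsequent insertion contributes at most one new point that can ever reach $T_i$; hence after $2^{2i}$ insertions $n_i\leq 2\cdot 2^{2^i}+2^{2i}<2^{2^{i+1}}$ for $i\geq 1$, so the deadline cannot arrive before the join completes. With that substitution (and the corrected visit count) your argument becomes the paper's; without it, part~2 is not established.
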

\begin{proof}
    Recall that $N$ equals one plus the number of insertions performed, i.e., $N$ can be considered a timer.

    \subparagraph{Proof of the first lemma statement.}

    We have $n_0=4$ exactly when $N=1+4k$ for $k=1,2,3,\dots$, where $n_0$ is reset to 0, an incremental join at $T_1$ is initiated, and $T_1$ becomes blocked. But for the next insertion, i.e., when $N=2+4k$ for $k=1,2,3,\ldots$, we have always~$n_0=1$ and all the incremental $O(1)$ work at $T_1$ is processed if we make $c$ large enough, and $T_1$ becomes unblocked and remains unblocked until $n_0=4$ again. This proves the first lemma statement.

    \subparagraph{Proof of the second lemma statement.}

    We assume that the process of joining $T_{i-1}$ with~$T_{i}$ is completed due to the insertion of $p$ and $n_{i}\geq 2^{2^{i+1}}$.

    Let $N'$ be the most recent timer where $T_{i+1}$ changed status from unblocked to blocked. Hence, at $N'$, $T_{i}=\emptyset$, and the process of joining $T'_{i}$ with $T_{i+1}$ started at $N'$ and $T_{i+1}$ has been blocked due to that process. We argue below that by the time $N$, the incremental process for joining $T'_i$ with $T_{i+1}$ has already been completed.

    Indeed, the joining process takes $O(2^{i+1})$ time as the height of $T_{i+1}$ is $O(2^{i+1})$. Whenever the $(i+1)$-th bit of the counter $N$ flipped from $0$ to $1$, $c$ steps of the joining process will be performed. Note that the $(i+1)$-th bit of the counter $N$ flipped from $0$ to $1$ every $2^{i+2}$ insertions. Hence, for a sufficiently large constant $c$, the joining process will be completed within $2^{i+1}\cdot 2^{i+2}=2^{2i+3}$ insertions. If we make $c$ larger, then we can say that the joining process will be completed within $2^{2i}$ insertions.

    On the other hand, at $N'$, we have $T_{i}=\emptyset$. We argue that after $2^{2i}$ insertions, the size of~$T_{i}$ cannot be larger than or equal to $2^{2^{i+1}}$. Indeed, all points of $T_{i}$ are originally from $T_j$ for all $0\leq j < i$. At $N'$, the total number of points in all trees $T_j$ for $0\leq j < i$ is no larger than $2\cdot 2^{2^{i}}$ by Observation~\ref{obser:treesize}(2). Hence, after $2^{2i}$ insertions, the total number of points that can be inserted to $T_{i}$ is at most $2\cdot 2^{2^{i}}+2^{2i}$, which is smaller than $2^{2^{i+1}}$ for $i\geq 1$.

    The above implies that by the time $N$ when $n_i\geq 2^{2^{i+1}}$,  the incremental process for joining $T'_i$ with $T_{i+1}$ has already been completed, and therefore, $T_{i+1}$ cannot be blocked. This proves the second lemma statement. 
\end{proof}

As we only perform $O(1)$ incremental work, the total time for inserting $p$ is $O(1)$.

\subparagraph{\DR.}
To perform \DR, we maintain a stack that records the changes made on each insertion. To delete a point $p$, $p$ must be the most recently inserted point, and thus all changes made due to the insertion of $p$ are at the top of the stack. To perform the deletion, we simply pop the stack and roll back all the changes during the insertion of $p$.


\subparagraph{\TR.}
To perform \TR, we start by completing all incremental joining processes. Then, we join all trees $T_i$'s in their index order. This results in a single BST $T$ storing all points of $P$. In applications, we usually need to perform binary searches on $T$, after which we need to continue processing insertions and deletions on $P$. To this end, when constructing $T$ as above, we maintain a stack that records the changes we have made. Once we are done with queries on $T$, we use the stack to roll back the changes and return the stack tree to its original form right before the \TR\ operation.

For the time analysis, it takes $O(2^i)$ steps to finish an incremental joining process for each~$T_i$ (i.e., merging $T_{i-1}$ with $T_{i}$). Hence, it takes $O\left(\sum_{i = 1}^{\lceil \log \log n \rceil} 2^i \right) = O(\log n)$ time to finish all such processes. Next, joining all trees $T_i$'s in their index order takes $O\left(\sum_{i = 0}^{\lceil\log \log n\rceil} 2^i\right) = O(\log n)$ time in total. As such, the stack used to record changes made during the operation has size~$O(\log n)$ because it stores $O(\log n)$ changes. Rolling back all changes in the stack thus takes $O(\log n)$ time as well. Therefore, \TR\  can be performed in $O(\log n)$ time.

\subsection{Deque tree}
\label{sec:dequetree}

We now introduce the deque tree, which is built upon stack trees. 
We have the following lemma, where \TR\ is defined in the same way as in Section~\ref{sec:stacktree}.

\begin{lemma}
    \label{lem:deque-tree}
    Let $P$ be an initially empty set of points in $\bbR^2$ sorted from left to right.
    There exists a ``Deque Tree'' data structure $DT(P)$ for $P$ supporting the following operations:
    \begin{enumerate}
        \item \IR: $O(1)$ time.
        \item \DR: $O(1)$ time.
        \item \IL: $O(1)$ time.
        \item \DL: $O(1)$ time.
        \item \TR: $O(\log n)$ time, where $n$ is the size of the current set $P$.
    \end{enumerate}
\end{lemma}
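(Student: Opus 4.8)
The plan is to build the deque tree by gluing two stack trees back-to-back, exactly as suggested by Figure~\ref{fig:recursive-slowdown}. Concretely, I would maintain a vertical split value $\ell$ and keep the points of $P$ to the left of $\ell$ in a left stack tree $ST_L$ and the points to the right of $\ell$ in a right stack tree $ST_R$, where the ``stack direction'' of $ST_L$ is mirrored (its cheap updates happen at the \emph{leftmost} end) and $ST_R$ is a verbatim copy of Lemma~\ref{lem:stack-tree} (cheap updates at the rightmost end). Then \IR\ and \DR\ are handled by $ST_R$ in $O(1)$ time, and \IL\ and \DL\ by $ST_L$ in $O(1)$ time, directly by Lemma~\ref{lem:stack-tree}. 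For \TR, I would call \TR\ on each of the two stack trees to obtain two balanced BSTs $T_L$ and $T_R$ over disjoint, horizontally separated key ranges, join them in $O(\log n)$ time into a single balanced BST over all of $P$, answer the queries, and then roll back the join and the two constituent \TR\ rollbacks, all within the $O(\log n)$ budget since only $O(\log n)$ structural changes were made.

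The one genuinely new issue — and the step I expect to be the main obstacle — is what happens when one side runs dry. If we keep deleting from, say, the left end, $ST_L$ eventually becomes empty (or nearly so) while almost all of $P$ sits in $ST_R$; a subsequent \DL\ then has no left stack tree to delete from. The fix is a rebalancing/repartitioning step: choose a new split value near the median of the current $P$ and redistribute points into two fresh stack trees of roughly equal size. A naive rebuild costs $\Theta(n)$, which would violate the worst-case $O(1)$ update bound, so I would instead do this by \emph{incremental global rebuilding} à la Overmars~\cite{Overmars83}: when one side's size drops below a suitable fraction of the other's (say the smaller side has fewer than a constant fraction of the total, with enough slack that the rebuild finishes before that side is exhausted), start building the two replacement stack trees in the background, performing $O(1)$ work per subsequent deque update, while still servicing updates and \TR\ queries on the current (soon-to-be-retired) pair. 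Because stack trees support \IR/\IL/\DR/\DL\ in worst-case $O(1)$, feeding the $\Theta(n)$ points one at a time into the new trees takes $\Theta(n)$ elementary steps spread over $\Theta(n)$ updates, so the amortized background work becomes worst-case $O(1)$ per update, provided the trigger threshold leaves a linear number of updates of headroom before the thin side empties out.

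A couple of bookkeeping points need care but are routine. First, updates arriving during a background rebuild must be applied to \emph{both} the old pair (to keep it correct for \TR) and the new pair (so the new pair is up to date when the switch happens); since a new update is always at one of the two global ends, and the new split lies strictly between, each such update is a cheap stack-tree operation on the appropriate side of the new pair as well, so the extra cost is still $O(1)$. Second, during a rebuild a point could in principle be deleted from the thin side before it has been copied over; the threshold is chosen precisely so this cannot happen — by the time the rebuild is in progress the thin side is the one \emph{not} shrinking under the triggering deletion pattern, or more carefully, one tracks a pointer into the source side and only ``commits'' points that have already been transferred, deferring or reconciling the handful of boundary updates. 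Third, \TR\ during a rebuild still just operates on the current live pair, so its analysis is unchanged. Putting these together gives worst-case $O(1)$ for all four updates and $O(\log n)$ for \TR, establishing Lemma~\ref{lem:deque-tree}.
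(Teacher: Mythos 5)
Your proposal is correct and follows essentially the same route as the paper: two stack trees glued back-to-back at a vertical dividing line, with the four deque updates routed to the appropriate side, \TR\ done by joining the two retrieved trees (with rollback), and the near-empty-side issue handled by incremental global rebuilding with a size-ratio trigger that leaves a linear number of updates of headroom. The paper uses the concrete thresholds $\frac{1}{2}\leq |P_L|/|P_R|\leq 2$ (trigger) and $\frac{1}{4}\leq |P_L|/|P_R|\leq 4$ (invariant), but this is only a specific instantiation of the scheme you describe.
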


As for Lemma~\ref{lem:stack-tree}, the statement of Lemma~\ref{lem:deque-tree} is not new because we can also use a finger search tree~\cite{Brodal04,GuibasMPR77} to achieve it. Here, we propose a different implementation for solving our dynamic convex hull problem.

The deque tree $DT(P)$ is built from two stack trees $ST_L(P_L)$ and $ST_R(P_R)$ from opposite directions, where $P_L$ and $P_R$ refer to the subsets of $P$ to the left and right of a vertical dividing line $\ell$, respectively (see Figure~\ref{fig:recursive-slowdown}). To insert a point to the left of $P$, we insert it to $ST_L(P_L)$. To delete the leftmost point of $P$, we delete it from $ST_L(P_L)$. For insertion/deletion on the right side of $P$, we use $ST_R(P_R)$. For \TR, we perform \TR\ operations on both $ST_L(P_L)$ and $ST_R(P_R)$, which obtain two balanced BSTs; then, we join these two trees into a single one. The time complexities of all these operations are as stated in the lemma.

To make this idea work, we need to make sure that neither $ST_L(P_L)$ nor $ST_R(P_R)$ is empty when $n$ is sufficiently large (e.g., if $n<c$ for a constant $c$, then we can do everything by brute force). To this end, we apply incremental global rebuilding~\cite[Section~5.2.2]{Overmars83}, where we dynamically adjust the dividing line $\ell$. Specifically, we maintain an invariant that $|P_L|$ and $|P_R|$ differ by at most a factor of 4, i.e., $\frac{1}{4} \leq \frac{|P_L|}{|P_R|} \leq 4$. To achieve this, whenever~$|P_L| = \frac{1}{2} |P_R|$ after an update, we set $\ell'$ to be the vertical line partitioning the current $P_R$ into two equal-sized subsets. Similarly, whenever $|P_R| = \frac{1}{2} |P_L|$, we set $\ell'$ to be the vertical line partitioning $P_L$ into two equal-sized subsets. Once $\ell'$ is set, we begin building $ST_L(P_L')$ on the subset $P_L'$ of points of $P$ to the left of $\ell'$ and $ST_R(P_R')$ on the subset $P_R'$ of points to the right of $\ell'$. It takes $O(n)$ time to build $ST_L(P_L')$ and $ST_R(P_R')$, and there will be $\Omega(n)$ updates (e.g., at least $\frac{n}{6}$ updates, where $n$ is the size of $|P|$ when the rebuilding procedure starts) between the time when $\frac{1}{2} \leq \frac{|P_L|}{|P_R|} \leq 2$ and when $\frac{1}{4} \leq \frac{|P_L|}{|P_R|} \leq 4$, so we can perform $O(1)$ incremental work to progressively build $ST_L(P_L')$ and $ST_R(P_R')$ for the next $O(n)$ updates (e.g., the next $\frac{n}{6}$ updates). By the time the above second condition is met, we may replace $ST_L(P_L)$ and $ST_R(P_R)$ with $ST_L(P_L')$ and $ST_R(P_R')$. This returns us to the state where $\frac{1}{2} \leq \frac{|P_L|}{|P_R|} \leq 2$. 


\subsection{Two-sided monotone path dynamic convex hull}
\label{sec:2sidemontone}

We can tackle the 2-sided monotone path dynamic convex hull problem using the deque tree. Suppose $P$ is a set of $n$ points in $\bbR^2$. In addition to the operations \IR, \DR, \IL, \DL, \HR, as defined in Section~\ref{sec:intro}, we also consider the operation \HTR: Return the root of a BST of height $O(\log h)$ that stores all vertices of the convex hull $\ch(P)$ (so that binary search based operations on~$\ch(P)$ can all be supported in $O(\log h)$ time). We will prove the following theorem.

\begin{theorem}
    \label{thm:deque-convex-hull}
    Let $P \subset \mathbb{R}^2$ be an initially empty set of points, with $n = |P|$ and $h = |\ch(P)|$.
    There exists a ``Deque Convex Hull'' data structure $DH(P)$ of $O(n)$ space that supports the following operations:
    \begin{enumerate}
        \item \IR: $O(1)$ time.
        \item \DR: $O(1)$ time.
        \item \IL: $O(1)$ time.
        \item \DL: $O(1)$ time.
        \item \HTR: $O(\log n)$ time.
        \item \HR: $O(h+\log n)$ time.
    \end{enumerate}
\end{theorem}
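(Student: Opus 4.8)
The plan is to emulate the deque tree of Lemma~\ref{lem:deque-tree}, replacing ``balanced BST of points'' by ``balanced BST of convex-hull vertices'' and the join of two $x$-separated BSTs by the merge of two $x$-separated convex hulls. First I would build a one-sided \emph{stack convex hull} $SH(P)$ by modifying the stack tree of Lemma~\ref{lem:stack-tree}: each component tree $T_i$ now stores $\ch(Q_i)$ --- where $Q_i$ is the contiguous $x$-sorted block of points $T_i$ is responsible for --- as a balanced BST of its upper and lower hull chains, augmented in the manner of Overmars and van Leeuwen~\cite{Overmars} so that whenever two such hulls are combined, the vertices that become interior to the combined hull are retained in auxiliary balanced subtrees hanging off the ``merge node'' rather than discarded (this is what lets \DR\ roll an insertion back and what keeps all $n$ points stored in $O(n)$ space). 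All the size invariants on $|T_i|$ and the recursive-slowdown scheduling of Lemma~\ref{lem:stack-tree} (and hence Observation~\ref{obser:treesize} and Lemma~\ref{lem:invariantstacktree}) are kept verbatim; the only change is that the $O(1)$ incremental work charged to $T_i$ now advances an incremental \emph{hull merge} of $\ch(Q_{i-1})$ into $\ch(Q_i)$ instead of a BST join. Since two $x$-separated hulls stored in balanced BSTs of total height $O(2^i)$ can be merged in $O(2^i)$ time by locating the upper and lower bridges with a coordinated binary search and then performing $O(1)$ split/concatenate operations~\cite{Overmars}, this merge fits exactly the $O(2^i)$-time budget that the recursive-slowdown analysis already allots to processing $T_i$. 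Consequently \IR\ costs $O(1)$ (insert into $\ch(Q_0)$ by brute force as $|Q_0|=O(1)$, then do $O(1)$ incremental merge work), and \DR\ costs $O(1)$ by popping and undoing the $O(1)$ recorded changes of the most recent insertion, exactly as for the stack tree.

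To retrieve $\ch(P)$ from $SH(P)$ as a balanced BST, I would first drive every pending incremental hull merge to completion (completing the one sitting on $T_i$ takes $O(2^i)$ time, so $O(\sum_i 2^i)=O(\log n)$ in total) and then merge the hulls $\ch(Q_{\lceil\log\log n\rceil}),\ldots,\ch(Q_1),\ch(Q_0)$ in index order, the $i$-th such merge touching a tree of height $O(2^i)$, for a total of $O(\sum_i 2^i)=O(\log n)$ time again. Recording every one of these $O(\log n)$ elementary changes on a stack and rolling them back once the caller has finished querying the returned tree --- exactly as was done for \TR\ --- gives such a hull retrieval on $SH(P)$ in $O(\log n)$ time; the returned tree is a balanced BST on $|\ch(P)|$ vertices and so has height $O(\log|\ch(P)|)$.

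Next I would assemble $DH(P)$ from two stack convex hulls $SH_L(P_L)$ and $SH_R(P_R)$ separated by a vertical line $\ell$, maintained by exactly the incremental global rebuilding of Section~\ref{sec:dequetree} that keeps $\frac{1}{4}\le\frac{|P_L|}{|P_R|}\le 4$ (handling $n=O(1)$ by brute force). The updates \IL\ and \DL\ route to $SH_L$ and \IR\ and \DR\ to $SH_R$, each in $O(1)$ time. For \HTR, retrieve $\ch(P_L)$ and $\ch(P_R)$ from the two stack convex hulls as above in $O(\log n)$ time each, and then perform one more Overmars merge of these two $x$-separated hulls --- each of height $O(\log n)$ --- in $O(\log n)$ time, obtaining a balanced BST on the $h=|\ch(P)|$ vertices of $\ch(P)$ of height $O(\log h)$; record all changes and roll them back when the caller finishes, so \HTR\ is $O(\log n)$. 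For \HR, run \HTR, traverse the returned tree in $O(h)$ time to output the hull vertices in cyclic order, and roll back; total $O(h+\log n)$. The $O(n)$ space bound follows because across all component trees and their auxiliary ``discarded-chain'' subtrees each of the $n$ points is stored $O(1)$ times, the counters and range lists use $O(\log\log n)$ space, the rollback stack for \DR\ and \DL\ grows by $O(1)$ per not-yet-undone update, and the partially rebuilt pair of stack convex hulls is also of size $O(n)$.

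The main obstacle is making the convex-hull merge work under recursive slowdown. Unlike a BST join, Overmars' merge first runs a coordinated binary search over two trees to find the two bridges and only then splits and concatenates, so one must (i) pause and resume that process at $O(1)$ granularity across many insertions, (ii) keep the ``logical'' hull of a half-built $T_i$ well defined --- which is precisely why Lemma~\ref{lem:invariantstacktree} is arranged so that no tree participates in two merges simultaneously, and why \HTR\ completes all pending merges before combining anything --- and (iii) record each elementary step so that \DR\ can undo a single insertion, including one that merely advanced, or possibly initiated, such a merge. Once this bookkeeping is in place, all of the stated bounds reduce to those of the stack tree and the deque tree, together with the $O(\log(|\ch(A)|+|\ch(B)|))$ hull-merge bound of Overmars~\cite{Overmars}.
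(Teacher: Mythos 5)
Your proposal is correct and follows essentially the same route as the paper: a deque structure built from two stack trees under recursive slowdown and incremental global rebuilding, where each component tree stores the hull of its contiguous block and the incremental BST join is replaced by the Overmars--van Leeuwen tangent-based hull merge within the same $O(\log(|T_{i-1}|+|T_i|))$ budget, with rollback stacks for deletions and \HTR/\HR\ obtained by completing pending merges and joining all trees. The only cosmetic difference is that you keep upper and lower chains (plus retained interior vertices) inside one tree, whereas the paper maintains the upper and lower hulls as two separate deque-tree instances; this does not change the argument or the bounds.
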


\subparagraph{Remark.}
The time complexities of the four update operations in Theorem~\ref{thm:deque-convex-hull} are obviously optimal. We will show in Section~\ref{sec:lower-bounds} that the other two operations are also optimal. In particular, it is not possible to reduce the time of \HTR\ to $O(\log h)$ or reduce the time of \HR\ to $O(h)$ (but this is possible for the one-sided case as shown in Section~\ref{sec:1sidemonotone}).
\medskip

The deque convex hull is a direct application of the deque tree from Section~\ref{sec:dequetree}. We maintain the upper hull and lower hull of $\ch(P)$ separately. In the following, we only discuss how to maintain the upper hull, as maintaining the lower hull is similar. By slightly abusing the notation, let $\ch(P)$ refer to the upper hull only in the following discussion.

We use a deque tree $DT(P)$ to maintain $\ch(P)$. The $DT(P)$ consists of two stack trees $ST_L$ and~$ST_R$. Each stack tree is composed of a sequence of balanced search trees $T_i$'s; each such tree $T_i$ stores left-to-right the points of the convex hull $\ch(P')$ for a contiguous subsequence~$P'$ of $P$. We follow the same algorithm as the deque tree with the following changes. During the process of joining $T_{i-1}$ with $T_{i}$, our task here becomes merging the two hulls stored in the two trees. To perform the merge, we first compute the upper tangent of the two hulls. This can be done in $O(\log (|T_{i-1}|+ |T_{i}|))$ time using the method of Overmars and van Leeuwen~\cite{Overmars}. Then, we split the tree $T_{i-1}$ into two portions at the tangent point; we do the same for $T_{i}$. Finally, we join the relevant portions of the two trees into a new tree that represents the merged hull of the two hulls. The entire procedure takes $O(\log (|T_{i-1}|+ |T_{i}|))$ time. This time complexity is asymptotically the same as joining two trees $T_{i-1}$ and $T_i$ as described in Section~\ref{sec:stacktree}, and thus we can still achieve the same performances for the first five operations as in Lemma~\ref{lem:deque-tree}; in particular, to perform \HTR, we simply call \TR\ on the deque tree. Finally, for \HR, we first perform \HTR\ to obtain a tree representing $\ch(P)$. Then, we perform an in-order traversal on the tree, which can output $\ch(P)$ in $O(h)$ time. Thus, the total time for \HR\ is $O(h+\log n)$.

\subsection{One-sided monotone path dynamic convex hull}
\label{sec:1sidemonotone}
We now consider the one-sided monotone problem. Suppose $P$ is a set of $n$ points in $\bbR^2$. Consider the following operations on $P$ (assume that $P=\emptyset$ initially): \IR, \DR, \HTR, \HR, as in Section~\ref{sec:2sidemontone}. Applying Theorem~\ref{thm:deque-convex-hull}, we can perform \HTR\ in $O(\log n)$ time and perform \HR\ in $O(h+\log n)$ time. We will prove the following theorem, which reduces the \HTR\ time to~$O(\log h)$ and reduces the \HR\ time to $O(h)$.

\begin{theorem}
    \label{thm:stack-convex-hull}
    Let $P \subset \mathbb{R}^2$ be an initially empty set of points, with $n = |P|$ and $h = |\ch(P)|$. There exists a ``Stack Convex Hull'' data structure $SH(P)$ of $O(n)$ space that supports the following operations:
    \begin{enumerate}
        \item \IR: $O(1)$ time.
        \item \DR: $O(1)$ time.
        \item \HTR: $O(\log h)$ time.
        \item \HR: $O(h)$ time.
    \end{enumerate}
\end{theorem}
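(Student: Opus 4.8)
The plan is to build the stack convex hull $SH(P)$ on top of the deque tree machinery of Section~\ref{sec:dequetree}, but to control the query time by keeping the representation of $\ch(P)$ from ever being much larger than $h$. The key observation is that in the one-sided setting, \IR\ adds a point to the right and \DR\ removes the most recently inserted point, so the sequence of updates behaves like a stack; moreover, when we insert a point $p$ to the right of $\ch(P)$ (here $\ch(P)$ again denotes the upper hull, with the lower hull handled symmetrically), the new upper tangent from $p$ may ``bury'' a suffix of the current hull vertices. Those buried vertices are exactly the points that Sundar's priority queue with attrition~\cite{Sundar89} would discard, except that we cannot truly discard them: a later \DR\ must be able to resurrect them. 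So, mirroring Sundar's structure, I would partition the stored points into a constant number (four) of lists: three of them maintained in convex position and of total size $O(h)$, and a fourth ``reservoir'' list holding the older, currently redundant points, which can be arbitrarily large but is only touched lazily. Each of the three convex-position lists is stored as a balanced BST of height $O(\log h)$, so \HTR\ can assemble a height-$O(\log h)$ tree for $\ch(P)$ by joining (portions of) these three trees along the appropriate tangents, each join costing $O(\log h)$; and \HR\ is then an $O(h)$ in-order traversal.

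First I would make precise the attrition invariant: after processing an \IR, the vertices actually on the current upper hull are a prefix of one list followed by another list in convex position, while everything that got buried is pushed onto (the top of) the reservoir, recording enough rollback information that a subsequent \DR\ can pop it back. Concretely, when $p$ is inserted, walk the tail of the active hull list(s) backward computing the tangent from $p$; the vertices passed over are moved to the reservoir in a batch, and $p$ is appended. This is the same amortized-$O(1)$ accounting as Andrew's scan, but to get \emph{worst-case} $O(1)$ I would again apply recursive slowdown exactly as in the stack tree of Section~\ref{sec:stacktree}: maintain the three hull-position lists as stack-tree-like sequences of balanced BSTs $T_i$ whose joins are performed incrementally over later insertions, so that no single \IR\ does more than $O(1)$ work, while \HTR\ first flushes all pending incremental joins in $O(\sum_i 2^i)=O(\log h)$ time (the sum is now over $O(\log\log h)$ indices because the total size of these lists is $O(h)$, not $O(n)$), then joins the three resulting trees and rolls back afterward, just as in the \TR\ analysis. \DR\ pops the rollback stack, which undoes the constant-size insertion work and, if $p$'s insertion had moved a block to the reservoir, pops that block back; since each insertion's recorded work is $O(1)$, \DR\ is $O(1)$.

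The step I expect to be the main obstacle is showing that the three convex-position lists can be kept of total size $O(h)$ \emph{simultaneously} with worst-case $O(1)$ updates. Sundar's structure achieves the size bound by discarding elements, but here a ``discarded'' vertex must remain recoverable, and after a long run of \DR\ operations that empty the reservoir, the active lists must already be in the right shape — we cannot afford an $O(h)$ cleanup. This is precisely the situation Sundar's four-list decomposition is designed for: one list is being consumed from the front, one from the back, one is a small buffer, and transfers between them are done a constant amount at a time, so the structure is always within a constant factor of tight. I would adapt that bookkeeping, proving by the same potential-function argument (adjusted to be worst-case via de-amortization, i.e. doing the transfers eagerly and incrementally rather than in bulk) that $|L_1|+|L_2|+|L_3| = O(h)$ is invariant; the reservoir absorbs the slack. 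A secondary subtlety is that a \DR\ restoring a buried block must re-expose hull vertices in the \emph{correct} list so that the convex-position invariant is re-established with only $O(1)$ pointer changes — this works because the block was removed as a contiguous tail and is reinserted as a contiguous tail, so the rollback is purely structural (reattach the subtree, restore the size counters $n_j$ and the blocked/unblocked flags), exactly as \DR\ was handled for the plain stack tree. Once these invariants are in place, \HTR\ in $O(\log h)$ and \HR\ in $O(h)$ follow immediately, and the $O(n)$ space bound is clear since every inserted point is stored exactly once across the four lists plus $O(1)$ rollback data per point.
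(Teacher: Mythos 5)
Your high-level plan (a Sundar-style four-list decomposition plus rollback for \DR) is in the spirit of the paper, but the specific invariant you build everything on does not hold and cannot be maintained: you require that the \emph{same} three lists are simultaneously in convex position and of total size $O(h)$, with the fourth ``reservoir'' list containing only currently buried (non-hull) points that queries never touch. Consider a single \IR\ of a point far above all others: $h$ collapses to a constant, yet your active lists hold $\Theta(n)$ points that must now be declared buried. In worst-case $O(1)$ time you cannot even locate the tangent from the new point (so you do not know which suffix is buried), let alone migrate it; and the recursive-slowdown trick from the stack tree does not rescue this, because the very next operation may be \HTR, which you promised to answer in $O(\log h)=O(1)$ time while your ``small'' lists still contain $\Theta(n)$ points. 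Your own ``main obstacle'' paragraph gestures at de-amortizing Sundar's transfers, but no constant-work-per-update scheme can restore $|L_1|+|L_2|+|L_3|=O(h)$ after such a collapse, so the invariant itself is the wrong target.

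The paper avoids this by making the size and convexity guarantees \emph{asymmetric}: the lists $A_1\prec A_2\prec A_3\prec A_4$ satisfy that $A_1,A_2,A_3$ are convex chains but only $A_1,A_3,A_4$ are $O(h)$ (via the BIAS invariant $|A_1|\ge|A_3|+2|A_4|$, restored by running an $O(1)$-time BIAS procedure twice per insertion). Crucially, the large list $A_2$ is \emph{not} a reservoir of dead points --- it is the old $A_1$, renamed in $O(1)$ time when the new point turns left with respect to the last hull edge, and it may still contribute a prefix of hull vertices. The $O(\log h)$ bound for \HTR\ is then obtained not from $|A_2|=O(h)$ but from a finger/exponential search on $T_2$ from its left end when computing the tangent to $\ch(A_3\cup A_4)$: every $A_2$-vertex left of the tangent point lies on $\ch(P)$, so the search distance is at most $h$. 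Two further ingredients you are missing: $A_4$ need not be a convex chain and is stored using the deque convex hull of Theorem~\ref{thm:deque-convex-hull} so that a height-$O(\log h)$ tree for $\ch(A_4)$ can be produced in $O(\log h)$ time; and insertion never computes any tangent at all --- burial is handled implicitly by the renaming plus BIAS, which is what makes worst-case $O(1)$ updates possible. Without these ideas (or a genuinely different mechanism for answering queries while a large stale chain is still ``active''), the proposal does not go through.
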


The main idea to prove Theorem~\ref{thm:stack-convex-hull} is to adapt ideas from Sundar's algorithm in \cite{Sundar89} for priority queue with attrition as well as the deque convex hull data structure from Section~\ref{sec:2sidemontone}. As in Section~\ref{sec:1sidemonotone}, we maintain the upper and lower hulls of $\ch(P)$ separately. In the following, we only discuss how to maintain the upper hull. By slightly abusing the notation, let $\ch(P)$ refer to the upper hull only in the following discussion.

\subsubsection{Structure of the stack convex hull}
For any two disjoint subsets $P_1$ and $P_2$ of $P$, we use $P_1\prec P_2$ to denote the case where all points of $P_1$ are to the left of each point of $P_2$. Our data structure maintains four subsets  $A_1 \prec A_2 \prec A_3 \prec A_4$ of $P$. Each $A_i$, $1\leq i\leq 3$, is a convex chain, but this may not be true for $A_4$. Further, the following invariants are maintained during the algorithm.

\begin{enumerate}
    \item Vertices of $\ch(P)$ are all in $\bigcup_{i=1}^4 A_i$.
    \item $A_1$ is a prefix of the vertices of $\ch(P)$ sorted from left to right.
    \item $A_1 \cup A_2$ and $A_1 \cup A_3$ are both convex chains.
    \item BIAS: $|A_1| \geq |A_3| + 2 \cdot |A_4|$.
\end{enumerate}

The partition of the sequences of points into four subsequences and the above invariants are strongly inspired by Sundar's construction~\cite{Sundar89} for a priority queue supporting \textsc{Insert} and \textsc{DeleteMin} in worst-case constant time, and where insertions have the side-effect of deleting all elements larger than the inserted element from the priority queue.

For $1\leq i\leq 4$, we let $\textsc{Right}(A_i)$ and $\textsc{Left}(A_i)$ denote the rightmost and leftmost points of $A_i$, respectively, and let $\textsc{Left2}(A_i)$ and $\textsc{Right2}(A_i)$ denote the second leftmost and second rightmost point of $A_i$, respectively. For any three points $p_1 \prec p_2 \prec p_3$, we use $p_1 \to p_2 \to p_3$ to refer to the traversal from $p_1$ to $p_2$ and then to $p_3$; we often need to determine whether $p_1 \to p_2 \to p_3$ makes a left turn or right turn.

\subsubsection{\IR\ and \DR}
When we perform insertions, the BIAS invariant may be violated since $|A_4|$ increases by one (possibly after emptying $A_1$, $A_3$ and $A_4$). To restore it, we will perform the BIAS procedure (Algorithm~\ref{algo:bias}) twice.

\begin{algorithm}[t]
    \small
    \caption{\textsc{Bias}}
    \label{algo:bias}
    \KwIn{$A_1$, $A_2$, $A_3$, and $A_4$}
    \If{$|A_4| > 0$}{
        \If{$(|A_3| \geq 2$ and $\textsc{Right2}(A_3) \to \textsc{Right}(A_3) \to \textsc{Left}(A_4)$ is a left turn$)$
            or $(|A_3| = 1$ and $|A_1| \geq 1$ and $\textsc{Right}(A_1) \to \textsc{Right}(A_3) \to \textsc{Left}(A_4)$ is a left turn$)$
        }{
            $\textsc{DeleteRight}(A_3)$
        }
        \Else{
            $\textsc{InsertRight}(A_3, \textsc{DeleteLeft}(A_4))$\tcp*{Delete the leftmost point from $A_4$ and insert it to the right of $A_3$}
        }
    }
    \ElseIf{$|A_3| > 0$}{
        \If{$|A_2| > 0$ and $|A_3| \geq 2$ and $\textsc{Left}(A_2) \to \textsc{Left}(A_3) \to \textsc{Left2}(A_3)$ is a left turn}{
            $\textsc{DeleteLeft}(A_3)$
        }
        \ElseIf{$|A_2|>0$ and $(|A_1| = 0$ or $\textsc{Right}(A_1) \to \textsc{Left}(A_2) \to \textsc{Left}(A_3)$ is a right turn$)$}{
            $\textsc{InsertRight}(A_1, \textsc{DeleteLeft}(A_2))$
        }
        \Else{
            Set $A_2$ to empty \\
            $\textsc{InsertRight}(A_1, \textsc{DeleteLeft}(A_3))$
        }
    }
\end{algorithm}

One can verify that running the BIAS procedure once will increase $|A_1|-|A_3|-2\cdot|A_4|$ by at least one, unless $A_3=A_4=\emptyset$ (in which case the BIAS invariant trivially holds). Indeed, this is done by one of the following operations: (1) remove a point from $A_3$; (2) move a point from $A_4$ to $A_3$; (3) move a point from $A_2$ to $A_1$; (4) move a point from $A_3$ to $A_1$.

We wish to implement the BIAS procedure in $O(1)$ time. To this end, for each $A_i$, $1\leq i\leq 3$, since it is a convex chain and updates only happen at both ends of the chain, we store it by a finger search tree so that each such update can be supported in $O(1)$ time~\cite{Brodal04,GuibasMPR77,ref:Tsakalidis85}. For $A_4$, which may not be a convex chain, we store it by the deque convex hull data structure in Theorem~\ref{thm:deque-convex-hull}. In this way, the BIAS procedure runs in $O(1)$ time.

With the BIAS procedure in hand, we perform the \IR\ operation as shown in the pseudocode in Algorithm~\ref{algo:insert}. Due to the way we implement each $A_i$, \IR\ can be done in $O(1)$ time.

\begin{algorithm}[t]
    \small
    \caption{\IR}
    \label{algo:insert}
    \KwIn{
        $A_1$, $A_2$, $A_3$, $A_4$, and $p$, a point to be inserted}
    \If{$|A_1| \geq 2$ and $\textsc{Right2}(A_1) \to \textsc{Right}(A_1) \to p$ makes a left turn}{
        Set $A_2$ to refer to $A_1$ \\
        Set $A_1$, $A_3$, and $A_4$ to empty
    }
    \IR($A_4$, $p$) \\
    Run the \textsc{Bias} procedure twice \tcp{Restore BIAS invariant}
\end{algorithm}

To handle \DR, we use a stack to record changes made during each insertion. To delete a point $p$, we use the stack to roll back the changes done during the insertion of $p$ (these changes must be at the top of the stack). Since each \IR\ takes $O(1)$ time, there are $O(1)$ changes due to the insertion of $p$, and thus \DR\ can be accomplished in $O(1)$ time as well.

\subsubsection{\HTR}
We wish to (implicitly) construct a BST of height $O(\log h)$ to represent the upper hull $\ch(P)$. Recall that $h=|\ch(P)|$. The second and fourth invariants together imply that $|A_1|$, $|A_3|$, and~$|A_4|$ are all bounded by $O(h)$. Recall that for each $A_i$, $1\leq i\leq 3$, $A_i$ is a convex chain stored by a finger search tree, denoted by $T_i$. Since $|A_1|,|A_3|=O(h)$, the heights of both $T_1$ and $T_3$ are $O(\log h)$. For $A_4$, it is represented by a deque convex hull data structure. Since $|A_4|=O(h)$, by Theorem~\ref{thm:deque-convex-hull}, we can obtain in $O(\log h)$ time a tree $T_4$ of height $O(\log h)$ to store  $\ch(A_4)$. For~$A_2$, the only guaranteed upper bound for its size is $O(n)$. Hence, the height of $T_2$ is $O(\log n)$, instead of $O(\log h)$. By our algorithm invariants, $\ch(P)$ is the upper hull of $A_1 \cup A_2 \cup A_3 \cup \ch(A_4)$. As $A_1\prec A_2\prec A_3\prec A_4$, we can compute $\ch(P)$ by merging all four convex chains from left to right. This can be done in $O(\log n)$ time by using the trees $T_i$, $1\leq i\leq 4$, and the binary search method of Overmars and van Leeuwen~\cite{Overmars} for finding the upper tangents.

We now reduce the time to $O(\log h)$. The $O(\log n)$ time of the above algorithm is due to the fact that the height of $T_2$ is $O(\log n)$ instead of $O(\log h)$. This is because our algorithm invariants do not guarantee $|A_2|=O(h)$. Therefore, to have an $O(\log h)$ time algorithm, we need a clever way to merge $A_2$ with the other convex chains.

We first merge $T_3$ and $T_4$ to obtain  in $O(\log h)$ time a new tree $T_{34}$ of height $O(\log h)$ to represent the upper hull $\ch(A_3\cup A_4)$. Since $A_1$ is part of $\ch(P)$ by our algorithm invariant, we have the following two cases for $\ch(P)$: (1) no point of $A_2$ is on $\partial \ch(P)$; (2) at least one point of $A_2$ is on $\partial \ch(P)$. In the first case, $\ch(P)$ can be obtained by simply merging $T_1$ and~$T_{34}$ in $O(\log h)$ time. In the second case, vertices $\ch(P)$ from left to right are: all points of $A_1$, points of $A_2$ from $\textsc{Left}(A_2)$ to $p_2$, and points of $\ch(A_3\cup A_4)$ from $p_3$ to the right end, where $\overline{p_2p_3}$ is the upper tangent between $A_2$ and $\ch(A_3\cup A_4)$, with $p_2\in A_2$ and $p_3\in \partial\ch(A_3\cup A_4)$. As such, to compute $\ch(P)$ in the second case, the key is to compute the tangent $\overline{p_2p_3}$. We show below that this can be done in $O(\log h)$ time by an ``exponential search'' on $A_2$ and using the method of Overmars and van Leeuwen~\cite{Overmars}, in the following referred to as the OvL algorithm.

Note that which of the above two cases happens can be determined in $O(\log h)$ time. Indeed, we can first merge $T_1$ and $T_{34}$ to obtain a tree $T_{134}$ of height $O(\log h)$ representing the upper hull $\ch(P\setminus A_2)$. Then, due to our algorithm invariants, the first case happens if and only if $\textsc{Left}(A_2)$ is below $\ch(P\setminus A_2)$, which can be determined in $O(\log h)$ time using the tree $T_{134}$.

We now describe an algorithm to compute the common tangent $\overline{p_2p_3}$ in the second case. Let $\ch_{34}=\ch(A_3\cup A_4)$. Applied to the OvL algorithm directly using $T_2$ and $T_{34}$, it will take $O(\log (|A_2| + |A_3\cup A_4|)) = O(\log n)$ time to find $\overline{p_2p_3}$. We next reduce the time to $O(\log h)$.

The OvL algorithm uses the binary search strategy. In each iteration, it picks two candidate points $p'_2\in A_2$ and $p'_3\in \ch_{34}$ (initially, $p'_2$ is the middle point of $A_2$ and $p_3'$ is the middle point of $\ch_{34}$), and in $O(1)$ time, the algorithm can determine at least one of the three cases: (1) $p_2=p'_2$ and $p_3=p_3'$; (2) $p_2$ is to the left or right of $p_2'$; (3) $p_3$ is to the left or right of $p_3'$. In the first case, the algorithm stops. In the second case, half of the remaining portion of $A_2$ is pruned and $p_2'$ is reset to the middle point of the remaining portion of $A_2$ (but $p_3'$ does not change). The third case is processed analogously.

When we apply the OvL algorithm, the way we set $p_3'$ is the same as described above. However, for $p_2'$, we set it by following the exponential search strategy using $T_2$ (or the standard finger search with a finger at the leftmost leaf of $T_2$~\cite{Brodal04,GuibasMPR77,ref:Tsakalidis85}). Specifically, we first reset $p_2'$ to the leftmost node of $T_2$ and then continue the search on its parent and so on, until the first time we find $p_2$ is left of $p_2'$, in which case we search downwards on $T_2$. This will eventually find $\overline{p_2p_3}$.

We claim that the runtime is $O(\log h)$.
Indeed, since the height of $T_{34}$ is $O(\log h)$, the time the algorithm spent on resetting $p_3'$ in the entire algorithm is $O(\log h)$. To analyze the time the algorithm spent on resetting $p_2'$, following the standard finger search (with a finger at the leftmost leaf of the tree) or exponential search analysis, the number of times the algorithm resets $p_2'$ is $O(\log m)$, where $m$ is the number of points of $T_2$ to the left of $p_2$. Observe that all points of $T_2$ to the left of $p_2$ are on $\ch(P)$; thus, we have $m\leq h$. As such, the runtime of the algorithm for computing the tangent $\overline{p_2p_3}$ is $O(\log h)$.

After $\overline{p_2p_3}$ is computed, we split $T_2$ at $p_2$ and obtain a tree $T_2'$ of height $O(\log h)$ representing all points of $T_2$ left of $p_2$. Then, we merge $T_2'$ with $T_1$ and $T_{34}$ to finally obtain a tree of height $O(\log h)$ representing $\ch(P)$ in $O(\log h)$ time. The total time of the algorithm is $O(\log h)$.

As before, we use a stack to record changes due to the above algorithm for constructing the tree representing $\ch(P)$. Once we finish the queries using the tree, we roll back the changes we have made by popping the stack. This also takes $O(\log h)$ time.

\subsubsection{\HR}

To output the convex hull $\ch(P)$, we first perform \HTR\ to obtain a tree of height $O(\log h)$ that stores $\ch(P)$ in $O(\log h)$ time. Then, using this tree, we can output~$\ch(P)$ in additional $O(h)$ time. As such, the total time for reporting $\ch(P)$ is $O(h)$.

\section{The simple path problem}
\label{sec:path}
In this section, we consider the dynamic convex hull problem for a simple path.
Let $\pi$ be a simple path of $n$ vertices in the plane. Unless otherwise stated, a ``point'' of $\pi$ always refers to a vertex of it (this is for convenience also for being consistent with the notion in Section~\ref{sec:monotone}). For ease of discussion, we assume that no three vertices of $\pi$ are colinear.

For any subpath $\pi'$ of $\pi$, let $|\pi'|$ denote the number of vertices of $\pi$, and $\ch(\pi')$ the convex hull of $\pi'$, which is also the convex hull of all vertices of $\pi'$.

We designate the two ends of $\pi$ as the {\em front end} and the {\em rear end}, respectively. 
We consider the following operations on $\pi$: \IF, \DF, \IRR, \DRR, \SQ, and \HR, as defined in Section~\ref{sec:intro}. 
The following theorem summarizes the main result of this section.

\begin{theorem}
    \label{thm:dpch}
    Let $\pi \subset \mathbb{R}^2$ be an initially empty simple path, with $n = |\pi|$ and $h = |\ch(\pi)|$.
    There exists a ``Deque Path Convex Hull'' data structure $PH(\pi)$ of $O(n)$ space that supports the following operations:
    \begin{enumerate}
        \item \IF: $O(1)$ time.
        \item \DF: $O(1)$ time.
        \item \IRR: $O(1)$ time.
        \item \DRR: $O(1)$ time.
        \item \SQ: $O(\log n)$ time.
        \item \HR: $O(h+\log n)$ time.
    \end{enumerate}
\end{theorem}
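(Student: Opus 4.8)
\textbf{Proof proposal for Theorem~\ref{thm:dpch}.}

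The plan is to mimic the structure used for the two-sided monotone case (Theorem~\ref{thm:deque-convex-hull}), replacing the deque tree over horizontally sorted points by an analogous structure over a simple path. As in Section~\ref{sec:dequetree}, I would partition the current path~$\pi$ at an interior vertex into two subpaths~$\pi_R$ and~$\pi_F$ sharing no edge, where~$\pi_R$ contains the rear end and~$\pi_F$ contains the front end. Each of the two subpaths is then handled as a one-sided (stack) problem: pushing/popping a vertex at the rear end operates on~$\pi_R$, and pushing/popping at the front end operates on~$\pi_F$. Thus the four update operations \IF, \DF, \IRR, \DRR\ reduce, as before, to stack-style \IR/\DR\ on one of the two halves, where ``stack'' now means that vertices are appended to and removed from the \emph{interior-facing} end while the structure incrementally maintains~$\ch$ of the subpath. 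The key difference from the monotone case is the cost of merging: by Guibas, Hershberger, and Snoeyink~\cite{guibas1991compact}, the convex hulls of two consecutive subpaths of a simple path can be combined in $O(\log|\ch(T_{i-1})|\cdot\log|\ch(T_i)|) = O(\log^2 n)$ time (using the compact-interval-tree representation, with the space improvement of Wang~\cite{ref:WangAl20} to keep total space $O(n)$). This is exactly the polylogarithmic-join regime anticipated in the remark after the stack tree construction.

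First I would instantiate the stack tree of Lemma~\ref{lem:stack-tree} with a different invariant on the tree sizes to absorb the larger join cost. An incremental join of $T_{i-1}$ into $T_i$ now takes $O(\log^2|T_i|)$ steps; to pay for this with $O(1)$ work per insertion via recursive slowdown, I would require $|T_i|$ to be a multiple of $2^{2^i}$ with $|T_i| = O(2^{2^{i+1}})$ exactly as in Section~\ref{sec:stacktree} --- the bound $\log^2|T_i| = O((2^{i+1})^2) = O(4^{i})$ is still geometric in~$i$, so $\sum_i 4^i$ over the relevant range of indices is $O(\log^2 n)$, and the counter/range-sequence mechanism triggering $c$ incremental steps on~$T_i$ whenever the $i$-th bit of~$N$ flips goes through verbatim, after rescaling the constant~$c$. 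The analogue of Lemma~\ref{lem:invariantstacktree} (that $T_{i+1}$ is unblocked whenever $T_i$ overflows) needs to be re-verified with the new timing: the join of $T'_i$ into $T_{i+1}$ completes within $O(4^{i+1}\cdot 2^{i+2})$ insertions, still polynomially smaller than the $2^{2^{i+1}}$ insertions needed to refill~$T_i$, so the argument is unchanged in form. Each node of each~$T_i$ stores the compact-interval-tree representation of $\ch$ of its associated subpath; \TR\ completes all pending joins and then joins all $T_i$'s left-to-right in $O(\log^2 n)$ time, yielding a single representation of~$\ch(\pi_R)$ (resp.~$\ch(\pi_F)$) of height $O(\log n)$, with rollback via a stack of recorded changes as in Section~\ref{sec:stacktree}.

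Next I would build the deque-path structure from two such one-sided structures exactly as the deque tree is built from two stack trees (Section~\ref{sec:dequetree}): maintain $\tfrac14 \le |\pi_R|/|\pi_F| \le 4$ by incremental global rebuilding~\cite{Overmars83}, re-splitting~$\pi$ at its midpoint vertex and rebuilding both halves over the next $\Omega(n)$ updates with $O(1)$ work each; building a one-sided structure from scratch on~$m$ vertices costs $O(m)$ using the linear-time convex-hull-of-a-simple-path algorithm~\cite{Melkman,ref:GrahamFi83} (and $O(m)$ for the compact interval tree representation), so the amortized-to-worst-case conversion goes through. For a \SQ\ or \HR\ query I would call \TR\ on both halves to obtain two compact-interval-tree representations of $\ch(\pi_R)$ and $\ch(\pi_F)$, then merge them once more in $O(\log^2 n) = O(\log n \cdot \log n)$ --- wait, this is $O(\log^2 n)$, which already dominates the claimed $O(\log n)$ bound, so here I must be more careful: instead of materializing $\ch(\pi)$, I answer the standard query directly on the two separate hull representations. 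Decomposable queries (extreme point, tangent) are answered on each of $\ch(\pi_R)$, $\ch(\pi_F)$ in $O(\log n)$ and combined in $O(1)$; bridge/line-intersection queries are answered by a simultaneous binary search across the two $O(\log n)$-height trees in $O(\log n)$ total, and membership reduces to a bridge query. For \HR\ we do pay the one-time $O(\log^2 n)$ merge --- but $\log^2 n = O(h + \log n)$ is false in general, so instead \HR\ should walk the two hull trees and the single common tangent between them, outputting vertices in $O(h)$ after an $O(\log n)$-time tangent computation between the two $O(\log n)$-height trees (the OvL-style binary search works on the compact interval tree as well), giving $O(h + \log n)$.

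The main obstacle I anticipate is the second bullet --- re-deriving the timing invariants of the stack tree (Observation~\ref{obser:treesize} and Lemma~\ref{lem:invariantstacktree}) under the $O(\log^2 n)$ join cost and, more delicately, verifying that the compact interval tree of~\cite{guibas1991compact} genuinely supports the operations we need in the claimed bounds: namely (i) merging two consecutive-subpath hulls in $O(\log|\ch(T_{i-1})|\cdot\log|\ch(T_i)|)$ time while \emph{producing} a compact interval tree for the union (not merely a query structure), (ii) splitting such a tree at a designated vertex, and (iii) supporting the standard queries \SQ\ on it in $O(\log n)$ --- together with confirming the total space stays $O(n)$ via~\cite{ref:WangAl20}. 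If the merge of~\cite{guibas1991compact} only answers subpath-hull queries rather than returning a reusable representation, an extra layer of bookkeeping (keeping the subpath hulls implicit and recomputing tangents on demand) would be needed, but the asymptotics would be unaffected since each such recomputation is again $O(\log^2 n)$ and charged to the incremental schedule.
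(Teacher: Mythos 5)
Your high-level skeleton (two stack-style structures joined by incremental global rebuilding, with the Guibas--Hershberger--Snoeyink $O(\log^2 n)$ merge playing the role of the $O(\log n)$ hull join) is exactly the ``tempting'' direct adaptation that the paper explicitly sets aside, and it does not deliver the claimed bounds for \SQ\ and \HR. Two concrete problems. First, in your structure a \TR\ on one half costs $O(\log^2 n)$ (you say so yourself: the pending joins and the left-to-right joins each cost $\sum_i O(4^i)=O(\log^2 n)$), yet your query algorithm begins by calling \TR\ on both halves; so \SQ\ as you describe it is $O(\log^2 n)$, not $O(\log n)$, and dropping only the final cross-merge does not fix this. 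Second, your \HR\ step asserts an ``$O(\log n)$-time tangent computation between the two $O(\log n)$-height trees'' via an OvL-style search and speaks of ``the single common tangent.'' This is precisely the path-challenge: $\ch(\pi_R)$ and $\ch(\pi_F)$ are hulls of consecutive subpaths, not linearly separated sets, they may intersect (up to twice, so up to two tangent edges, or one hull may contain the other), and the best known tangent computation here is the $O(\log|\ch(\pi_R)|\cdot\log|\ch(\pi_F)|)$ nested binary search of Lemma~\ref{lem:mergepaths}, which needs helper points from Lemma~\ref{lem:helper-points}. So your \HR\ is $O(h+\log^2 n)$ at best, which is exactly the bound the theorem is designed to beat.

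What is missing is the paper's actual mechanism for beating $\log^2 n$: the path is split into \emph{four} pieces, with tiny end structures $\pi_f,\pi_r$ of size $O(\log^2 n)$ maintained as stack trees (so flushing them into single trees of height $O(\log\log n)$ costs only $O(\log^2\log n)$), while the bulk lives in two middle BSTs $T_m^f,T_m^r$ that are changed only by incremental \CON\ operations, spread over $\Theta(\log^2 n)$ updates and made queryable during the transition via partial persistence (clean versions). Then \SQ\ touches only four trees, two of height $O(\log\log n)$ and two of height $O(\log n)$, giving $O(\log n)$ (with bridge queries handled by Lemma~\ref{lem:queries} on these $O(1)$ hulls). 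For \HR, the common tangents between $\ch(\pi_m^f)$ and $\ch(\pi_m^r)$ are maintained \emph{incrementally during updates} (including a token scheme in the rollback stack so that \DF/\DRR\ can re-run the tangent computation without breaking $O(1)$ worst-case deletions), and reporting uses a walking procedure with step size $\log n$ over at most four maximal hull portions, with a case analysis, to output $\ch(\pi)$ in $O(h+\log n)$. None of this appears in your proposal, and without it the stated $O(\log n)$ / $O(h+\log n)$ bounds do not follow; your rescaled recursive-slowdown invariants for the stack tree (your second paragraph) are fine and essentially match the paper's footnote, but they address the update cost, not the query cost, which is where the theorem's difficulty lies.
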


\subparagraph{Remark.} We will show in Section~\ref{sec:lower-bounds} that all these bounds are optimal even for the ``one-sided'' case. In particular, it is not possible to reduce the time of \HTR\ to $O(\log h)$ or reduce the time of \HR\ to $O(h)$. This is why we do not consider the one-sided simple path problem separately. For answering standard queries, our algorithm first constructs four BSTs representing convex hulls of four (consecutive) subpaths of $\pi$ whose union is $\pi$ and then uses these trees to answer queries. 
The heights of the two trees for the two middle subpaths are $O(\log n)$ while the heights of the other two are $O(\log\log n)$.
As such, all decomposable queries can be answered in $O(\log n)$ time. We show that certain non-decomposable queries can also be answered in $O(\log n)$ time, 
such as the bridge queries. 
We could further enhance the data structure so that 
a BST of height $O(\log h)$ that represents $\ch(\pi)$ can be obtained in $O(\log n\log\log n)$ time; 
this essentially performs the \HTR\ operation (as defined in Section~\ref{sec:2sidemontone}) in $O(\log n\log\log n)$ time.
\medskip

In what follows, we prove Theorem~\ref{thm:dpch}. One crucial property we rely on is that the convex hulls of two subpaths of a simple path intersect at most twice and thus have at most two common tangents as observed by Chazelle and Guibas~\cite{ref:ChazelleFr862}. Let $\pi_1$ and $\pi_2$ be two consecutive subpaths of $\pi$. Suppose we have two BSTs representing $\ch(\pi_1)$ and $\ch(\pi_2)$, respectively. Compared to the monotone path problem, one difficulty here (we refer to it as the ``path-challenge'') is that we do not have an $O(\log n)$ time algorithm to find the common tangents between $\ch(\pi_1)$ and $\ch(\pi_2)$ and thus merge the two hulls. The best algorithm we have takes $O(\log^2 n)$ time by a nested binary search, assuming that we have two ``helper points'': a point on each convex hull that is outside the other convex hull~\cite{guibas1991compact}.

It is tempting to apply the deque convex hull idea of Theorem~\ref{thm:deque-convex-hull} (i.e., instead of considering the points in left-to-right order, we consider the points in the ``path order'' along $\pi$). We could get the same result as in Theorem~\ref{thm:deque-convex-hull} except that the \HTR\ operation now takes $O(\log^2 n)$ time and \HR\ takes $O(h+\log^2 n)$ time due to the path-challenge. As such, our main effort below is to achieve $O(\log n)$ time for \SQ\ and $O(h+\log n)$ time for \HR.

Before presenting our data structure, we introduce in Section~\ref{sec:pathpre} several basic lemmas that will be frequently used later.

\subsection{Basic lemmas}
\label{sec:pathpre}

The following lemma was given in \cite{guibas1991compact}, and we sketch the proof here to make the paper more self-contained. Later, we will need to modify the algorithm for other purposes.

\begin{lemma} {\em (Guibas, Hershberger, and Snoeyink\cite[Lemma 5.1]{guibas1991compact})}
    \label{lem:helper-points}
    Let $\pi_1$ and $\pi_2$ be two consecutive subpaths of $\pi$. Suppose the convex hull $\ch(\pi_i)$ is stored in a BST of height $O(\log |\pi_i|)$, for $i=1,2$. We can do the following in $O(\log (|\pi_1| + |\pi_2|))$ time: Determine whether $\ch(\pi_2)$ is completely inside $\ch(\pi_1)$ and if not find a ``helper point'' $p \in \partial \ch(\pi_2)$ such that $p \in \partial \ch(\pi_1\cup \pi_2)$ and $p \notin \partial \ch(\pi_1)$.
\end{lemma}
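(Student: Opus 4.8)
The plan is to exploit the fact, due to Chazelle and Guibas, that $\partial\ch(\pi_1)$ and $\partial\ch(\pi_2)$ cross at most twice, so the portion of $\partial\ch(\pi_2)$ lying strictly outside $\ch(\pi_1)$ is a single contiguous arc. First I would handle the easy structural observations: since $\pi_1$ and $\pi_2$ are consecutive subpaths, they share an endpoint $v$ (the common junction vertex), and $v$ lies on both convex hulls — or more precisely $v$ is a vertex of $\pi_1$ and of $\pi_2$, hence $v\in\ch(\pi_1)\cap\ch(\pi_2)$, though not necessarily on either boundary. The key point is that if $\ch(\pi_2)$ is not contained in $\ch(\pi_1)$, then some vertex of $\ch(\pi_2)$ lies strictly outside $\ch(\pi_1)$, and by the two-crossing property the set of such vertices forms a contiguous run along $\partial\ch(\pi_2)$. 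Any single such vertex is a valid helper point $p$: it is on $\partial\ch(\pi_2)$ by construction, it is outside $\ch(\pi_1)$ so $p\notin\partial\ch(\pi_1)$, and being outside one of the two hulls it must be on $\partial\ch(\pi_1\cup\pi_2)$.

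So the task reduces to: in $O(\log(|\pi_1|+|\pi_2|))$ time, either certify $\ch(\pi_2)\subseteq\ch(\pi_1)$ or produce one vertex of $\ch(\pi_2)$ outside $\ch(\pi_1)$. I would do this with a binary search over the BST representing $\ch(\pi_2)$, guided by point-location queries against $\ch(\pi_1)$. Concretely: take the extreme point $q$ of $\ch(\pi_2)$ in some direction — say the direction maximizing distance from $v$, or simpler, check a constant number of extreme points of $\ch(\pi_2)$ (leftmost, rightmost, topmost, bottommost, and the shared vertex $v$). Using the BST of height $O(\log|\pi_1|)$ for $\ch(\pi_1)$, each "is this point inside $\ch(\pi_1)$?" test costs $O(\log|\pi_1|)$ time (a standard binary search for the two edges of $\ch(\pi_1)$ straddling the query point's direction from an interior point). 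If any tested vertex of $\ch(\pi_2)$ is outside $\ch(\pi_1)$, return it. Otherwise, the more delicate case is that all these landmark vertices are inside but some other vertex of $\ch(\pi_2)$ is still outside; here I would binary search on $\ch(\pi_2)$: since the outside-arc is contiguous, I can use the orientation of the supporting line of $\ch(\pi_1)$ at the relevant boundary edges to decide, at each internal node of the BST for $\ch(\pi_2)$, which subtree can contain an outside vertex, descending in $O(\log|\pi_2|)$ steps with an $O(\log|\pi_1|)$ point-location test at each — that would give $O(\log|\pi_1|\log|\pi_2|)$, which is too slow.

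The hard part, therefore, is keeping the total to $O(\log(|\pi_1|+|\pi_2|))$ rather than the product. The fix, which is the technical heart of the proof, is to run the two binary searches \emph{in lockstep}: maintain a candidate vertex $p'$ of $\ch(\pi_2)$ and a candidate edge $e'$ of $\ch(\pi_1)$, and at each step use a single $O(1)$-time orientation test (on $p'$ together with the endpoints of $e'$ and their neighbors) to decide whether to advance the search on $\ch(\pi_2)$, on $\ch(\pi_1)$, or to conclude — exactly the Overmars–van Leeuwen style simultaneous binary search, but adapted so that "$p'$ outside the current local wedge of $\ch(\pi_1)$" is detected as soon as it happens. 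This simultaneous search shrinks the sum of the two search intervals by a constant factor each step, giving $O(\log|\pi_1|+\log|\pi_2|)$ total; when it terminates it either reports a vertex of $\ch(\pi_2)$ outside $\ch(\pi_1)$ (the helper point) or certifies that every vertex of $\ch(\pi_2)$ is on the inner side of $\ch(\pi_1)$'s boundary, i.e.\ $\ch(\pi_2)\subseteq\ch(\pi_1)$. I would present this lockstep invariant carefully — that at every step the true outside-arc of $\ch(\pi_2)$, if nonempty, still lies within the maintained interval — as this is precisely where the at-most-two-crossings property is used, and refer to \cite{guibas1991compact,Overmars} for the routine orientation-case analysis.
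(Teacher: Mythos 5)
Your central structural claim is false, and it is load-bearing. You assert that any vertex of $\ch(\pi_2)$ lying strictly outside $\ch(\pi_1)$ is automatically on $\partial \ch(\pi_1\cup\pi_2)$. It is not: take $\ch(\pi_1)$ and $\ch(\pi_2)$ to be two triangles facing each other, with the vertex of $\ch(\pi_2)$ nearest to $\ch(\pi_1)$ lying outside $\ch(\pi_1)$ but strictly inside the quadrilateral formed by the four remaining vertices; that vertex is outside $\ch(\pi_1)$ yet interior to $\ch(\pi_1\cup\pi_2)$, so it is not a valid helper point. What is true is only that \emph{some} vertex of $\ch(\pi_2)$ outside $\ch(\pi_1)$ lies on the union hull, so even if your search succeeded in producing ``a vertex of $\ch(\pi_2)$ outside $\ch(\pi_1)$,'' the lemma's specification would not be met. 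To certify membership in $\partial\ch(\pi_1\cup\pi_2)$ you need the returned point to be extreme for $\pi_1\cup\pi_2$ in some direction, which your algorithm never establishes.

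The algorithmic core is also unsubstantiated. When the current candidate vertex of $\ch(\pi_2)$ lies inside $\ch(\pi_1)$, no $O(1)$ local orientation test tells you on which side of it the (possibly empty) outside arc lies: the two-crossing property of Chazelle--Guibas makes the outside vertices contiguous, but it does not supply a monotone predicate to steer a binary search, and the Overmars--van Leeuwen lockstep search relies on a separating structure that is absent here (this is exactly the ``path-challenge'' the paper discusses -- even merging the two hulls once helper points are known costs $O(\log^2 n)$). The paper's proof avoids both problems by exploiting the path, not just the two hulls: the junction vertex $q_1$ of $\pi_1$ either lies on $\partial\ch(\pi_1)$ or sits in a single ``bay'' bounded by one hull edge $e$ of $\ch(\pi_1)$, and since $\pi$ is simple, the path $\overline{q_1q_2}\cup\pi_2$ can escape $\ch(\pi_1)$ only across $e$ (or across one of the two hull edges incident to $q_1$). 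So one binary search on $\ch(\pi_1)$ (using the bimodal vertex-index sequence) locates $e$ in $O(\log|\pi_1|)$ time, and a single extreme-point query on $\ch(\pi_2)$ along the outward normal $\rho$ of $e$ takes $O(\log|\pi_2|)$ time; if the returned vertex $p$ lies beyond the supporting line of $e$, it beats every point of $\ch(\pi_1)$ and of $\ch(\pi_2)$ in direction $\rho$, hence $p\in\partial\ch(\pi_1\cup\pi_2)$ and $p\notin\partial\ch(\pi_1)$, and otherwise $\ch(\pi_2)\subseteq\ch(\pi_1)$. This simultaneously yields the sum bound $O(\log|\pi_1|+\log|\pi_2|)$ and the certification your proposal lacks.
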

\begin{proof}
    We sketch the proof here. See \cite[Lemma 5.1]{guibas1991compact} for the details. Since $\pi_1$ and $\pi_2$ are consecutive subpaths of $\pi$, let $q_1\in \pi_1$ and $q_2\in \pi_2$ be consecutive vertices of $\pi$.

    If $q_1$ is in the interior of $\ch(\pi_1)$, then $q_1$ is in a single ``bay'' with a hull edge $e$ such that: if $\ch(\pi_2)$ is inside $\ch(\pi_1)$, then $\pi_2$ must be inside the bay; otherwise, the path $\overline{q_1q_2}\cup \pi_2$ must cross $e$ (see the left of Figure~\ref{fig:pstar}). The edge $e$ can be computed in $O(\log |\pi_1|)$ time by binary search since indices of vertices of $\ch(\pi_1)$ form a bimodal sequence\cite{guibas1991compact}. Let $\rho$ be the normal of $e$ toward outside $\ch(\pi_1)$. We compute the most extreme vertex $p$ of $\ch(\pi_2)$ along $\rho$, which takes $O(\log |\pi_2|)$ time. Then, $\ch(\pi_2)$ is inside $\ch(\pi_1)$ if and only if $p$ is in the same side of $e$ as $\ch(\pi_1)$. If $p$ is on the opposite side of $e$ as $\ch(\pi_1)$, then $p$ is a helper point as defined in the lemma statement.

    \begin{figure}[t]
        \centering
        \includegraphics[width=1.8in]{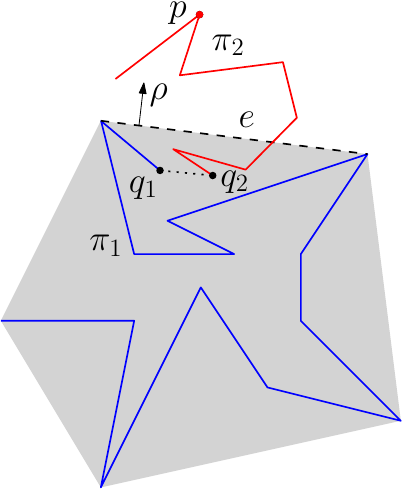}
        \hspace{0.7in}
        \includegraphics[width=1.8in]{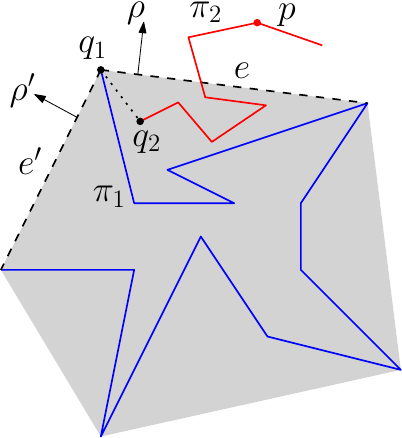}
        \caption{Illustrating the two cases in the proof of Lemma~\ref{lem:helper-points}. The grey region in each case is $\ch(\pi_1)$. $\rho$ is the normal of $e$ and $\rho'$ is the normal of $e'$.}
        \label{fig:pstar}
    \end{figure}

    If $q_1\in \partial\ch(\pi_1)$, then $q_1$ is a common vertex of two edges $e$ and $e'$ of $H_1$ (see the right of Figure~\ref{fig:pstar}), then we can determine whether $\ch(\pi_2)$ is inside $\ch(\pi_1)$ and if not find a helper point on $\partial \ch(\pi_2)$, using the normals of $e$ and $e'$, in a similar way to the above. The total time is still $O(\log (|\pi_1| + |\pi_2|))$.
\end{proof}

The following lemma is also from \cite{guibas1991compact} and uses Lemma~\ref{lem:helper-points}. We again sketch the proof.
\begin{lemma} {\em (Guibas, Hershberger, and Snoeyink\cite[Section 2]{guibas1991compact})}
    \label{lem:mergepaths}
    Let $\pi_1$ and $\pi_2$ be two consecutive subpaths of $\pi$. Suppose the convex hull $\ch(\pi_i)$ is stored in a BST of height $O(\log |\pi_i|)$, $i=1,2$. We can compute a BST of height $O(\log (|\pi_1|+|\pi_2|))$ that stores the convex hull of $\pi_1\cup \pi_2$ in $O(\log|\pi_1|\cdot\log |\pi_2|)$ time.
\end{lemma}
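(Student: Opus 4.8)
The plan is to reduce merging two convex hulls of consecutive subpaths to repeated applications of Lemma~\ref{lem:helper-points}, using the fact (Chazelle--Guibas) that $\ch(\pi_1)$ and $\ch(\pi_2)$ have at most two common tangents. First I would call Lemma~\ref{lem:helper-points} in both directions: once with $(\pi_1,\pi_2)$ to decide whether $\ch(\pi_2)\subseteq\ch(\pi_1)$ and, if not, obtain a helper point $p_2\in\partial\ch(\pi_2)$ lying on $\partial\ch(\pi_1\cup\pi_2)$ but not on $\partial\ch(\pi_1)$; and symmetrically with $(\pi_2,\pi_1)$ to decide whether $\ch(\pi_1)\subseteq\ch(\pi_2)$ and otherwise get a helper point $p_1\in\partial\ch(\pi_1)\cap\partial\ch(\pi_1\cup\pi_2)$, $p_1\notin\partial\ch(\pi_2)$. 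Each such call costs $O(\log(|\pi_1|+|\pi_2|))$ time. If either hull is contained in the other, the larger one already is $\ch(\pi_1\cup\pi_2)$ and we are done (after possibly re-balancing), so assume both helper points exist.

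Once we have a helper point $p_1$ on $\ch(\pi_1)$ that is a vertex of the combined hull and outside $\ch(\pi_2)$, and similarly $p_2$, the two common tangents between $\ch(\pi_1)$ and $\ch(\pi_2)$ can be found by a nested binary search: the arcs of $\partial\ch(\pi_1)$ and $\partial\ch(\pi_2)$ are split by $p_1$ (resp.\ $p_2$) and the helper points certify on which side of each tangent we are, so a standard two-level binary search over the two BSTs locates each tangent in $O(\log|\pi_1|\cdot\log|\pi_2|)$ time. This is exactly the ``nested binary search'' alluded to in the path-challenge discussion, and it is where the $\log|\pi_1|\cdot\log|\pi_2|$ factor (rather than an additive $\log(|\pi_1|+|\pi_2|)$) comes from: for each of the $O(\log|\pi_1|)$ candidate positions on $\ch(\pi_1)$ we spend $O(\log|\pi_2|)$ time probing $\ch(\pi_2)$. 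Having located the two common tangents $\overline{a_1b_1}$ and $\overline{a_2b_2}$ with $a_1,a_2\in\partial\ch(\pi_1)$ and $b_1,b_2\in\partial\ch(\pi_2)$, the boundary of $\ch(\pi_1\cup\pi_2)$ consists of one arc of $\partial\ch(\pi_1)$ from $a_1$ to $a_2$ and one arc of $\partial\ch(\pi_2)$ from $b_2$ to $b_1$, glued by the two tangent edges.

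To produce the output BST, I would split the BST storing $\ch(\pi_1)$ at $a_1$ and $a_2$ to extract the relevant arc, split the BST storing $\ch(\pi_2)$ at $b_1$ and $b_2$ likewise, and then concatenate the two resulting subtrees (plus the two tangent edges) into a single balanced BST. Each split/concatenate on a balanced BST (e.g.\ a weight-balanced or red-black tree) of size $m$ takes $O(\log m)$ time and yields a tree of height $O(\log m)$, so the final tree has height $O(\log(|\pi_1|+|\pi_2|))$ and the gluing step adds only $O(\log(|\pi_1|+|\pi_2|))$ to the running time, which is dominated by the $O(\log|\pi_1|\cdot\log|\pi_2|)$ cost of the tangent search. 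Summing up, the whole procedure runs in $O(\log|\pi_1|\cdot\log|\pi_2|)$ time and returns a BST of the required height, proving the lemma.

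The main obstacle I expect is not the BST bookkeeping but setting up the nested binary search correctly: one must argue that, given a helper point on each hull that is guaranteed to be a vertex of the combined hull and strictly outside the other hull, the usual ``turn-direction'' tests used for tangent-finding between two disjoint convex polygons remain valid even though $\ch(\pi_1)$ and $\ch(\pi_2)$ may overlap. The key geometric fact making this work is the Chazelle--Guibas bound of at most two intersection points (hence at most two common tangents), which rules out the pathological configurations that would otherwise break a binary search; carefully invoking this, and handling the degenerate sub-cases (a helper point coinciding with a tangent endpoint, or the two tangents sharing a vertex), is the delicate part of the argument. Since this lemma is quoted from \cite{guibas1991compact}, I would present only the sketch at this level of detail and refer the reader there for the full case analysis.
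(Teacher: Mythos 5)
Your proposal is correct and follows essentially the same route as the paper's own (sketched) proof: use Lemma~\ref{lem:helper-points} to detect containment and otherwise obtain a helper point on each hull, then find the two common tangents by the nested binary search of Guibas, Hershberger, and Snoeyink in $O(\log|\pi_1|\cdot\log|\pi_2|)$ time, and finally assemble the output tree by splitting and joining the two BSTs at the tangent endpoints. The extra detail you give on the split/concatenate bookkeeping is consistent with, and slightly more explicit than, what the paper states.
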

\begin{proof}
    We first use Lemma~\ref{lem:helper-points} to check whether one of the two convex hulls $\ch(\pi_1)$ and $\ch(\pi_2)$ contains the other. If yes, then we simply return the tree of the larger convex hull. Otherwise, we compute a helper point $p_1\in \partial \ch(\pi_1)$ and a helper point $p_2\in \partial \ch(\pi_2)$ by Lemma~\ref{lem:helper-points}. Using the two helper points, the two common tangents of $\ch(\pi_1)$ and $\ch(\pi_2)$ can be computed in $O(\log |\pi_1|\cdot \log |\pi_2|)$ time by a nested binary search~\cite{guibas1991compact}. Consequently, a tree of height $O(\log (|\pi_1|+|\pi_2|))$ that stores $\ch(\pi_1\cup \pi_2)$ can be obtained in $O(\log |\pi_1|\cdot \log |\pi_2|)$ time from the two trees for $\ch(\pi_1)$ and $\ch(\pi_2)$.
\end{proof}

The following lemma provides a basic tool for answering bridge queries.

\begin{lemma}
    \label{lem:queries}
    Let $H_1, H_2, \ldots, $ be a collection of $O(1)$ convex polygons, each represented by a BST or an array so that binary search on each convex hull can be supported in $O(\log n)$ time. Let $H$ be the convex hull of all these convex polygons. We can answer the following queries in $O(\log n)$ time each, where $n$ is the total number of vertices of all these convex polygons.
    \begin{enumerate}
        \item Bridge queries: Given a query line $\ell$, determine whether $\ell$ intersects $H$, and if yes, find the edges of $H$ that intersect $\ell$.
        \item Given a query point $p$, determine whether $p \in H$, and if yes, further determine whether $p \in \partial H$.
    \end{enumerate}
\end{lemma}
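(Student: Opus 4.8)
The plan is to reduce both query types to the case of a \emph{single} convex polygon stored in an $O(\log n)$-searchable form: first combine $H_1,H_2,\dots$ into such a representation of their overall convex hull $H$, and then answer each query directly on $H$. Since there are only $O(1)$ input polygons, it suffices to give an $O(\log n)$-time routine that, given two convex polygons each stored as a constant number of \emph{pieces} --- where a piece is a contiguous portion of one of the given BSTs or arrays, or a single bridge segment, and binary search over all pieces of a polygon costs $O(\log n)$ --- outputs such a representation of the convex hull of their union. Starting from $G:=H_1$ and merging in $H_2,H_3,\dots$ one at a time, we obtain $H$ after $O(1)$ merges, hence in $O(\log n)$ total time, with $H$ still described by $O(1)$ pieces and supporting $O(\log n)$-time binary search.

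For the merge of two convex polygons $G$ and $H$ in this form, I would first test whether one contains the other and, if so, return the larger one. Otherwise $\partial\ch(G\cup H)$ is exactly one contiguous arc of $\partial G$, a bridge, one contiguous arc of $\partial H$, and a second bridge, where each bridge is a common outer tangent of $G$ and $H$ (a segment between a vertex of each whose supporting line has both polygons weakly on one side). Both the containment test and the two tangents can be computed in $O(\log n)$ time by binary search over the $O(1)$ pieces of $G$ and $H$, following the common-tangent computations of Overmars and van Leeuwen~\cite{Overmars} (their $O(\log n)$ bound is stated for horizontally separated hulls; the general convex-polygon case is handled analogously --- see below). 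Given the two tangent endpoints on $G$ and on $H$, the new piece list consists of the two relevant sub-arcs (each the concatenation of at most the $O(1)$ pieces it spans, with its two extreme pieces truncated) together with the two bridge segments; this is again $O(1)$ pieces and remains $O(\log n)$-searchable. Note that every piece of the resulting description of $H$ lies on $\partial H$.

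Once $H$ is available in this form, the two queries are standard. For a bridge query with a line $\ell$: compute the two vertices of $H$ extreme in the directions normal to $\ell$, each as the most extreme over the $O(1)$ pieces in $O(\log n)$ time; if they lie strictly on the same side of $\ell$ then $\ell$ misses $H$, and otherwise $\ell$ crosses $\partial H$ in two edges, which are found by intersecting $\ell$ with each piece by binary search in $O(\log n)$ time each (a sub-chain of a convex polygon is crossed by a line in at most two points), and since all pieces lie on $\partial H$, exactly the two sought edges are recovered. For a point query with $p$: compute an interior point $o$ of $H$ as the centroid of three vertices of $H$ that are extreme in three generic directions (if these three vertices are collinear, $H$ is a segment or a point and the query is immediate); then $p\in H$ iff the ray from $o$ through $p$ first meets $\partial H$ at a point $z$ with $o,p,z$ in this order along the ray, so $p$ is strictly interior when $p\neq z$ and $p\in\partial H$ when $p=z$, and $z$ is located by the same line/boundary binary search restricted to the forward ray --- all in $O(\log n)$.

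The main obstacle is the merge routine, specifically the common-tangent step: computing the two outer common tangents of two convex polygons by binary search is classical when they are separated by a line, but care is needed in the general case where $G$ and $H$ overlap (including detecting containment) and in carrying out the search over a polygon presented as several separate searchable pieces rather than a single structure. A secondary source of fiddliness is degeneracies --- collinear vertex triples, $\ell$ passing through a vertex of $H$, $p$ lying on $\partial H$, or $H$ being lower-dimensional --- which I would dispatch by the standard case analysis.
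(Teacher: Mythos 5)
There is a genuine gap, and it sits exactly where you yourself flag "the main obstacle": the claimed $O(\log n)$ merge of two arbitrary convex polygons. First, your structural premise is false in general: for two convex polygons $G$ and $H$ that overlap, $\partial \ch(G\cup H)$ need not be one arc of $\partial G$, one arc of $\partial H$, and two bridges. Take two concentric regular $n$-gons, one rotated slightly; all $2n$ vertices lie on the hull of the union and the boundary alternates between the two polygons, so there are $\Theta(n)$ bridge edges. Since Lemma~\ref{lem:queries} is stated for arbitrary convex polygons (and even in this paper's application the pieces are convex hulls of subpaths, where only consecutive subpaths are guaranteed to cross at most twice), your plan of representing $H$ by $O(1)$ pieces of the input structures collapses at the first merge. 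Second, even when the two-bridge picture does hold, the step "compute the containment test and the two outer common tangents in $O(\log n)$ by binary search following Overmars--van Leeuwen; the general case is handled analogously" is unsupported. The OvL binary search prunes half of one chain at each step precisely because a separating line tells it which side to discard; without separation this invariant fails. Indeed, this paper explicitly identifies the lack of an $O(\log n)$ tangent/merge algorithm for two possibly-overlapping hulls as the "path-challenge": the best available bound is the $O(\log^2 n)$ nested binary search of Guibas--Hershberger--Snoeyink (Lemma~\ref{lem:mergepaths}), and even that needs helper points and consecutiveness (Lemma~\ref{lem:helper-points}). So your reduction "first build $H$, then query it" cannot be carried out within the stated time bound by the tools you invoke.

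The paper's proof avoids building $H$ altogether by using the query line $\ell$ itself as the separator. Assuming $\ell$ vertical, it first decides intersection via the leftmost/rightmost vertices of the $H_i$'s, then observes that the upper-hull edge $e^*$ of $H$ crossed by $\ell$ is either an edge of a single $H_i$ crossing $\ell$ or the upper tangent of some pair $H_i,H_j$; for each ordered pair it splits $H_i$ and $H_j$ at $\ell$ and computes the upper tangent between the part of one polygon left of $\ell$ and the part of the other right of $\ell$ --- these are separated by $\ell$, so OvL's $O(\log n)$ search genuinely applies --- and then takes, among a constant number of candidate edges, the one crossing $\ell$ highest. The point-location query is reduced to a bridge query on the vertical line through $p$. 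If you want to salvage your approach, you would have to either restrict to polygons pairwise crossing at most twice and supply an actual sublogarithmic-free tangent algorithm without a separating line (e.g., via tentative prune-and-search, which is a substantially different argument), or adopt the paper's per-query, $\ell$-separated tangent computations.
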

\begin{proof}
    It is not difficult to see that the second query can be reduced to the bridge queries (e.g., we can apply a bridge query on the vertical line $\ell$ through $p$, and the answer to the query on $p$ can be obtained based on the bridge query outcome). As such, in what follows, we focus on the bridge queries. Without loss of generality, we assume that $\ell$ is vertical.

    We first find the leftmost vertex $p_1$ of $H$, which can be done in $O(\log n)$ by finding the leftmost vertex of each $H_i$. Similarly, we find the rightmost vertex $p_2$ of $H$. Then, $\ell$ intersects~$H$ if and only if $\ell$ is between $p_1$ and $p_2$. Assuming that $\ell$ is between them, we next compute the edges of $H$ that intersect $\ell$. For ease of discussion, we assume that $\ell$ does not contain any vertex of any convex hulls. As such, $\ell$ intersects exactly one edge of the upper hull of $H$ and intersects exactly one edge of the lower hull. We only discuss how to find the edge on the upper hull, denoted by $e^*$, since the edge on the lower hull can be found similarly.

    Observe that there are two cases for $e^*$: (1) The two vertices of $e^*$ are from the same polygon $H_i$; (2) the two vertices of $e^*$ are from two different polygons $H_i$ and $H_j$, respectively. In the first case, $e^*$ is the edge of the upper hull of $H_i$ intersecting $\ell$. In the second case, $e^*$ is the upper tangent of $H_i$ and $H_j$, and further, $e^*$ is the edge of the upper hull of $\ch(H_i\cup H_j)$ that intersects $\ell$. Based on this observation, our algorithm works as follows.

    For any two pairs of convex polygons $H_i$ and $H_j$ whose convex hull $\ch(H_i\cup H_j)$ intersects~$\ell$ (again this can be determined by finding the leftmost and rightmost vertices of both $H_i$ and~$H_j$), we wish to compute the edge $e_{ij}$ of the upper hull of $\ch(H_i\cup H_j)$ intersecting $\ell$. Then, based on the above observation, among the edges $e_{ij}$ for all such pairs $H_i$ and $H_j$, $e^*$~is the one whose intersection with $\ell$ is the highest. We next show that $e_{ij}$ can be found in $O(\log n)$ time, meaning that $e^*$ can be found in $O(\log n)$ time too as there are $O(1)$ convex hulls.

    Let $D_i$ be the data structure (an array or a BST) storing $H_i$. Let $H_i^l$ and $H_i^r$ denote the portions of $H_i$ to the left and right of $\ell$, respectively. Regardless of whether $D_i$ is an array or a BST, binary search on each of $H_i^l$ and $H_i^r$ can be supported in $O(\log n)$ time by $D_i$. Define $D_j$, $H_j^l$, and $H_j^r$ similarly.

    Let $e^1_{ij}$ be the edge of the upper hull of $\ch(H_i^l\cup H_j^r)$ that intersects $\ell$. Let $e^2_{ij}$ be the edge of the upper hull of $\ch(H_j^l\cup H_i^r)$ that intersects $\ell$. Observe that among the above two edges, $e_{ij}$ is the one whose intersection with $\ell$ is higher. Hence, to compute $e_{ij}$, it suffices to compute $e^1_{ij}$ and $e^2_{ij}$. To compute $e^1_{ij}$, as $H_i^l$ and $H_j^r$ are separated by $\ell$, their upper common tangent $e$ can be computed in $O(\log n)$ time with the data structures $D_i$ and $D_j$, using the binary search algorithm of Overmars and van Leeuwen~\cite{Overmars}. We also compute the edge $e_i$ on the upper hull of $H_i$ intersecting $\ell$ and the edge $e_j$ of the upper hull of $H_j$ intersecting~$\ell$, which take $O(\log n)$ time. Among the three edges $e_i$, $e_j$, and $e$, the edge $e^1_{ij}$ is the one whose intersection with $\ell$ is the highest. As such, $e^1_{ij}$ can be computed in $O(\log n)$ time. The edge $e^2_{ij}$ can be computed analogously in $O(\log n)$ time. Therefore, $e_{ij}$ can be computed in $O(\log n)$ time. In summary, each bridge query can be answered in $O(\log n)$ time.
\end{proof}

\subsection{Structure of the deque path convex hull \texorpdfstring{$\boldsymbol{PH(\pi)}$}{PH(pi)}}
\label{sec:ph-structure}
For convenience, we assume that $|\pi|$ is always greater than a constant number (e.g., $|\pi|> 8$). 
We partition $\pi$ into four (consecutive) subpaths $\pi_r$, $\pi_m^r$, $\pi_m^f$, and $\pi_f$ in the order from the front to the rear of $\pi$. As such, $\pi_f$ and $\pi_r$ contain the front and rear ends, respectively. 
Further, let $\pi^+=\pi_f\cup\pi_m^f$ and $\pi^-=\pi_r\cup\pi_m^r$. Our algorithm maintains the following invariants.
\begin{enumerate}
    \item $\frac{1}{4} \leq  |\pi^+| / |\pi^-| \leq  4$.
    \item $|\pi_f| = O(\log^2 |\pi^+|)$ and $|\pi_r| = O(\log^2 |\pi^-|)$.
\end{enumerate}

Note that the invariants imply that $|\pi_f|, |\pi_r|=O(\log^2n)$ and $|\pi_m^f|, |\pi_m^r|=\Theta(n)$, where $n=|\pi|$. The first invariant resembles the partition of $P$ by a dividing line $\ell$ in our deque tree in Section~\ref{sec:dequetree}. As with the deque tree, in order to maintain the first invariant, whenever $|\pi^+| / |\pi^-|<1/2$ or $|\pi^+| / |\pi^-| > 2$, we pick a new center to partition $\pi$ and start to build a deque path convex hull data structure by incrementally inserting points around this new center toward both front and rear ends. As will be seen later, since each insertion takes $O(1)$ time, it takes $O(n)$ time to build this new data structure. Thus, we can spread the incremental work over the next $\Theta(n)$ updates so that only $O(1)$ incremental work on each update is incurred to maintain this invariant.

\begin{figure}[t]
    \centering
    \includegraphics[width=\linewidth]{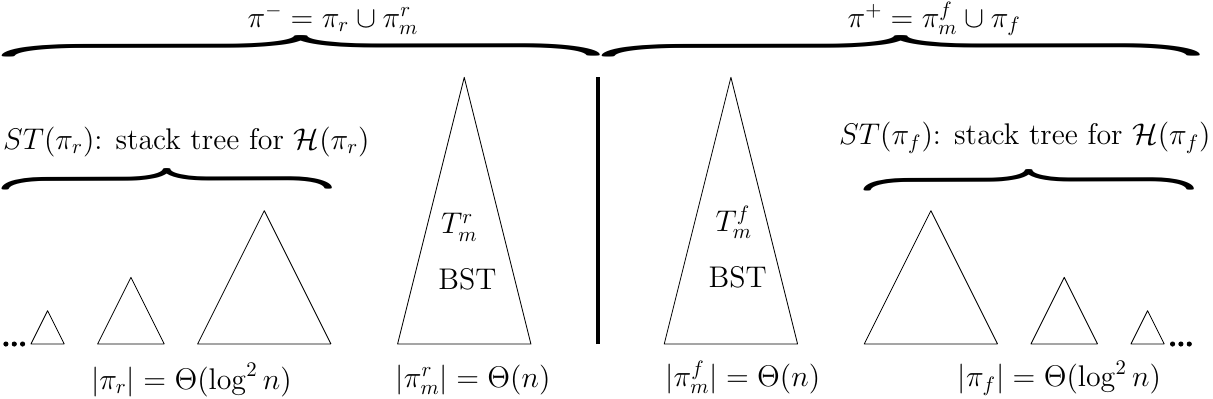}
    \caption{A schematic view of the deque path convex hull data structure $PH(\pi)$.}
    \label{fig:path-convex-hull}
\end{figure}

We use a stack tree $ST(\pi_f)$ to maintain the convex hull $\ch(\pi_f)$, with the algorithm in Lemma~\ref{lem:mergepaths} for merging two hulls of two consecutive subpaths. More specifically, we consider the vertices of $\pi_f$ following their order along the path (instead of left-to-right order as in Section~\ref{sec:stacktree}) with insertions and deletions only at the front end. Whenever we need to join two neighboring trees, we merge the two hulls of their corresponding subpaths by Lemma~\ref{lem:mergepaths}.\footnote{The analysis is slightly different since we use a merging algorithm of bigger complexity. More specifically, merging the hulls represented by the trees $T_{i}$ and $T_{i+1}$ now takes $O(\log^2|T_i|)$ time, which is $O(2^{2i+2})$. Nevertheless, we can still guarantee that the current incremental joining process for $T_{i+1}$ is completed before the next joining process starts by slightly modifying the proof of Lemma~\ref{lem:invariantstacktree}. More specifically, by making the constant $c$ slightly larger, the joining process can be completed within $2^{3i-1}$ insertions. After $2^{3i-1}$ insertions, the total number of points that can be inserted to $T_{i}$ is at most $2\cdot 2^{2^{i}}+2^{3i-1}$, which is still smaller than $2^{2^{i+1}}$ for $i\geq 1$.} Due to the second invariant, merging all trees of $ST(\pi_f)$ takes $O(\log^2\log n)$ time, after which we obtain a single tree of height $O(\log\log n)$ that represents $\ch(\pi_f)$. Similarly, we build a stack tree $ST(\pi_r)$ for $\ch(\pi_r)$ but along the opposite direction of the path. See Figure~\ref{fig:path-convex-hull} for an illustration.

Define $n^+=|\pi^+|$. 
In order to maintain the second invariant, when $\pi_f$ is too big due to insertions, we will cut a subpath of length $\Theta(\log^2 n^+)$ and concatenate it with $\pi_m^f$. When $\pi_f$ becomes too small due to deletions, we will split a portion of $\pi_m^f$ of length $\Theta(\log^2 n^+)$ and merge it with $\pi_f$; but this split is done implicitly using the rollback stack for deletions. As such, we need to build a data structure for maintaining $\pi_m^f$ so that the above concatenate operation on $\pi_m$ can be performed in $O(\log^2 n^+)$ time (this is one reason why the bound for~$\pi_f$ in the second invariant is set to $O(\log^2 n^+)$). We process $\pi^-$ in a symmetric way. 
The way we handle the interaction between $\pi_m^f$ and $\pi_f$ (as well as their counterpart for $\pi^-$) are one main difference from our approach for the two-sided monotone path problem in Section~\ref{sec:2sidemontone}; again this is due to the path-challenge.

Our data structure for $\pi_m^f$ is simply a balanced BST $T_m^f$, which stores the convex hull~$\ch(\pi_m^f)$. In particular, we will use $T_m^f$ to support the above concatenation operation (denoted by \CON) in $O(\log^2 n)$ time. For reference purpose, this is summarized in the following lemma, which is an immediate application of Lemma~\ref{lem:mergepaths}. 

\begin{lemma}
    \label{lem:pcht}
    Given a BST of height $O(\log|\tau|)$ representing a simple path $\tau$ of length $O(\log^2n)$ such that the concatenation of $\pi_f^m$ and $\tau$ is still a simple path, we can perform the following \CON\ operation in $O(\log^2n)$ time: Obtain a new tree $T_m^f$ of height $O(\log n)$ that represents the convex hull $\ch(\pi_m^f)$, where $\pi_m^f$ is the new path after concatenating with $\tau$.
\end{lemma}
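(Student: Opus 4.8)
The plan is to reduce this directly to Lemma~\ref{lem:mergepaths} with $\pi_1 = \pi_m^f$ (the current middle subpath) and $\pi_2 = \tau$. The hypothesis of Lemma~\ref{lem:pcht} already gives us a BST of height $O(\log|\tau|)$ representing $\ch(\tau)$, and our data structure for $\pi_m^f$ maintains a balanced BST $T_m^f$ of height $O(\log|\pi_m^f|) = O(\log n)$ representing $\ch(\pi_m^f)$; thus the two input requirements of Lemma~\ref{lem:mergepaths} are met. Since the concatenation of $\pi_m^f$ and $\tau$ is assumed to be a simple path, $\pi_m^f$ and $\tau$ are two consecutive subpaths of this simple path, so Lemma~\ref{lem:mergepaths} applies and produces a BST of height $O(\log(|\pi_m^f| + |\tau|)) = O(\log n)$ representing $\ch(\pi_m^f \cup \tau)$, which is exactly $\ch(\pi_m^f)$ for the new path. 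This becomes the new tree $T_m^f$.

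For the running time, Lemma~\ref{lem:mergepaths} costs $O(\log|\pi_m^f| \cdot \log|\tau|)$ time. First I would bound $\log|\pi_m^f| = O(\log n)$ since $\pi_m^f$ is a subpath of $\pi$, and $\log|\tau| = O(\log\log^2 n) = O(\log\log n) = O(\log n)$ using the stated bound $|\tau| = O(\log^2 n)$. Hence the total cost is $O(\log n \cdot \log n) = O(\log^2 n)$, matching the claimed bound. (In fact the bound is slightly better, $O(\log n \log\log n)$, but $O(\log^2 n)$ suffices for our purposes and is what the lemma states.)

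I expect there to be essentially no real obstacle here, since the lemma is stated in the excerpt as ``an immediate application of Lemma~\ref{lem:mergepaths}''; the only point requiring a sentence of care is checking that $\pi_m^f$ and $\tau$ are genuinely \emph{consecutive} subpaths of a common simple path, which is precisely what the hypothesis ``the concatenation of $\pi_m^f$ and $\tau$ is still a simple path'' guarantees, so that the at-most-two-common-tangents structure underlying Lemma~\ref{lem:mergepaths} (observed by Chazelle and Guibas~\cite{ref:ChazelleFr862}) is available. One should also note that the input tree for $\tau$ and the tree $T_m^f$ for $\pi_m^f$ are on disjoint vertex sets and together cover the new middle subpath, so no further cleanup is needed beyond taking the output of Lemma~\ref{lem:mergepaths} as the new $T_m^f$.
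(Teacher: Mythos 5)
Your proposal is correct and matches the paper's own treatment: the paper presents Lemma~\ref{lem:pcht} as an immediate application of Lemma~\ref{lem:mergepaths} to the consecutive subpaths $\pi_m^f$ and $\tau$, exactly as you do. Your time bound $O(\log n\cdot\log\log n)\subseteq O(\log^2 n)$ and the check that the simplicity of the concatenation supplies the required consecutiveness are both in order.
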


Similarly, we use a balanced BST $T_m^r$ to store the convex hull $\ch(\pi_m^r)$. We have a similar lemma to the above for the \CON\ operation on $\pi_m^r$.

The four trees $ST(\pi_r)$, $T_m^r$, $T_m^f$, and $ST(\pi_f)$ constitute our deque path convex hull data structure $PH(\pi)$ for Theorem~\ref{thm:dpch}; see Figure~\ref{fig:path-convex-hull}. In the following, we discuss the operations.

\subsection{Standard queries}

For answering a decomposible query $\sigma$, we first perform a \TR\ operation on $ST(\pi_f)$ to obtain a tree $T_f$ that represents $\ch(\pi_f)$. Since $|\pi_f|=O(\log^2 n)$, this takes $O(\log^2\log n)$ time as discussed before. We do the same for $ST(\pi_r)$ to obtain a tree $T_r$ for~$\ch(\pi_r)$. 
Recall that the tree $T_m^f$ stores $\ch(\pi_m^f)$ while $T_m^r$ stores $\ch(\pi_m^r)$. 
We perform query~$\sigma$ on each of the above four trees. Based on the answers to these trees, we can obtain the answer to the query~$\sigma$ for $\ch(\pi)$ because $\sigma$ is a decomposable query. Since the heights of~$T_f$ and $T_r$ are both $O(\log\log n)$, and the heights of $T_m^f$ and $T_m^r$ are $O(\log n)$, the total query time is $O(\log n)$.

If $\sigma$ is a bridge query, we apply Lemma~\ref{lem:queries} on the above four trees. The query time is $O(\log n)$.

\subsection{Insertions and deletions}
\label{sec:phupdate}

We now discuss the updates. 
\IF\ and \DF\ will be handled by the data structure for $\pi^+$, i.e., $T_m^f$ and $ST(\pi_f)$, while  \IRR\ and \DRR\ will be handled by the data structure for $\pi^-$.

\subparagraph{\IF.}
Suppose we insert a point $p$ to the front end of $\pi$. We first perform the insertion using the stack tree $ST(\pi_f)$. To maintain the second invariant, we must handle the interaction between the largest tree $T_k$ of $ST(\pi_f)$ and the tree $T_m^f$. Recall that $n^+=|\pi^+|$ and $n^+=\Theta(n)$. 

According to the second invariant and the definition of the stack tree $ST(\pi_f)$, we have $|T_k|=O(\log^2 n^+)$, and we can assume a constant $c$ such that the total size of all trees of $ST(\pi_f)$ smaller than $T_k$ is at most $c\cdot \log^2 n^+$. We set the size of $T_k$ to be $(c+1)\cdot \log^2 n^+$. During the algorithm, whenever $|T_k|> (c+1)\cdot \log^2 n^+$ and there is no incremental process of joining $T_{k-1}$ with $T_k$, we let $T_k'=T_k$ and let $T_k=\emptyset$, and then start to perform an incremental \CON\ operation to concatenate $T_k'$ with $T_m^f$. The operation takes $O(\log^2n^+)$ time by Lemma~\ref{lem:pcht}. We choose a sufficiently large constant $c_1$ so that each \CON\ operation can be finished within $c_1\cdot \log^2n^+$ steps. For each \IF\ in future, we run $c_1$ steps of this \CON\ algorithm. As such, within the next $\log^2 n^+$ \IF\ operations in future, the \CON\ operation will be completed. If there is an incremental \CON\ operation (that is not completed), then we say that $T_m^f$ is {\em dirty}; otherwise, it is {\em clean}.

If $T_m^f$ is dirty, an issue arises during a \SQ\ operation. Recall that during a \SQ\ operation, we need to perform queries on $\ch(\pi_m^f)$ by using the tree~$T_m^f$. However, if $T_m^f$ is dirty, we do not have complete information for $T_m^f$. To address this issue, we resort to persistent data structures~\cite{ref:DriscollMa89,ref:SarnakPl86}. Specifically, we use a persistent tree for $T_m^f$ so that if there is an incremental \CON\ operation, the old version of $T_m^f$ can still be accessed (we call it the ``clean version''); as such, a partially persistent tree suffices for our purpose~\cite{ref:DriscollMa89,ref:SarnakPl86}. After the \CON\ is completed, we designate the new version of $T_m^f$ as clean and the old version as dirty; in this way, at any time, there is only one clean version we can refer to. During a \SQ\ operation, we can perform queries on the clean version of $T_m^f$. Similarly, during the query, if there is an incremental \CON\ process, $T_k'$ is also  dirty, and we need to access its clean version (i.e., the version right before $T_k'$ started the \CON\ operation). To solve this problem, before we start \CON, we make another copy of $T_k'$, denoted by $T_k''$. After \CON\ is completed, we make~$T_k''$ refer to null. The above strategy causes additional $O(\log^2 n^+)$ time, i.e., update the persistent tree $T_m^f$ and make a copy $T_k''$. To accommodate this additional cost, we make the constant $c_1$ large enough so that all these procedures can be completed within the next $\log^2 n^+$ \IF\ operations.

Recall that once we are about to start a \CON\ operation for $T_k'$, $T_k$ becomes empty. We argue that \CON\ will be completed before another \CON\ operation starts. To this end, it suffices to argue that the current \CON\ will be completed before $|T_k|$ becomes larger than $(c+1)\cdot \log^2 n^+$ again. Indeed, we know that the current \CON\ will be finished within the next $\log^2 n^+$ \IF\ operations. On the other hand, recall that the number of points in all trees of $ST(\pi_f)$ smaller than~$T_k$ is at most $c\cdot \log^2 n^+$. Since all points of $T_k$ come from those smaller trees plus newly inserted points from the \IF\ operations, within the next $\log^2 n^+$ \IF\ operations, $|T_k|$ cannot be larger than $(c+1)\cdot \log^2 n$. As such, there cannot be two concurrent \CON\ operations from $T_k$ to $T_m^f$.

\subparagraph{\DF.}
To perform a \DF\ operation, i.e., delete the front vertex $p$ of~$\pi$, we keep a stack of changes to our data structure $PH(\pi)$ due to the \IF\ operations. When deleting $p$, $p$ must be the most recently inserted point at the front end, and thus, the changes to the data structure due to the insertion of $p$ must be at the top of the stack. We roll back these changes by popping the stack. 

\subparagraph{\IRR\ and \DRR.}
Handling updates at the rear end is the same, but using $T_m^r$ and $ST(\pi_r)$ instead. We omit the details.

\subsection{Reporting the convex hull \texorpdfstring{$\boldsymbol{\ch(\pi)}$}{H(pi)}}

We show that the convex hull $\ch(\pi)$ can be reported in $O(h+\log n)$ time.

As in the algorithm for \SQ, we first obtain in $O(\log n)$ time the four trees $T_f$, $T_r$, $T_m^f$, and $T_m^r$ representing $\ch(\pi_f)$, $\ch(\pi_r)$, $\ch(\pi_m^f)$, and $\ch(\pi_m^r)$, respectively. Then, we can merge these four convex hulls using Lemma~\ref{lem:mergepaths} in $O(\log^2 n)$ time and compute a BST~$T(\pi)$ representing $\ch(\pi)$. Finally, we can output $\ch(\pi)$ by traversing $T(\pi)$ in additional $O(h)$ time. As such, in total $O(h+\log^2 n)$ time, $\ch(\pi)$ can be reported. In what follows, we reduce the time to $O(h+\log n)$. To this end, we first enhance our data structure $PH(\pi)$ by having it maintain the common tangents of the convex hulls of the two middle subpaths $\pi_m^f$ and $\pi_m^r$ during updates.

\subparagraph{Remark.} As will be discussed at the end of this section, it is possible to achieve $O(h+\log n)$ time for \HR\ without enhancing the data structure. Nevertheless, we choose to present the enhanced data structure for two reasons: (1) Enhancing the data structure will make the \HR\ algorithm much simpler; (2) the enhanced data structure helps us to obtain in $O(\log n\log\log n)$ time a tree of height $O(\log n)$ to represent $\ch(\pi)$, improving the aforementioned $O(\log^2 n)$ time algorithm.

\subsubsection{Enhancing the data structure \texorpdfstring{$\boldsymbol{PH(\pi)}$}{PH(pi)}}

We now enhance our data structure $PH(\pi)$ described in Section~\ref{sec:ph-structure}. We have our enhanced~$PH(\pi)$ maintain the common tangents between the convex hulls of the two middle subpaths of $\pi$, i.e., $\ch(\pi_m^f)$ and $\ch(\pi_m^r)$. Before describing how to maintain these common tangents, we first explain why they are useful. Let $\pi_m$ denote the concatenation of $\pi_m^f$ and~$\pi_m^r$.

Suppose the common tangents of $\ch(\pi_m^f)$ and $\ch(\pi_m^r)$ are available to us. Then, we can compute in $O(\log n)$ time a tree $T_m$ representing $\ch(\pi_m)$ by merging $\ch(\pi_m^f)$ and $\ch(\pi_m^r)$ using their trees $T_m^f$ and $T_m^r$. Consequently, we can do the following. (1) With the three trees $T_f$, $T_m$, and $T_r$, representing $\ch(\pi_f)$, $\ch(\pi_m)$, and $\ch(\pi_r)$, respectively, we can report~$\ch(\pi)$ in $O(h+\log n)$ time and the algorithm will be given later. (2) By further merging $\ch(\pi_f)$, $\ch(\pi_m)$, and $\ch(\pi_r)$ using their trees, we can obtain a single tree of height $O(\log n)$ representing $\ch(\pi)$, which takes $O(\log n\log\log n)$ time by Lemma~\ref{lem:mergepaths} since the heights of both $T_f$ and $T_r$ are $O(\log\log n)$ and the height of $T_m$ is $O(\log n)$.

We proceed to describe how to maintain the common tangents of $\ch(\pi_m^f)$ and $\ch(\pi_m^r)$. To do so, we modify our algorithms for the update operations so that whenever $\pi_m^f$ or $\pi_m^r$ is changed (either due to the \CON\ operation or the rollback of the operation, referred to as an {\em inverse} \CON), we recompute the common tangents. By Lemma~\ref{lem:mergepaths}, their common tangents can be computed in $O(\log^2 n)$ time. The main idea is to incorporate the tangent-computing algorithm into the \CON\ operation. Once an incremental \CON\ operation is finished, say, on $\pi_m^f$, we immediately start computing the new common tangents. We consider this tangent-computing procedure part of the \CON\ operation. As each \CON\ operation takes $O(\log^2 n)$ time, incorporating the $O(\log^2 n)$ time tangent-computing algorithm as above does not change the time complexity of \CON\ asymptotically. However, we do have an issue with this idea. Recall that during \DF\ operations, we roll back the changes incurred by \IF\ operations, and thus, each inverse \CON\ is done by the rollback. After an inverse \CON\ is completed, $\pi_m^f$ loses a subpath of length $\Theta(\log^2 n^+)$, and we need to recompute the common tangents. However, in this case, rollback cannot recompute the common tangents; thus, we explicitly run the tangent-computing algorithm during rollback. As such, we must find an effective way to incorporate this additional step into the rollback process so that each \DF\ operation still takes $O(1)$ worst-case time. For reference purpose, we call the above the {\em rollback issue}.

We now elaborate on how to modify our update algorithms. We only discuss \IF\ and \DF\ since the other two updates are similar.

\subparagraph{\IF.}
We follow notation in Section~\ref{sec:phupdate}, e.g., $T_m^f$, $ST(\pi_f)$, $n^+$, $T_k$, $T_k'$, $T_{k-1}$, $c$, $c_1$, etc. During the algorithm, as before, whenever $|T_k|> (c+1)\cdot \log^2 n^+$ and there is no incremental process of joining $T_{k-1}$ with $T_k$, we let $T_k'=T_k$, $T_k=\emptyset$, and start an incremental \CON\ to concatenate $T_k'$ with $T_m^f$. Recall that the incremental \CON\ in Section~\ref{sec:phupdate} will be completed within the next $\log^2 n^+$ \IF\ operations. Here, to complete \CON, we do the following three steps in the next $\log^2 n^+$ \IF\ operations (again, these are ``net'' \IF\ operations, i.e., the number of \IF\ operations minus the number of \DF\ operations happened in the future): (1) For each of the next $\frac{1}{3}\cdot \log^2 n^+$ \IF\ operations, we push a ``token'' into our rollback stack, where a token could be a special symbol. Intuitively, a token represents a time credit we can use during the rollback to perform an incremental tangent-computing procedure. This is our mechanism to address the aforementioned rollback issue. (2) In each of the subsequent $\frac{1}{3}\cdot\log^2 n^+$ \IF\ operations, we perform the next $3c_1$ steps of the actual \CON\ operation by Lemma~\ref{lem:pcht} (including the extra step of making a copy $T_k''$ from~$T_k'$ first as discussed in Section~\ref{sec:phupdate}). (3) Start an incremental tangent-computing algorithm to find the common tangents between this new $\ch(\pi_m^f)$ (using the new $T_m^f$ after the current \CON\ is completed in the above step (2)) and $\ch(\pi_m^r)$ (using the clean version of its tree $T_m^r$), and in each of the subsequent $\frac{1}{3}\cdot\log^2 n^+$ \IF\ operations, perform the next $c_2$ steps of the tangent-computing algorithm, for a sufficiently large constant $c_2$ so that the algorithm will be finished within $\frac{c_2}{3}\cdot\log^2 n^+$ steps (such a constant $c_2$ exists because the tangent-computing algorithm runs in $O(\log^2 n)$ time and $n^+=\Theta(n)$). Once all three steps are completed, we consider the current incremental \CON\ completed and set $T_k''$ to null (meaning that the subpath stored in $T_k'$ now officially becomes part of $\pi_m^f$). However, we mark the current version of $T_m^f$ clean after step (2) is completed. Since it still takes $\log^2 n^+$ \IF\ operations to complete the new incremental \CON\ operation, the prior argument in Section~\ref{sec:phupdate} that the current \CON\ will be completed before the next \CON\ starts still applies here.

\subparagraph{\DF.}
We perform the rollback process for each \DF\ operation as before but with the following changes. Whenever an ``inverse'' \CON\ is completed (i.e., the above step (2)), then according to our new incremental \CON, the rollback stack contains from the top of the stack $\frac{1}{3}\cdot\log^2 n^+$ tokens for the next $\frac{1}{3}\cdot\log^2 n^+$ to-be-deleted points (and the history of changes of the next inverse \CON\ is stored below all these tokens). As such, we mark the current $T_m^f$ clean and start a tangent-computing algorithm to find the common tangents between this new $\ch(\pi_m^f)$ and the current clean version of $T_m^r$ for $\ch(\pi_m^r)$, and for each of the subsequent $\frac{1}{3}\cdot\log^2 n^+$ tokens popped out of the rollback stack, we perform the next $c_2$ steps of the tangent-computing algorithm ($c_2$ was already defined above in such a way that the algorithm will be finished within $\frac{c_2}{3}\cdot\log^2 n^+$ steps). We consider the inverse \CON\ operation completed once the tangent-computing algorithm is finished.

\subparagraph{Obtaining the common tangents.}
We argue that the common tangents of $\ch(\pi_m^f)$ and $\ch(\pi_m^r)$ can be accessed at any time. Consider the time $t$ after an update operation. Each of $T_m^f$ and $T_m^r$ has a clean version at $t$. Suppose the current clean version $T_m^f$ (resp., $T_m^r$) became clean at time $t_1$ (resp., $t_2$). Without loss of generality, we assume $t_1\geq t_2$, i.e., $T_m^f$ became clean later than $T_m^r$ did. Hence, once $T_m^f$ became clean at $t_1$, a tangent-computing algorithm started to compute common tangents between the clean versions of $T_m^f$ and $T_m^r$.

At the time $t$, each of $T_m^f$ and $T_m^r$ may or may not have an incomplete incremental \CON\ operation or an incomplete inverse \CON\ operation. If $T_m^f$ has an incomplete operation, we let $\hat{T}_m^f$ denote its version before the operation started. Define $\hat{T}_m^r$ similarly. We use $T_m^f$ and $T_m^r$ to refer to their current clean versions at $t$. Depending on whether $T_m^f$ and/or $T_m^r$ has an incomplete operation, there are four cases. 

\begin{enumerate}
    \item
          If neither $T_m^f$ nor $T_m^r$ has an incomplete operation at $t$, then the tangent-computing algorithm for $T_m^f$ and $T_m^r$ has been finished (otherwise the operation would be incomplete at $t$). Thus, their common tangents are available at $t$.

    \item If $T_m^f$ has an incomplete operation while $T_m^r$ does not, then the common tangents between $T_m^r$ and $\hat{T}_m^f$ must be available. Indeed, as above, the one of $T_m^r$ and $\hat{T}_m^f$ that became clean later must start a tangent-computing algorithm to compute their common tangents and the algorithm must have finished before $t$ since otherwise at least one of them must have an incomplete operation at $t$, which is not true.

          Since $T_m^f$ still has an incomplete operation, $\hat{T}_m^f$ actually represents the current subpath $\pi_m^f$. More specifically, if the incomplete operation is \CON, then it is concatenating a new subpath $\tau$ to $\pi_m^f$. As the operation is incomplete, we still consider $\tau$ part of $\pi_f$ (instead of part of $\pi_m^f$), and $\tau$ is stored at $T_k''$. If the incomplete operation is an inverse \CON, then it is removing a subpath $\tau$ from $\pi_m^f$. As the operation is incomplete, we still consider $\tau$ as part of $\pi_f$, which is stored in $\hat{T}_m^f$. Therefore, the common tangents of $T_m^r$ and $\hat{T}_m^f$ are what we need.

    \item If $T_m^r$ has an incomplete operation while $T_m^f$ does not, this case is symmetric to the above second case and can be treated likewise.

    \item If both $T_m^f$ and $T_m^r$ have incomplete operations, then by a similar argument to the above, the common tangents between $\hat{T}_m^r$ and $\hat{T}_m^f$ must be available, and their common tangents are what we need.
\end{enumerate}

\subsubsection{Algorithm for \HR}
Suppose we have the three trees $T_f$, $T_m$, and $T_r$, representing $\ch(\pi_f)$, $\ch(\pi_m)$, and $\ch(\pi_r)$, as discussed above. We now describe our algorithm for reporting $\ch(\pi)$ using the three trees. Depending on whether one of the three convex hulls $\ch(\pi_f)$, $\ch(\pi_r)$, and $\ch(\pi_m)$ contains another or both of the other two, there are a number of cases.

\begin{enumerate}

    \item If $\ch(\pi_m)$ contains both $\ch(\pi_f)$ and $\ch(\pi_r)$, then $\ch(\pi_m)$ is $\ch(\pi)$ and thus we can simply report $\ch(\pi_m)$, which can be done in $O(h)$ time using the tree $T_m$. We can apply Lemma~\ref{lem:helper-points} to determine this case. More specifically, since $\pi_m$ and $\pi_f$ are consecutive subpaths of~$\pi$, applying Lemma~\ref{lem:helper-points} with $T_m$ and $T_f$ can determine whether $\ch(\pi_m)$ contains $\ch(\pi_f)$ in $O(\log n)$ time. Similarly, whether $\ch(\pi_m)$ contains $\ch(\pi_r)$ can be determined in $O(\log n)$ time. As such, in this case, we can report $\ch(\pi)$ in $O(h+\log n)$ time.

    \item If $\ch(\pi_m)$ contains $\ch(\pi_r)$ but not $\ch(\pi_f)$, and $\ch(\pi_f)$ does not contain $\ch(\pi_m)$, then we will present an algorithm later to report $\ch(\pi)$ in $O(h+\log n)$ time. As in the first case, we can determine whether this case happens in $O(\log n)$ time using Lemma~\ref{lem:helper-points}.

    \item If $\ch(\pi_m)$ contains $\ch(\pi_f)$ but not $\ch(\pi_r)$, and $\ch(\pi_r)$ does not contain $\ch(\pi_m)$, then this is a case symmetric to the above second case and thus can be treated likewise.

    \item If $\ch(\pi_r)$ contains both $\ch(\pi_m)$ and $\ch(\pi_f)$, then $\ch(\pi_r)$ is $\ch(\pi)$ and thus we can simply report $\ch(\pi_r)$, which can be done in $O(h)$ time using the tree $T_r$. We can determine whether this case happens in $O(\log n)$ time using Lemma~\ref{lem:helper-points}. Indeed, we first apply Lemma~\ref{lem:helper-points} on $\ch(\pi_r)$ and $\ch(\pi_m)$ to determine whether $\ch(\pi_r)$ contains $\ch(\pi_m)$ in $O(\log n)$. If yes, we must determine whether $\ch(\pi_r)$ contains $\ch(\pi_f)$. Note that this time we cannot apply Lemma~\ref{lem:helper-points} directly as $\pi_r$ and $\pi_f$ are not consecutive. To overcome this issue, since we already know that $\ch(\pi_r)$ contains $\ch(\pi_m)$, it suffices to know whether $\ch(\pi_r)$ contains $\ch(\pi_m\cup \pi_f)$. For this, we can slightly modify the algorithm in Lemma~\ref{lem:helper-points} since $\pi_r$ and $\pi_m\cup\pi_f$ are two consecutive subpaths of $\pi$. Specifically, the algorithm needs to solve a subproblem that is to compute the most extreme point $p$ of $\ch(\pi_m\cup \pi_f)$ along a direction $\rho$ that is the normal of an edge of $\ch(\pi_r)$. We cannot solve this subproblem by applying the algorithm in the proof of Lemma~\ref{lem:helper-points} since we do not have a tree to represent $\ch(\pi_m\cup \pi_f)$. Instead, we can compute $p$ by first computing the most extreme point of $\ch(\pi_m)$ along $\rho$ in $O(\log n)$ time using the tree $T_m$ and computing the most extreme point of $\ch(\pi_r)$ along $\rho$ in $O(\log\log n)$ time using the tree $T_r$, and then return the more extreme point along $\rho$ among the two points. As such, we can determine whether $\ch(\pi_r)$ contains $\ch(\pi_m\cup \pi_f)$ in $O(\log n)$ time.

    \item If $\ch(\pi_f)$ contains both $\ch(\pi_m)$ and $\ch(\pi_r)$, this is a case symmetric to the above fourth case and thus can be treated likewise.

    \item The remaining case: $\ch(\pi_m)$ contains neither $\ch(\pi_f)$ nor $\ch(\pi_r)$, and neither $\ch(\pi_f)$ nor $\ch(\pi_r)$ contains $\ch(\pi_m)$. We will present an $O(h+\log n)$ time algorithm. This is the most general case, and the algorithm is also the most complicated. Note that it is possible that one of $\ch(\pi_f)$ and $\ch(\pi_r)$ contains the other; however, we cannot apply Lemma~\ref{lem:helper-points} to determine that since $\pi_f$ and $\pi_r$ are not consecutive subpaths of $\pi$.
\end{enumerate}

It remains to present algorithms for the above Case 2 and Case 6.

\subparagraph{Algorithm for Case 2.}
In Case 2, $\ch(\pi_m)$ contains $\ch(\pi_r)$ but not $\ch(\pi_f)$, and $\ch(\pi_f)$ does not contain $\ch(\pi_m)$. One may consider the algorithm for this case a ``warm-up'' for the more complicated algorithm of Case 6 given later. In this case, $\ch(\pi)$ is the convex hull of $\ch(\pi_f)$ and $\ch(\pi_m)$, and thus it suffices to merge $\ch(\pi_f)$ and $\ch(\pi_m)$ by computing their common tangents. Note that computing the common tangents can be easily done in $O(\log n\log\log n)$ time by a nested binary search \cite{guibas1991compact} since the height of $T_m$ is $O(\log n)$ and the height of $T_f$ is $O(\log\log n)$. After the common tangents are computed, $\ch(\pi)$ can be output in additional $O(h)$ time. As such, the total time is $O(h+\log n\log\log n)$. In what follows, we present an improved algorithm of $O(h+\log n)$ time.

\begin{figure}[t]
    \centering
    \includegraphics[height=2.0in]{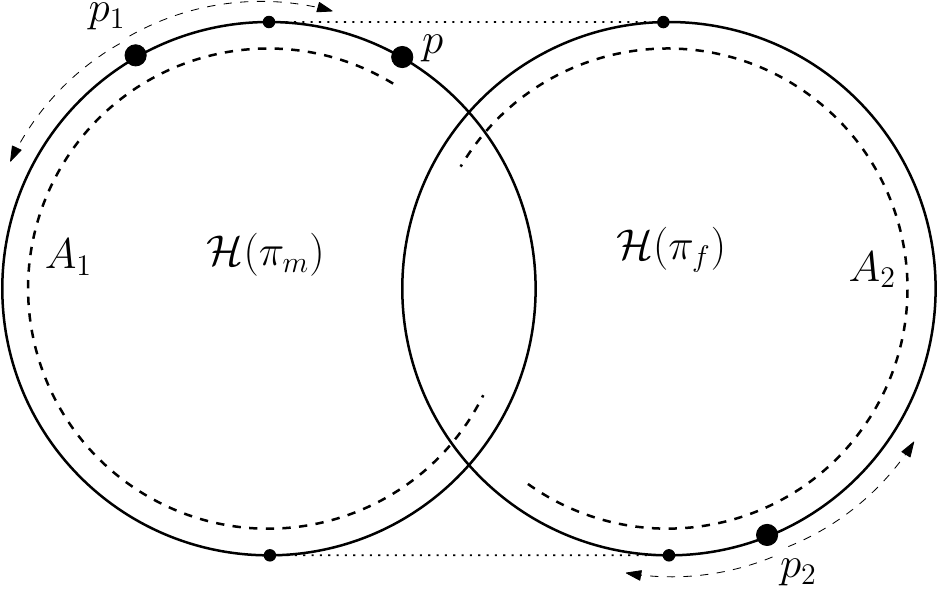}
    \caption{Illustrating the walking procedure for Case 2. The dashed arcs illustrate $A_1$ and $A_2$ on $\partial \ch(\pi_m)$ and $\partial\ch(\pi_f)$, respectively}
    \label{fig:report2h}
\end{figure}

First, we apply Lemma~\ref{lem:helper-points} to compute a helper point $p_1\in \partial \ch(\pi_m)$ such that $p_1\in \partial \ch(\pi)$ and $p_1\not\in \ch(\pi_f)$, and also compute a helper point $p_2\in \partial\ch(\pi_f)$ such that $p_2\in \partial \ch(\pi)$ and $p_2\not\in \ch(\pi_m)$. This takes $O(\log n)$ time using the trees $T_m$ and $T_f$ by Lemma~\ref{lem:helper-points}.

Next, starting from $p_1$, we perform the following {\em walking procedure}. We walk clockwise on $\partial \ch(\pi_m)$ with a ``step size'' $\log n$ (see Figure~\ref{fig:report2h}). The idea is to identify a portion of $\partial \ch(\pi_f)$ of size at most $\log n$ that contains a common tangent point. To this end, using the tree $T_m$, we find in $O(\log n)$ time a point $p$ on $\partial \ch(\pi_m)$ such that the portion $\partial \ch(\pi_m)[p_1,p]$ from $p_1$ clockwise to $p$ along $\partial \ch(\pi_m)$ contains $\log n$ vertices. Then, we determine whether $p\in \partial \ch(\pi)$. Since $\ch(\pi)$ is the convex hull of $\ch(\pi_m)$ and $\ch(\pi_f)$, we can apply Lemma~\ref{lem:queries} to determine whether $p\in \partial \ch(\pi)$ in $O(\log n)$ time.

\begin{itemize}
    \item
          If $p \not\in \partial\ch(\pi)$, then $\partial \ch(\pi_m)[p_1,p]$ must contain a tangent point. In this case, starting from $p_1$, we repeat the same walking procedure counterclockwise until we find another boundary portion of $\ch(\pi)$ that contains a tangent point. Let $A_1$ denote the set of vertices that have been traversed during the above clockwise and counterclockwise walks (see Figure~\ref{fig:report2h}; note that all vertices in each step, e.g., all vertices in $\partial \ch(\pi_m)[p_1,p]$, are included in $A_1$). Observe that $A_1$ has the following {\em key property}: Vertices of $\ch(\pi)$ on $\partial \ch(\pi_m)$ are all in $A_1$ and $A_1$ contains at most $\log n$ vertices that not on $\partial\ch(\pi)$. As such, we have $|A_1|=O(h+\log n)$.

    \item
          If $p\in\partial \ch(\pi)$, then we keep walking clockwise as above until we make a step where $p\not\in \partial \ch(\pi)$ or we pass over the starting point $p_1$ again. In the first case, we find a portion of $\log n$ vertices on $\partial \ch(\pi)$ that contain a tangent point. Then, we repeat the same walk procedure counterclockwise around $\ch(\pi)$. We can still obtain a set $A_1$ of vertices of $\partial \ch(\pi)$ with the above key property. In the second case, we realize that the last walking step contains two common tangent points. In this case, the observation is that all but at most $\log n$ vertices of $\ch(\pi_m)$ are also vertices of $\ch(\pi)$; we define $A_1$ as the set of all vertices of $\ch(\pi_m)$, and thus the key property still holds on $A_1$.
\end{itemize}

In either case, we have found a subset $A_1$ of vertices of $\ch(\pi_m)$ with the above key property. Since $|A_1|=O(h+\log n)$, the walking procedure computes $A_1$ in $O(h+\log n)$ time because each step takes $O(\log n)$ time and each step either traverses all points of $A_1$ or traverses $\log n$ points of $A_1$ (therefore the number of steps is $O(h/\log n)$). Also, since all vertices of $A_1$ are on $\partial\ch(\pi_m)$, we can sort them from left to right in linear time. Using the same strategy, starting from $p_2$ on $\ch(\pi_f)$, we can also find a subset $A_2$ of sorted vertices of $\ch(\pi_f)$ with a similar key property: Vertices of $\ch(\pi)$ on $\partial \ch(\pi_f)$ are all in $A_2$ and $A_2$ contains at most $\log n$ vertices that not on $\partial\ch(\pi)$ (see Figure~\ref{fig:report2h}). As such, the convex hull of $A_1$ and $A_2$ is exactly $\ch(\pi)$. As each of $A_1$ and $A_2$ is already sorted, their convex hull can be computed in $O(|A_1|+|A_2|)$ time, which is $O(h+\log n)$ since both $|A_1|$ and $|A_2|$ are bounded by $O(h+\log n)$.

In summary, we can report $\ch(\pi)$ in $O(h+\log n)$ time in Case 2.

\subparagraph{Algorithm for Case 6.}
We now describe the algorithm for Case 6, in which $\ch(\pi_m)$ contains neither $\ch(\pi_f)$ nor $\ch(\pi_r)$, and neither $\ch(\pi_f)$ nor $\ch(\pi_r)$ contains $\ch(\pi_m)$. The algorithm, which extends our above algorithm for Case 2,  becomes more involved.

\begin{figure}[t]
    \centering
    \includegraphics[height=2.0in]{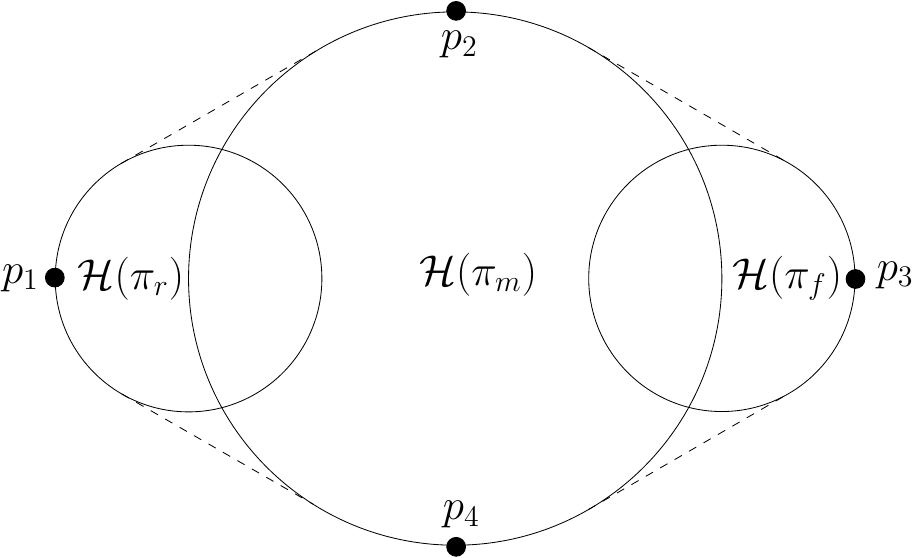}
    \caption{Illustrating the convex hull $\ch(\pi)$ in Case 6.}
    \label{fig:report3h}
\end{figure}

First, observe that $\ch(\pi_r)$ can have at most one maximal boundary portion appearing on the boundary of $\ch(\pi)$. To see this, $\ch(\pi)$ is the convex hull of $\ch(\pi_r)$ and $\ch(\pi_m\cup \pi_f)$. As~$\pi_r$ and $\pi_m\cup \pi_f$ are two consecutive subpaths of $\pi$, there are at most two common tangents between $\ch(\pi_r)$ and $\ch(\pi_m\cup \pi_f)$, implying that $\ch(\pi_r)$ has at most one maximal boundary portion appearing on $\partial \ch(\pi)$. Similarly, $\ch(\pi_f)$ has at most one maximal boundary portion appearing on $\partial \ch(\pi)$.

However, $\ch(\pi_m)$ can have at most two maximal boundary portions appearing on $\partial \ch(\pi)$. To see this, first of all, as argued above, $\ch(\pi_m)$ has at most one maximal boundary portion appearing on $\partial \ch(\pi_m\cup \pi_r)$. Note that $\ch(\pi)$ is the convex hull of $\ch(\pi_m\cup \pi_r)$ and $\ch(\pi_f)$. Since $\ch(\pi_f)$ has at most one maximal boundary portion appearing on $\partial \ch(\pi)$, we obtain that $\ch(\pi_m)$ have at most two maximal boundary portions appearing on $\partial \ch(\pi)$.

As such, the boundary of $\ch(\pi)$ has at most four maximal portions, each of which belongs to one of the three convex hulls of $S=\{\ch(\pi_f), \ch(\pi_m), \ch(\pi_f)\}$, and these maximal portions are connected by at most four edges of $\ch(\pi)$ each of which is a common tangent of two hulls of $S$ (see Figure~\ref{fig:report3h}). Suppose we know four vertices $p_i$, $1\leq i\leq 4$, one from each of these four boundary portions of $\ch(\pi)$. Then, we can construct $\ch(\pi)$ in $O(h+\log n)$ time by an algorithm using a walking procedure similar to that in Case 2, as follows.

Let $H_1$ be the convex hull of $S$ that contains $p_1$ as a vertex. Let $H_2$ and $H_3$ be the other two convex hulls of $S$. Starting from $p_1$, we perform a walking procedure by making reference to $\ch(\pi)$. More specifically, we walk clockwise on $\partial H_1$ with a step size $\log n$. Let $p_1'$ be the point on $\partial H_1$ such that there are $\log n$ vertices on $\partial H_1$ clockwise from $p_1$ to $p_1'$. We determine whether $p_1'$ is on $\partial \ch(\pi)$, which can be done in $O(\log n)$ time using the three trees $T_m$, $T_f$, and $T_r$ by Lemma~\ref{lem:helper-points}. If yes, then we continue the walk until after a step in which we either pass over $p_1$ or find that $p_1'$ is not on $\partial \ch(\pi)$. In the former case, we let $A_1$ be the set of all vertices of $H_1$. In the latter case, we make a counterclockwise walk on $\partial H_1$ starting from $p_1$ again and let $A_1$ be the subset of vertices that have been traversed before we stop the walking procedure.

We compute $A_i$ for $i=2,3,4$ similarly. Also, each $A_i$ can be sorted since all vertices of $A_i$ lie on the boundary of a single convex hull of $S$. As in the algorithm for Case 2, we have the following {\em key property}: All vertices of $\ch(\pi)$ are in $A=\bigcup_{1\leq i\leq 4} A_i$, and $A$ contains at most $O(\log n)$ points that are not on $\partial \ch(\pi)$. Therefore, $|A|=O(h+\log n)$ holds. As such, $\ch(\pi)$ is the convex hull of $A$ and can be computed in $O(h+\log n)$ time since $A$ can be sorted in $O(|A|)$ time and $|A|=O(h+\log n)$.

In light of the above algorithm, to report $\ch(\pi)$, it suffices to find the (at most) four points $p_i$, $1\leq i\leq 4$. In what follows, we first find two such points that are on the boundaries of two different convex hulls of $S$.
To this end, we first find a helper point $p_1\in \partial \ch(\pi_m\cup \pi_f)$ with $p_1\not\in \ch(\pi_r)$ and $p_1\in \partial \ch(\pi)$. Note that such a point $p_1$ always exists as $\ch(\pi_r)$ does not contain $\ch(\pi_m)$ (and thus $\ch(\pi_r)$ is not $\ch(\pi)$). Although we do not have a tree representing $\ch(\pi_m\cup \pi_f)$, we can still find such point $p_1$ in $O(\log n)$ time using Lemma~\ref{lem:helper-points}. The algorithm is the same as the one described in Case 4 for determining whether $\ch(\pi_r)$ contains $\ch(\pi_m\cup \pi_f)$ (in fact, the point $p$ found by that algorithm is our point $p_1$). Depending on whether~$p_1\in \partial \ch(\pi_m)$, there are two cases.

\begin{itemize}
    \item If $p_1\not\in \partial \ch(\pi_m)$, then $p_1\in \partial \ch(\pi_f)$. In this case, we find a helper point $p_2\in \partial \ch(\pi_m\cup \pi_r)$ with $p_2\not\in \ch(\pi_f)$ and $p_2\in \partial \ch(\pi)$. As above for $p_1$, $p_2$ can also be found in $O(\log n)$ time by slightly modifying the algorithm of Lemma~\ref{lem:helper-points}. Note that $p_2$ is either on $\partial \ch(\pi_m)$ or on~$\partial \ch(\pi_r)$. In either case, $p_2$ lies on a different convex hull than $p_1$ does.

    \item If $p_1\in \partial \ch(\pi_m)$, then we apply Lemma~\ref{lem:helper-points} on $\ch(\pi_m)$ and $\ch(\pi_f)$. Since $\ch(\pi_m)$ does not contain $\ch(\pi_f)$, the algorithm will find the extreme point $p_2'$ on $\partial \ch(\pi_f)$ along a certain direction $\rho$ such that $p_2'\not\in \ch(\pi_m)$. We next compute the extreme point $p_2''$ on $\partial \ch(\pi_r)$ along $\rho$. Let $p_2$ be the more extreme point among $p_2'$ and $p_2''$ along $\rho$. As such, $p_2$ must be on $\partial \ch(\pi)$. Since $p_2$ is either on $\partial \ch(\pi_f)$ or on $\partial \ch(\pi_r)$, $p_2$ is on a different convex hull than $p_1$ is.
\end{itemize}

The above finds two vertices $p_1, p_2 \in\partial \ch(\pi)$ on different convex hulls of $S$. Let $H_1$ and~$H_2$ be the two convex hulls containing $p_1$ and $p_2$ as vertices, respectively. Let $H_3$ be the third convex hull.

Using the walking procedure, for each $i=1,2$, we can compute a subset $A_i$ of vertices that contains $p_i$ such that either $A_i$ consists of all vertices of $H_i$, or $A_i$ contains the maximal boundary portion of $H_i$ appearing on $\partial \ch(\pi)$ that contains $p_i$. In either case, $A_i$ contains at most $O(\log n)$ vertices that are not on $\partial \ch(\pi)$. As such, $|A_1|,|A_2|=O(h+\log n)$. We compute the convex hull $H=\ch(A_1\cup A_2)$, which can be done in $O(h+\log n)$ time as $A_1\cup A_2$ can be sorted in linear time. Note that $H$ contains two {\em special edges}: Each of them connects a vertex of $A_1$ and a vertex of $A_2$, i.e., these are common tangents between the convex hulls $\ch(A_1)$ and $\ch(A_2)$; let $e_1$ and $e_2$ denote these two edges, respectively. Depending on whether~$H_3$ is $\ch(\pi_m)$, there are two cases to proceed.

\begin{itemize}
    \item If $H_3$ is $\ch(\pi_m)$, then we process a special edge $e_1$ as follows. Let $\rho$ be the normal of $e_1$ towards the outside of $H$. We compute the most extreme point $p_3$ of $H_3$ along $\rho$. We determine whether $p_3$ is inside $H$, which can be done in $O(h+\log n)$ time since $H$ is already computed and $|H|=O(h+\log n)$. If $p_3$ is inside $H$, then we ignore it. If $p_3$ is outside $H$, then $p_3$ must be on $\partial \ch(\pi)$. In this case, we compute the subset $A_3$ using the walking procedure, which takes $O(h+\log n)$ time.

    We process the other special edge $e_2$ similarly and obtain a point $p_4$ and the subset $A_4$, which takes $O(h+\log n)$ time. Finally, we compute the convex hull of $A=\bigcup_{1\leq i\leq 4}A_i$, which is $\ch(\pi)$. As discussed above, the total time is $O(h+\log n)$.

    \item If $H_3$ is not $\ch(\pi_m)$, then one of $H_1$ and $H_2$ is $\ch(\pi_m)$. Without loss of generality, we assume that $H_1$ is $\ch(\pi_m)$. In this case, we have computed one maximal portion of $\partial H_1$ that appears on $\partial \ch(\pi)$, which is contained in $A_1$. It is possible that there is another maximal portion of $\partial H_1$ appearing on $\partial \ch(\pi)$.

    Consider a special edge $e_1$. Define $\rho$ as the normal of $e_1$ toward the outside of $H$. We compute the most extreme point $p_3'$ of $H_3$ along $\rho$ and the most extreme point $p_3''$ of $H_1$ along $\rho$. Let $p_3$ be the more extreme point of $p_3'$ and $p_3''$ along $\rho$. By definition, $p_3$ must be on $\partial \ch(\pi)$.     
 
    We first consider the case where $p_3\not\in H$, implying that $e_1$ is not an edge of $\partial \ch(\pi)$. There are two cases depending on whether $p_3$ is $p_3'$ or $p_3''$.

    \begin{itemize}
        \item
                    If $p_3=p_3'$, then we have found a point $p_3\in \partial H_3$ that is on $\partial \ch(\pi)$. Then, we compute~$A_3$ using the walking procedure in $O(h+\log n)$ time. Next, we find a point~$p_4$ (if it exists) on $\partial H_1$. Let $\rho'$ be the normal of $e_2$ towards the outside of $H$. We compute the most extreme point $p_4$ of $H_1$ along $\rho'$. If $p_4$ is inside $H$ and then we ignore $p_4$; in this case, $\ch(\pi)$ is the convex hull of $A_1\cup A_2\cup A_3$, which can be computed in $O(h+\log n)$ time. If $p_4$ is outside $H$, then $p_4$ is on the second maximal portion of~$\partial H_1$ appearing on~$\partial \ch(\pi)$. In this case, we invoke the walking procedure with $p_4$ to compute $A_4$ and finally compute the convex hull of $A_1\cup A_2\cup A_3\cup A_4$, which is $\ch(\pi)$, in $O(h+\log n)$ time.

              \item
                    If $p_3=p_3''$, then $p_3$ is on the second maximal portion of $\partial H_1$ appearing on $\partial \ch(\pi)$. We first invoke the walking procedure on $p_3$ to compute $A_3$. Next, we consider the normal $\rho'$ of $e_2$ towards the outside of $H$. We compute the most extreme point $p_4$ of $H_3$ along $\rho'$. If $p_4$ is inside $H$ and then we ignore $p_4$; in this case, $\ch(\pi)$ is the convex hull of $A_1\cup A_2\cup A_3$, which can be computed in $O(h+\log n)$ time. If $p_4$ is outside $H$, then $p_4$ is on the maximal portion of $\partial H_3$ appearing on $\partial \ch(\pi)$. We invoke the walking procedure on $p_4$ to compute $A_4$ and finally compute the convex hull of $A_1\cup A_2\cup A_3\cup A_4$, which is $\ch(\pi)$, in $O(h+\log n)$ time.
          \end{itemize}    

\begin{figure}[t]
    \centering
    \includegraphics[height=2.0in]{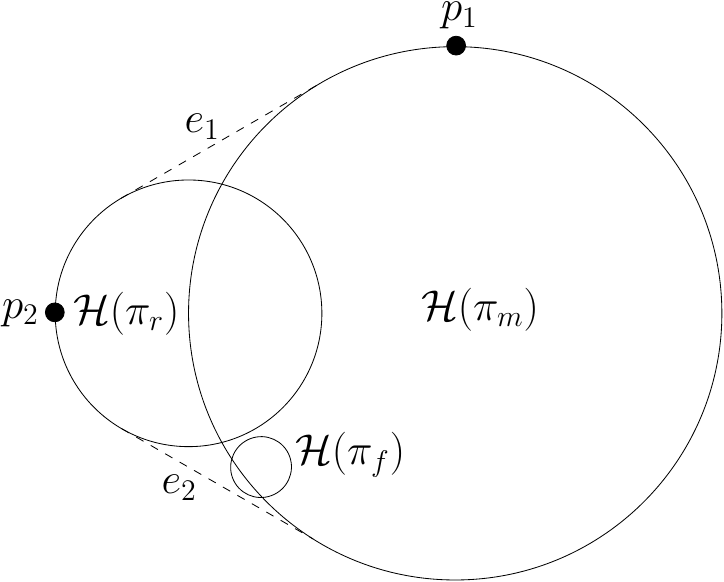}
    \caption{Illustrating the case where $\ch(\pi_f)\subseteq H$, with $H=\ch(\pi_r\cup \pi_m)$.}
    \label{fig:reportspecial}
\end{figure}
          
    The above is for the case where $p_3\not\in H$. If $p_3\in H$, then $e_1$ is an edge of $\partial \ch(\pi)$.
    In this case, we continue to process the other special edge $e_2$ in a way similar to the above for $e_1$. If $e_2$ is not an edge of $\partial \ch(\pi)$, then the algorithm finally computes $\ch(\pi)$ in $O(h+\log n)$ time as above. 
    Otherwise, we reach the case where both $e_1$ and $e_2$ are edges of $\partial \ch(\pi)$. This case can be easily handled as follows. First notice that $\ch(\pi)$ is the convex hull of $A_1\cup A_2\cup H_3$ (since otherwise, following the definitions of $A_1$ and $A_2$, either the above point $p_3$ would not be in $H$ or a symmetric situation would happen with respect to $e_2$).     
    Recall that $H_1$ is $\ch(\pi_m)$. Without loss of generality, we assume that $H_3$ is $\ch(\pi_f)$.       
    Note that $\ch(\pi_f)\subseteq H$ is possible (see Figure~\ref{fig:reportspecial}). If $\ch(\pi_f)\subseteq H$, then $\ch(\pi)$ is $H$; otherwise, the common tangents between $\ch(\pi_m)$ and $\ch(\pi_f)$ are also edges of $\partial \ch(\pi)$ (see Figure~\ref{fig:report3h}) and any point of $\partial\ch(\pi_f)\cap \partial \ch(\pi_f\cup \pi_m)$ is also on $\partial \ch(\pi)$. Our algorithm works as follows. We find a helper point $p_3\in \partial \ch(\pi_f)$ with $p_3\not\in \partial \ch(\pi_m)$, which can be done in $O(\log n)$ time using Lemma~\ref{lem:helper-points}. We check whether $p_3\in H$. If yes, then $\ch(\pi_f)\subseteq H$ and $H=\ch(\pi)$. As $H=\ch(A_1\cup A_2)$, $\ch(\pi)$ can be computed in $O(h+\log n)$ time in this case. If $p_3\not\in H$, then $\ch(\pi_f)\not\subseteq H$, and since $p_3\in \partial \ch(\pi_f)\cap  \partial \ch(\pi_f\cup \pi_m)$, $p_3$ is also on $\partial \ch(\pi)$ according to the above discussion.     
    As such, $p_3$ is on $\partial H_3$ and also on $\partial\ch(\pi)$. With $p_3$, we can now compute $A_3$ using the walking procedure in $O(h+\log n)$ time. Finally, we compute the convex hull of $A_1\cup A_2\cup A_3$, which is $\ch(\pi)$, in $O(h+\log n)$ time. 
    \end{itemize}

This completes the algorithm for reporting the convex hull $\ch(\pi)$. The runtime of the algorithm is $O(h+\log n)$.

\subparagraph{Remark.}
The above algorithm for reporting $\ch(\pi)$ relies on the three trees $T_f$, $T_m$, and $T_r$, representing $\ch(\pi_f)$, $\ch(\pi_m)$, and $\ch(\pi_r)$, respectively. Although the details would be tedious, it is not difficult to extend the algorithm to solve the four subpath case in which we only use the four trees $T_f$, $T_m^f$, $T_m^r$, and $T_r$, 
without using $T_m$. 
As such, our original data structure in Section~\ref{sec:ph-structure} is sufficient to perform \HR\ in $O(h+\log n)$ time.

\section{Lower bounds}
\label{sec:lower-bounds}

In this section, we prove lower bounds, which justify the optimality of our solutions to both the monotone path problem and the simple path problem.

The following theorem gives a lower bound for the 2-sided monotone path problem by a reduction from the set inclusion problem~\cite{Ben-Or83}. Note that the lower bound is even applicable to the amortized cost. It implies that our result in Theorem~\ref{thm:deque-convex-hull} is the best possible.

\begin{theorem}
    \label{thm:lower-bound-monotone}
    The following lower bounds hold for the 2-sided monotone path dynamic convex hull problem.
    \begin{enumerate}
        \item
              At least one of the operations  \textsc{InsertLeft}, \textsc{DeleteLeft}, \textsc{InsertRight}, \textsc{DeleteRight}, and \HR\ is required to take $\Omega(\log n)$ amortized time under the algebraic decision tree model, regardless of the value $h$, where $n$ is the size of the point set $P$ and $h=|\ch(P)|$.
        \item
              Similarly, at least one of the operations  \textsc{InsertLeft}, \textsc{DeleteLeft}, \textsc{InsertRight}, \textsc{DeleteRight}, and \SQ\ is required to take $\Omega(\log n)$ amortized time, where \SQ\ could be (but not limited to) any of the following queries: extreme point queries, tangent queries, bridge queries.
    \end{enumerate}
\end{theorem}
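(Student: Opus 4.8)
\subparagraph{Proof proposal.}
The plan is to obtain both statements by reduction from the \emph{set inclusion problem} --- given two unsorted lists of reals $A=\{a_1,\dots,a_n\}$ and $B=\{b_1,\dots,b_n\}$, decide whether $A\subseteq B$ --- which has algebraic-decision-tree depth $\Omega(n\log n)$ by the Ben-Or/Milnor--Thom counting argument~\cite{Ben-Or83}. Since that bound is on the \emph{depth} of any algebraic decision tree, it bounds the \emph{total} cost of any procedure assembled from our data-structure operations, and hence their amortized cost: if every operation in some set had amortized cost $o(\log n)$, then running the reduction's $O(n)$-operation sequence would solve set inclusion with an $o(n\log n)$ algebraic decision tree, a contradiction. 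So the whole proof reduces to exhibiting, for every instance $(A,B)$, a \emph{legal} sequence of $O(n)$ monotone-path operations --- drawn from $\{\IL,\IR,\DL,\DR,\HR\}$ for Part~1 and from $\{\IL,\IR,\DL,\DR,\SQ\}$ for Part~2 --- together with $O(n)$ additional algebraic work, whose execution decides $A\subseteq B$.

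For the construction I would embed reals as points of a fixed strictly convex curve, say $a\mapsto(a,a^{2})$ on the parabola $y=x^{2}$, so that left-to-right order equals numerical order and ``$a\in B$'' becomes an incidence/adjacency relation between the point of $a$ and the convex chain of $\{(b,b^{2}):b\in B\}$. The reduction maintains (an encoding of) $B$ in the data structure using $O(n)$ updates and then, for each $a_{i}$ in turn, spends $O(1)$ operations: it inserts the point of $a_{i}$ at an end of the current point set, issues one revealing call, and then deletes it (a rollback) to restore the structure. The revealing call is \HR\ in Part~1 --- the change it exposes in the reported hull (a choice among $\Theta(n)$ points, i.e.\ $\Theta(\log n)$ bits) pins down where $a_{i}$ falls among the $b_{j}$'s --- and a single \SQ\ in Part~2: for an extreme-point query one asks for the extreme vertex in the direction normal to the parabola at $a_{i}$ (which returns the point of $\ch(B)$ numerically closest to $a_{i}$); for tangent or bridge queries one uses, respectively, an external viewpoint or a line placed so that the returned tangent point / crossed edge identifies the neighbours of $a_{i}$ in $B$. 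In all cases a single answer determines $a_{i}\in B$ up to $O(1)$ further comparisons, so $n$ answers determine $A\subseteq B$. The ``regardless of $h$'' clause is handled by arranging the coordinates so that the current point set always has hull size $O(1)$ (the information is carried by the $\Theta(n)$ hull \emph{changes}, not by the size of any single hull); thus the lower bound cannot be circumvented by replacing $\log n$ with $\log h$ (as is in fact possible in the one-sided case, Theorem~\ref{thm:stack-convex-hull}) or by bounding \HR\ by $O(h)$.

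The main obstacle is making the middle step legal under the monotone restriction: one may only insert/delete at the two extremes, so the reduction must choose the order in which the encoding points are created, and use deletions/rollbacks, so that each membership test costs $O(1)$ operations while the \HR/\SQ\ call it performs still reflects all of $B$; simultaneously the coordinate assignment must keep the point set $x$-monotone and in the intended (small-hull) configuration throughout. Once this gadget is in place, the rest is the standard packaging: the $O(n)$-operation sequence yields an algebraic decision tree of depth $O(n)\cdot t$ where $t$ is the maximum amortized cost among the operations it uses, and comparison with the $\Omega(n\log n)$ lower bound for set inclusion forces $t=\Omega(\log n)$; applying this to the two operation sets gives the two parts of the theorem.
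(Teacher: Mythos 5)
Your high-level packaging (reduction from set inclusion, total algebraic-decision-tree depth bounds the sum of amortized costs, $O(n)$ operations force some operation to cost $\Omega(\log n)$) is exactly the paper's, and the choice of embedding $B$ on a parabola is also the right starting point. But the proof stands or falls on the membership gadget, and the one you sketch does not work and is not repaired: you propose, for each $a_i$, to ``insert the point of $a_i$ at an end of the current point set.'' Under the monotone-path restriction this is not a legal update, because $(a_i,a_i^2)$ has $x$-coordinate interleaved among the $b_j$'s, so it is neither left of all points nor right of all points; you name this as ``the main obstacle'' but leave it unsolved, and it is precisely the content of the theorem. Moreover, your two auxiliary claims are in tension: with all encoding points on a strictly convex curve, every $b_j$-point is a hull vertex, so at query time $h=\Theta(n)$, contradicting your plan to ``arrange the coordinates so that the current point set always has hull size $O(1)$.'' Without keeping $h=O(1)$ at query time the reduction cannot deliver the ``regardless of $h$'' clause: a structure with $O(1)$ updates, $O(\log h)$ queries and $O(h)$ reporting would be perfectly consistent with your sequence, so no contradiction with the $\Omega(n\log n)$ bound is obtained for those operations.

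The paper closes exactly this gap with a tangent-quadrilateral gadget that never inserts $a_i$'s point at all. Using the concave curve $f(x)=x(1-x)$ with all $b_j\in(0,1)$, the tangent to $f$ at $x=a_i$ meets the lines $x=0$ and $x=1$ at $q_3=(0,a_i^2)$ and $q_0=(1,(1-a_i)^2)$. One performs \IR$(q_0)$, \IR$(q_1)$, \IL$(q_2)$, \IL$(q_3)$ with $q_1=(1,0)$, $q_2=(0,0)$ --- all legal, since $x=0$ and $x=1$ are outside $(0,1)$ --- after which $\ch(P)$ has exactly four vertices, plus a fifth vertex $p=(a_i,f(a_i))$ if and only if $a_i\in B$, because the segment $\overline{q_3q_0}$ lies on or above all the $p_j$'s with equality exactly at $x=a_i$. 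Then \HR\ (or an extreme-point query in the direction normal to $\overline{q_3q_0}$, or a tangent/bridge query) reveals membership, and two \DL\ plus two \DR\ operations restore the structure. This keeps $h=O(1)$ at every query, which is what makes the bound hold regardless of $h$ and rules out $O(\log h)$/$O(h)$ bounds for the two-sided problem. Your write-up would need this (or an equivalent) covering construction spelled out; as it stands, the step on which everything depends is asserted rather than proved.
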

\begin{proof}
    We begin with proving the first theorem statement. The second one can be proven similarly.

    We use a reduction from the set inclusion problem. Let $A=\{a_1,a_2,\ldots,a_n\}$ and $B=\{b_1,b_2,\ldots,b_n\}$ be two sets of $n$ numbers each, such that the numbers of $B$ are distinct and sorted, i.e., $b_1<b_2<\cdots<b_n$. Without loss of generality, we assume that all numbers of $B$ are larger than $0$ and smaller than $1$. The {\em set inclusion} problem is to determine whether~$A\subseteq B$. The problem requires $\Omega(n\log n)$ time to solve under the algebraic decision tree model~\cite{Ben-Or83}.

    We create an instance of the 2-sided monotone path dynamic convex hull problem as follows. Define a function $f(x) = x (1 - x)$. For each number $b_i\in B$, we create a point $p_i=(b_i,f(b_i))$ in $\bbR^2$. Since $b_1<b_2<\cdots<b_n$, the points $p_i$'s in their index order are sorted from left to right. We show below that whether $A\subseteq B$ can be determined using $O(n)$ of the five operations in the first theorem statement, which will prove the first theorem statement.

    \begin{figure}[t]
        \centering
        \includegraphics[]{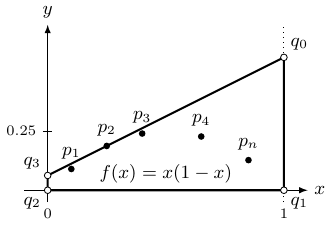}
        \caption{Reduction of set membership queries to 2-sided dynamic convex hull.}
        \label{fig:lower-bound}
    \end{figure}

    First, starting from $P=\emptyset$, we perform $n$ \textsc{InsertRight} operations for $p_1,p_2,\ldots,p_n$ in this order. Then, for each $a_i\in A$, we show that whether $a_i\in B$ can be determined using $O(1)$ operations. The function $f$ at the point $(x,f(x))$ has a tangent $\ell_x$ with slope $1-2x$ that intersects the vertical lines for $x=0$ and $x=1$ in the points $q_3(x)=(0,x^2)$ and $q_0(x)=(1,(1-x)^2)$. By definition, if $x$ is the $x$-coordinate of the point $p_i$ of $P$ (i.e., $x=b_i$), then $\ell_{x}$ is tangent to $\ch(P)$ at $p_i$ (see Figure~\ref{fig:lower-bound}, where the tangent goes through the point~$p_2$).

    Define $q_0=q_0(a_i)$, $q_1=(1,0)$, $q_2=(0,0)$, and $q_3=q_3(a_i)$. Note that all points of $P$ are below or on the segment $\overline{q_0q_3}$. We perform the following operations in order: \textsc{InsertRight}$(q_0)$, \textsc{InsertRight}$(q_1)$, \textsc{InsertLeft}$(q_2)$, \textsc{InsertLeft}$(q_3)$, and \HR. According to the above discussion, if $a_i\in B$, then $\ell_{a_i}$ must be tangent to $\ch(P)$ at the point~$p\in P$ whose $x$-coordinate is equal to $a_i$. This implies that the \HR\ operation will report the following five points: $q_i$, $0\leq i\leq 3$, and $p$ (we could resolve the collinear situation by a standard symbolic perturbation~\cite{ref:EdelsbrunnerSi90,ref:YapA90}). On the other hand, if $a_i\not\in B$, then all points of $P$ will be strictly below the line $\ell_{a_i}$ and thus only the four points $q_i$, $0\leq i\leq 3$, will be reported by \HR. As such, by checking the output of \HR, we are able to tell whether $a_i\in B$. If $a_i\not\in B$, then we know $A\not\subseteq B$. Otherwise, we continue to check the next number of $A$, but before doing so, we perform the following operations to delete the points~$q_i$'s: \textsc{DeleteLeft}$(q_3)$, \textsc{DeleteLeft}$(q_2)$, \textsc{DeleteRight}$(q_1)$, \textsc{DeleteRight}$(q_0)$.

    The above shows that whether $A\subseteq B$ can be determined using $O(n)$ of the four update operations and the \HR\ operation. This proves the first theorem statement.

    To prove the second theorem statement, we can follow the same reduction but replace the operation \HR\ with a corresponding query operation.

    Consider the extreme point queries. After we insert $q_i$, $0\leq i\leq 3$, as above, we perform an extreme point query with a direction $\rho$ being the normal of $\overline{q_0q_3}$. Then, $a_i\in B$ if and only if an extreme point of $\ch(P)$ returned by the query has $x$-coordinate equal to $a_i$. As such, whether $A\subseteq B$ can be determined using $O(n)$ of the four update operations as well as the extreme point queries.

    The tangent queries and the bridge queries can be argued similarly (e.g., by asking a tangent query at $q_0$ or a bridge query for the vertical line $x=a_i$). 
\end{proof}

For the simple path problem, we have the following lower bound, which is even applicable to the one-sided case in which insertions and deletions only happen at one end of the path (say, \IF\ and \DF). It implies that our result in Theorem~\ref{thm:dpch} is the best possible.

\begin{theorem}
    \label{thm:lower-bound-path}
    The following lower bounds hold for the one-sided simple path dynamic convex hull problem.
    \begin{enumerate}
        \item
              At least one of the operations  \IF, \DF, and \textsc{ReportHull} is required to take $\Omega(\log n)$ amortized time under the algebraic decision tree model, regardless of the value $h$, where $n$ is the number of the vertices of the simple path $\pi$ and $h=|\ch(\pi)|$.
        \item
              Similarly, at least one of the operations  \IF, \DF, and \SQ\ is required to take $\Omega(\log n)$ amortized time, where \SQ\ could be (but not limited to) any of the following queries: extreme point queries, tangent queries, bridge queries.
    \end{enumerate}
\end{theorem}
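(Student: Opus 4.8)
The plan is to mirror the set-inclusion reduction used in the proof of Theorem~\ref{thm:lower-bound-monotone}, but to arrange the hard instance so that every update takes place at a single end of the path --- using only \IF\ and \DF\ --- while keeping the path simple throughout. As there, let $A=\{a_1,\ldots,a_n\}$ and $B=\{b_1<b_2<\cdots<b_n\}$ with every $b_i\in(0,1)$; the goal is to decide whether $A\subseteq B$ using $O(n)$ operations, which together with the $\Omega(n\log n)$ lower bound for set inclusion in the algebraic decision tree model~\cite{Ben-Or83} forces at least one of the operations to require $\Omega(\log n)$ amortized time. The argument is the same for both statements, with \HR\ replaced by the appropriate \SQ\ for the second one.

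First I would build the ``database'' as a convex --- hence simple --- path: with the strictly concave $f(x)=x(1-x)$, set $p_i=(b_i,f(b_i))$ and perform \IF\ on $p_n,p_{n-1},\ldots,p_1$ in this order, so that the current path reads $p_1,\ldots,p_n$, its unique update end is $p_1$, and every $p_i$ is a vertex of $\ch(\pi)$; this uses $n$ insertions. To test whether a given $a_i$ lies in $B$ (values $a_i\notin(0,1)$ are discarded in $O(1)$ time), I would attach at the update end the four ``frame'' points of Theorem~\ref{thm:lower-bound-monotone}, namely $q_0=(1,(1-a_i)^2)$, $q_1=(1,0)$, $q_2=(0,0)$, and $q_3=(0,a_i^2)$, by performing \IF\ on $q_3,q_2,q_1,q_0$ in this order, producing the path $q_0,q_1,q_2,q_3,p_1,\ldots,p_n$. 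The one genuinely new step compared with the monotone proof is checking that this augmented path is still simple: all frame points lie on the vertical lines $x=0$ or $x=1$ with $y\ge 0$, whereas the $B$-chain occupies the strip $0<b_1\le x\le b_n<1$ with $y>0$, so the joining edge $\overline{q_3p_1}$ and the frame edges $\overline{q_2q_3}$, $\overline{q_1q_2}$, $\overline{q_0q_1}$ pairwise meet only at shared endpoints and meet the $B$-chain only at $p_1$. Once the frame is in place the geometry is exactly that of Theorem~\ref{thm:lower-bound-monotone}: $\overline{q_0q_3}$ is the tangent line of $f$ at $x=a_i$, every $B$-point lies on or below it --- and strictly below unless $b_j=a_i$ for some $j$ --- so $\ch(\pi)$ is the quadrilateral $q_2q_1q_0q_3$ when $a_i\notin B$ and the pentagon $q_2q_1q_0p_jq_3$ when $a_i\in B$, after the standard symbolic perturbation nudging $p_j$ just outside the tangent line. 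Thus \HR\ reports $4$ or $5$ vertices and reveals membership; I would then undo the probe by four \DF\ operations (popping $q_0,q_1,q_2,q_3$), restoring the $B$-chain before moving to the next element of $A$. For the second statement, \HR\ is replaced by the corresponding \SQ: an extreme-point query in the direction normal to $\overline{q_0q_3}$ (whose answer has $x$-coordinate $a_i$ iff $a_i\in B$), a tangent query from $q_0$, or a bridge query for a generic near-vertical line through $x=a_i$; the case analysis is identical to that in Theorem~\ref{thm:lower-bound-monotone}.

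Accounting for the whole reduction, building the chain takes $n$ insertions and each of the $n$ probes takes $4$ insertions, one \HR\ (or \SQ), and $4$ deletions, for a total of $O(n)$ data-structure operations wrapped in $O(n)$ additional algebraic work. If \IF, \DF, and \HR\ (resp.\ \SQ) all ran in $o(\log n)$ amortized time, the entire computation would be an algebraic decision tree of depth $o(n\log n)$ deciding set inclusion, contradicting~\cite{Ben-Or83}. Moreover, every \HR\ and \SQ\ is executed while $|\ch(\pi)|=O(1)$, so the $\Omega(\log n)$ bound is genuinely independent of $h$, as claimed. The main obstacle I anticipate is precisely the simplicity verification of the augmented path together with confirming that the frame ``hides'' the $B$-chain exactly as in the monotone construction; everything else carries over directly from the proof of Theorem~\ref{thm:lower-bound-monotone}.
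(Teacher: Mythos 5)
Your proposal is correct and follows essentially the same route as the paper's proof: the same set-inclusion reduction as in Theorem~\ref{thm:lower-bound-monotone}, with the $B$-chain built by one-sided insertions as a simple (concave) path, the four frame points $q_0,\ldots,q_3$ attached at the single update end, membership read off from \HR\ (or the corresponding \SQ), and the probe undone by four \DF\ operations. The only differences are cosmetic --- you mirror the construction (inserting $p_n,\ldots,p_1$ and attaching the frame as $q_3,q_2,q_1,q_0$ at the left end, whereas the paper inserts $p_1,\ldots,p_n$ and appends $q_0,q_1,q_2,q_3$ at the other end) and you spell out the simplicity check that the paper only asserts.
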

\begin{proof}
    We follow a similar proof to Theorem~\ref{thm:lower-bound-monotone} by a reduction from the set inclusion problem. Define $A$, $B$, and $p_i$, $1\leq i\leq n$, in the same way as in the proof of Theorem~\ref{thm:lower-bound-monotone}.

    Starting from $\pi=\emptyset$, we perform $n$ \IF\ operations for $p_1,p_2,\ldots,p_n$ in this order. Clearly, $\pi$ is a simple path. Then, for each $a_i\in A$, we can determine whether $a_i\in B$ using $O(1)$ operations as follows. Define $q_i$, $0\leq i\leq 3$, in the same way as before. We perform the following operations in order: \IF$(q_0)$, \IF$(q_1)$, \IF$(q_2)$, \IF$(q_3)$, and \HR. It can be verified that $\pi$ is still a simple path after~$q_i$'s, $0\leq i\leq 3$, are inserted. Following the same analysis as before, whether $a_i\in B$ can be determined based on the output of \HR. Afterward, we perform the \DF\ operations in the inverse order of the above insertions. As such, whether $A\subseteq B$ can be determined using $O(n)$ operations. This proves the first theorem statement.

    The second theorem statement can be proved similarly, following the same argument as in Theorem~\ref{thm:lower-bound-monotone}.
\end{proof}

\bibliographystyle{plainurl}
\bibliography{references}

\appendix



\end{document}